\theoremstyle{definition}
\newtheorem{definition}{Definition}
\newtheorem{proposition}[definition]{Proposition}
\newtheorem{theorem}[definition]{Theorem}
\newtheorem{lemma}[definition]{Lemma}
\numberwithin{equation}{section}
\numberwithin{definition}{section}
\numberwithin{figure}{section}
\newcommand{\GL}{ {\widetilde{GL}^+}}
\newcommand{\id}{\text{id}}
\newcommand{\one}{\mathbf{1}}
\newcommand{\Rep}{\text{Rep}\,}
\newcommand{\Cb}{\mathbb{C}}
\begin{document}

\title[Spin from defects in 2d QFT]{Spin from defects in two-dimensional\\quantum field theory}

\begin{abstract}
We build two-dimensional quantum field theories on spin surfaces starting from theories on oriented surfaces with networks of topological defect lines and junctions. The construction uses a combinatorial description of the spin structure in terms of a triangulation equipped with extra data. The amplitude for the spin surfaces is defined to be the amplitude for the underlying oriented surface together with a defect network dual to the triangulation. Independence of the triangulation and of the other choices follows if the line defect and junctions are obtained from a $\Delta$-separable Frobenius algebra with involutive Nakayama automorphism in the monoidal category of topological defects. For rational conformal field theory we can give a more explicit description of the defect category, and we work out two examples related to free fermions in detail: the Ising model and the $so(n)$ WZW model at level $1$.
\end{abstract}

\date{}

\author{Sebastian Novak}
\address{Sebastian Novak\\Fachbereich Mathematik\\Universit\"at Hamburg\\Bundesstra\ss e 55\\20146 Hamburg\\Germany}
\email{sebastian.novak@mailbox.org}

\author{Ingo Runkel}
\address{Ingo Runkel\\Fachbereich Mathematik\\Universit\"at Hamburg\\Bundesstra\ss e 55\\20146 Hamburg\\Germany}
\email{ingo.runkel@uni-hamburg.de}

\maketitle

\tableofcontents

\newpage

\section{Introduction}

Consider a two-dimensional quantum field theory $Q$ which is defined on oriented surfaces with metric, and which allows for the presence of line defects. We will be interested in topological line defects, that is, line defects that can be deformed on the surface without changing the amplitude which $Q$ assigns to the surface. Line defects can meet in junction points, and also here we will be interested only in topological junctions. Altogether, the QFT $Q$ assigns amplitudes to surfaces with defect networks, and these amplitudes are invariant under deformations which move the defect lines and junctions without generating new intersections. We will briefly review this framework in Section \ref{sec:def-QFT}.

In this paper we show how given a topological line defect of the QFT $Q$, together with 1- and 3-valent topological junctions for this defect -- all subject to conditions we list below -- one can construct a new QFT $Q_\text{spin}$ which is defined on surfaces with metric and with spin structure. 

\medskip

Quite generally, to construct new $n$-dimensional QFTs from a given $n$-dimensional QFT, one can try to implement the following idea:
\begin{quote}
Carry out a state sum construction of an $n$-dimensional topological field theory inside an $n$-dimensional QFT with defects.
\end{quote}
That is, try to model the cell decomposition of the $n$-dimensional manifolds used in the state sum construction with defects of dimensions $0,1,\dots,n{-}1$ in the QFT with defects, and translate the invariance conditions of the state sum construction into requirements on these defects.
This idea generalises the state sum construction of topological field theories in the sense that the latter can be thought of as being internal to the trivial QFT.

So far the above idea has been applied only in two dimensions, and we will describe this case in more detail now.

\medskip

The topological line defects and topological junctions of an oriented two-di\-men\-sional QFT $Q$ form a $\mathbb{C}$-linear monoidal category $\mathcal{D}_Q$ \cite{davydov2011field}. The objects in $\mathcal{D}_Q$ are the possible line defects, the tensor product amounts to fusion of line defects, and the morphisms are given by junctions. Since junctions can be thought of as fields inserted at the point were the defect lines meet, they form a $\mathbb{C}$-vector space. In addition, $\mathcal{D}_Q$ has two-sided duals (given by orientation reversal of the defect line) and is pivotal. 

The category $\mathcal{D}_Q$ contains much interesting information about the QFT $Q$. For example, given a $\Delta$-separable symmetric Frobenius algebra $A$ in $\mathcal{D}_Q$ (we will explain these notions in the table below), 
one can construct a new QFT $Q_\text{orb}$, 
the {\em orbifold of $Q$ by $A$}, which is again defined on oriented
surfaces with metric \cite{frohlich2009defect,carqueville2012orbifold}. 
This amounts to applying the above idea to the state sum construction of oriented two-dimensional topological field theories \cite{bachas1993topological,fukuma1994lattice}.
The name ``orbifold'' is justified since the construction of $Q_\text{orb}$
reduces to the standard orbifold construction if $A$ comes from a group symmetry of $Q$. But there are $A$'s which do not come from group symmetries (e.g.\ the exceptional cases in \cite{tft1,Carqueville:2013mpa}), so in this sense
$Q_\text{orb}$ could be called a generalised orbifold.

Another example is given by the main result of this paper: a $\Delta$-separable Frobenius algebra $A \in \mathcal{D}_Q$ whose Nakayama automorphism (see table below) squares to the identity allows one to define a QFT $Q_\text{spin}$ on spin surfaces starting from the QFT $Q$, which was defined on oriented surfaces with defects. 
This is again an instance of the above idea, applied now to the state sum construction of 2d spin TFTs given in \cite{Novak:2014oca}. Let us describe the resulting construction in more detail.

To evaluate $Q_\text{spin}$ on a spin surface $\Sigma$, one first encodes the spin structure in terms of 
\begin{itemize}
\item[-]
a triangulation of the underlying oriented surface $\underline\Sigma$,
\item[-]
a choice of an orientation for each edge of the triangulation and of a preferred edge for each triangle (a {\em marking}),
\item[-]
a sign $\{ \pm1\}$ for each edge (the {\em edge signs}).
\end{itemize}
This combinatorial model for spin structures was introduced in \cite{Novak:2014oca} and will be reviewed in Section \ref{sec:spin-comb}. 
Other closely related combinatorial models can be found in \cite{kuperberg1998exploration,cimasoni2007dimers,budney2013combinatorial}. An extension of the work in \cite{Novak:2014oca} to $r$-spin surfaces can be found in \cite{Novak-PhD}.

Next, one places a network of line defects $A$ and junctions labelled by the structure maps of the Frobenius algebra on the graph dual to the triangulation. The precise type of the junctions depends on the marking and the edge signs. The amplitude $Q_\text{spin}(\Sigma)$ is defined as the amplitude that $Q$ assigns to $\underline\Sigma$ with the above defect network. This is described in detail in Section \ref{sec:amp-spin-def}.

Frobenius algebras as above also appear in another state sum construction of two-dimensional spin topological field theory in \cite{barrett2013spin} and in the description of  ``generalised twisted sectors'' in orbifolds in \cite{Brunner:2013ota} (there without restriction on the order of the Nakayama automorphism).

\medskip

The following table gives the definitions of the various algebraic notions above, as well as their meaning in the context of topological defects. More details can be found in Section \ref{sec:amp-spin-def}. The identities of topological defect networks in the table are to be understood as follows. Given two surfaces with metric, $\Sigma$ and $\Sigma'$, equipped with defect networks such that the defect networks only differ in a small patch as shown in the table, the amplitudes of $\Sigma$ and $\Sigma'$ agree, i.e.\ $Q(\Sigma) = Q(\Sigma')$.

{\small
\renewcommand{\arraystretch}{1.5}
\begin{longtable}{p{5.5em}|p{12em}|p{17em}}
concept & algebraic description & description in QFT with defects\\
\hline
$A$ & object in the pivotal monoidal category $\mathcal{D}_Q$ & topological defect of $Q$ \\
algebra 
&
morphisms $\eta : \mathbf{1} \to A$ and $\mu : A \otimes A \to A$ such that $\eta$ is a unit for $\mu$ and $\mu$ is associative.
&
a 1-valent and a 3-valent topological defect junction,
\begin{center}
$\raisebox{-0.5\height}{\includegraphics[scale=0.4]{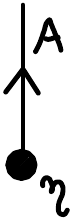}}$
\hspace{1em},\hspace{2em}
$\raisebox{-0.5\height}{\includegraphics[scale=0.4]{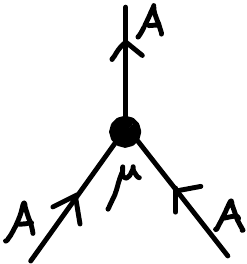}}$
\hspace{1em},
\end{center}
such that we have the identities
\begin{center}
$\raisebox{-0.5\height}{\includegraphics[scale=0.4]{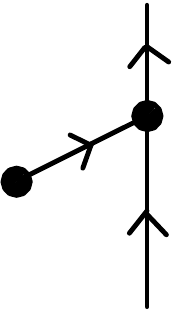}}
~=~
\raisebox{-0.5\height}{\includegraphics[scale=0.4]{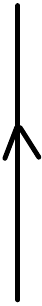}}
~=~
\raisebox{-0.5\height}{\includegraphics[scale=0.4]{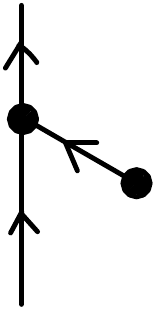}}$
\hspace{1em},

\medskip

$\raisebox{-0.5\height}{\includegraphics[scale=0.4]{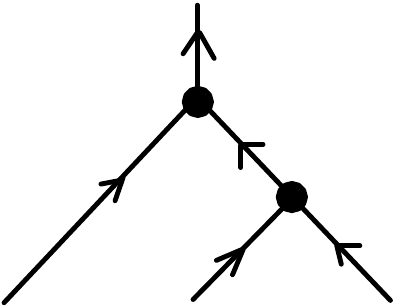}}
=\raisebox{-0.5\height}{\includegraphics[scale=0.4]{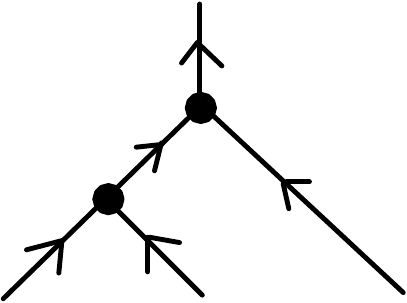}}$
\hspace{1em}
\end{center}
\\
Frobenius \newline algebra 
&
An algebra and a coalgebra (with counit $\varepsilon : A \to \mathbf{1}$ and coproduct $\Delta : A \to A \otimes A$) such that the Frobenius condition holds: 

$(\id \otimes \mu) \circ (\Delta \otimes \id)= \Delta \circ \mu$ 

$= (\mu \otimes \id) \circ (\id \otimes \Delta)$.
&
Additional 1- and 3-valent defect junctions,
\begin{center}
$\raisebox{-0.5\height}{\includegraphics[scale=0.4]{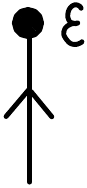}}$
\hspace{1em},\hspace{2em}
$\raisebox{-0.5\height}{\includegraphics[scale=0.4]{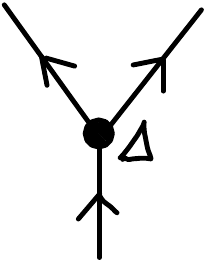}}$
\hspace{1em},
\end{center}
such that $\varepsilon$ and $\Delta$ satisfy the same identities as $\eta$ and $\mu$ but with reversed orientations on the defect lines, as well as
\begin{center}
$\raisebox{-0.5\height}{\includegraphics[scale=0.4]{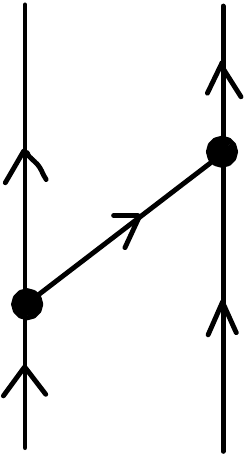}}
\quad=~
\raisebox{-0.5\height}{\includegraphics[scale=0.4]{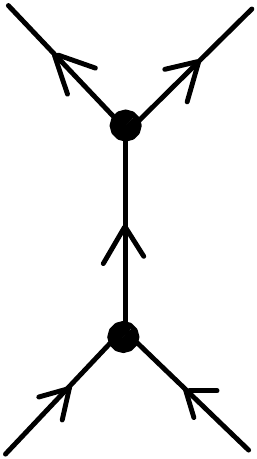}}
~=\quad
\raisebox{-0.5\height}{\includegraphics[scale=0.4]{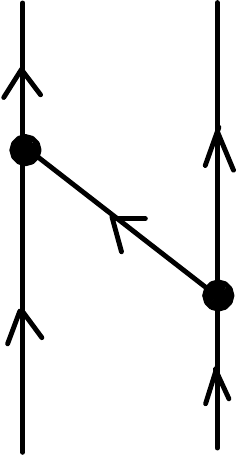}}$.
\end{center}
\\
$\Delta$-separable 
&
The product and coproduct of the Frobenius algebra satisfy $\mu \circ \Delta = \id$.
&
The defect junctions $\Delta$ and $\mu$ satisfy
\begin{center}
$\raisebox{-0.5\height}{\includegraphics[scale=0.4]{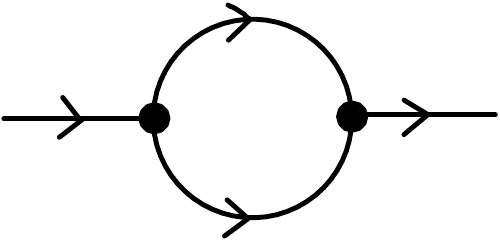}}
\hspace{1em}=\hspace{1em}
\raisebox{-0.5\height}{\includegraphics[scale=0.4]{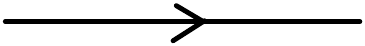}}$.
\end{center}
\\
symmetric 
&
The pairing $b := \varepsilon \circ \mu$ satisfies
$(\id \otimes b) \circ (\widetilde{\mathrm{coev}}_A \otimes \id)
=
(b \otimes \id) \circ (\id \otimes \mathrm{coev}_A)$.
&
The defect junctions $\mu$ and $\varepsilon$ satisfy
\begin{center}
$\raisebox{0pt}{\includegraphics[scale=0.4]{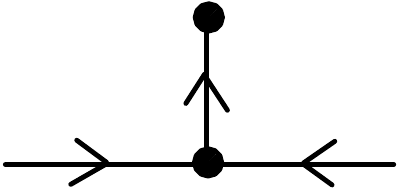}}
\hspace{1em}=\hspace{1em}
\raisebox{-.8\height}{\includegraphics[scale=0.4]{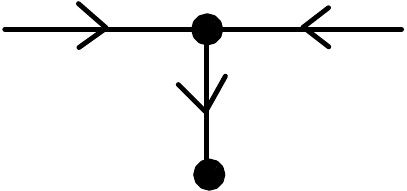}}$.
\end{center}
\\
Nakayama \newline autom.\ $N$
&
Let $b$ be as above and set $c = \Delta \circ \eta$. Then
$N : A \to A$ is given by
$N = (\id \otimes \widetilde{\mathrm{ev}}_A) \circ
(c \otimes \id) \circ (b \otimes \id) \circ
(\id \otimes \mathrm{coev}_A)$. $N$ is an automorphism
of the Frobenius algebra $A$.
&
$N$ abbreviates the following combination of defect lines and junctions
\begin{center}
$\raisebox{0pt}{\includegraphics[scale=0.4]{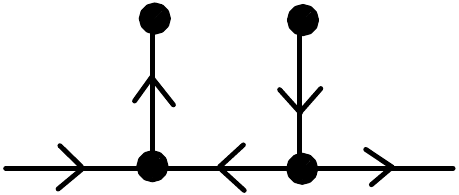}}$.
\end{center}
\\
$N$ squares to
\newline
the identity
&
$N^2 = \id$.
&
\begin{center}
$\raisebox{-.1\height}{\includegraphics[scale=0.4]{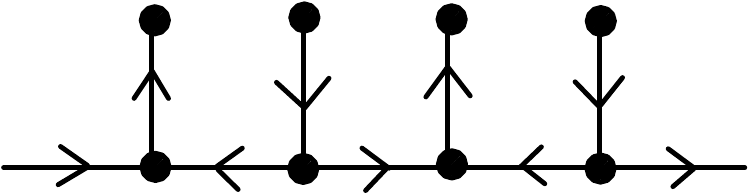}}
~=~
\raisebox{-.2\height}{\includegraphics[scale=0.4]{cpic31e2.pdf}}$.
\end{center}
\end{longtable}
}

We remark that a Frobenius algebra is symmetric if and only if its Nakayama automorphism is the identity.

\medskip

The description of QFTs $Q_\text{spin}$ on spin surfaces in terms of QFTs $Q$ on oriented surfaces with defect networks is useful if one has good control over the defect category $\mathcal{D}_Q$. 
One class of theories where this is the case are rational two-dimensional conformal field theories. In fact, the study of rational CFTs on spin surfaces is one of the main motivations for us to develop the present formalism. Let us have a closer look at this case, which we also discuss in detail in Section \ref{sec:RCFT}.

Fix a vertex operator algebra $\mathcal{V}$ with the property that its category of representations, $\mathrm{Rep}\mathcal{V}$, is a modular tensor category. We will call such vertex operator algebras {\em rational}. Rational CFTs are then those with left/right symmetry given by a rational $\mathcal{V}$. The simplest rational CFT with a given symmetry $\mathcal{V}$ is called the charge-conjugate theory, or the ``Cardy case''. In the Cardy case CFT for $\mathcal{V}$, the category  of topological defects, which are in addition transparent to the left/right symmetry $\mathcal{V}$, is given by $\mathrm{Rep}\mathcal{V}$ itself. 

In the free fermion examples which we investigate in Sections \ref{sec:1ff-ex} and \ref{sec:so(n)}, it turns out that instead of the CFT on its own, one should consider the product of the CFT with the trivial 2d\,TFT that takes values in super vector spaces (see Section \ref{sec:QxSV}). The relevant category of topological defects is then $\mathrm{Rep}\mathcal{V} \boxtimes \mathbf{SVect}$, and it is this category in which we need to find $\Delta$-separable Frobenius algebras whose Nakayama automorphism squares to the identity.

\medskip

There are many questions still unanswered in the present paper. For example, in the context of rational CFTs: Can every 2d spin CFT be obtained from a suitable oriented CFT in this way? Can one explicitly classify all $\Delta$-separable Frobenius algebras $A$ with $N^2=\id$ (or suitable Morita classes thereof) in $\mathrm{Rep}\mathcal{V} \boxtimes \mathbf{SVect}$ in interesting examples? For supersymmetric CFTs, what characterises those $A$ which give world sheet and/or spacetime supersymmetric theories? 

Finally, it should be straightforward to extend the constructions in this paper to spin surfaces with boundaries and defects, as well as to $r$-spin surfaces (following \cite{Novak-PhD}). We hope to return to these points in the future. 

\bigskip
\noindent
{\bf Acknowledgments:} The authors would like to thank Nils Carqueville for helpful discussions and comments on a draft version of this paper. IR thanks the Erwin-Schr\"odinger Institute in Vienna for hospitality during the programme ``Modern trends in topological quantum field theory'' (February and March 2014) where part of this work was completed.
SN was supported by the DFG funded Research Training Group 1670 ``Mathematics inspired by string theory and quantum field theory''. IR is supported in part by the DFG funded Collaborative Research Center 676 ``Particles, Strings, and the Early Universe''.

\section{A combinatorial model for spin surfaces}\label{sec:spin-comb}

In this section we briefly review the combinatorial model for spin surfaces introduced in \cite{Novak:2014oca}. 

\medskip

By a {\em surface} $\underline\Sigma$ we mean an oriented two-dimensional smooth compact real manifold with parametrised boundary.  
The boundary parametrisation consists of, firstly, an ordering $\{1,\dots,B\}$ of the connected components of $\partial\underline\Sigma$, and, secondly, for each $i \in \{1,\dots,B\}$ an orientation preserving smooth embedding $\varphi_i : \underline{U_i} \to \underline\Sigma$. 
Here, $\underline{U_i}$ is a half-open annulus $\{ z \in \Cb | 1 \le |z| < r\}$ for some $r>1$ and $\varphi_i$ maps the unit circle to the $i$'th boundary component.

Denote by $GL_2^+$ the $2{\times}2$-matrices with positive determinant. 
Let $\GL_2$ be the connected double cover of $GL_2^+$.
A {\em spin surface} $\Sigma$ is a surface $\underline\Sigma$ together with a $\GL_2$-principal bundle, which is a double cover of the oriented frame bundle such that the $\GL_2$ action is compatible with the  $GL_2^+$ action on the frame bundle. Furthermore, $\Sigma$ is equipped with a lift $\tilde\varphi_i$ of the boundary parametrisation maps to the spin bundle. 
In more detail, $\underline{U_i}$ allows for two non-isomorphic spin structures: the 
Neveu-Schwarz-type ($NS$-type) spin structure extends to the disc, the Ramond-type ($R$-type) spin structure does not. 
Write $U_i$ for $\underline{U_i}$, equipped with one of these two spin structures.
Then $\tilde\varphi_i : U_i \to \Sigma$ is a map of spin surfaces.
We say the $i$'th boundary component of $\Sigma$ is of $NS$-type ($R$-type) if the spin structure of $U_i$ is of $NS$-type ($R$-type).
More details can be found in \cite[Sect.\,2]{Novak:2014oca}, see in particular \cite[Def.\,2.14]{Novak:2014oca}.

\medskip

Let $\underline\Sigma$ be a surface. 
Fix the standard triangulation of the unit circle to be given by the three arcs between the points $1$, $e^{2 \pi i /3}$ and $e^{-2 \pi i/3}$. 
Via the boundary parametrisations $\varphi_i$, this gives a triangulation of the boundary of  $\underline\Sigma$. Choose an extension of this triangulation from the boundary to the interior. 
 A {\em marking} on this triangulation is an assignment of a preferred edge to each triangle and of an orientation to each edge. 
The orientation of the boundary edges is defined to be that induced by the counter-clockwise orientation of the unit circle.

By a choice of {\em edge signs} we mean a map $s$ from the set of edges of the triangulation to $\{\pm 1\}$. For a given choice of edge signs on a marked triangulation of $\underline\Sigma$ one obtains a spin structure on $\underline\Sigma$ minus the vertices of the triangulation \cite[Sect.\,3.7]{Novak:2014oca}. The spin structure extends to the vertices if and only if the following condition holds at each vertex $v$ 
\cite[Cor.\,3.14\,\&\,Lem.\,3.15]{Novak:2014oca}:
\begin{itemize}
\item {\em $v$ is an inner vertex:} Let $D$ be the number of triangles $\sigma$ containing $v$ such that the preferred edge of $\sigma$ is the first one, counting the edges of $\sigma$ counter-clockwise starting from the vertex $v$. Let $K$ be the number of edges containing $v$ and pointing away from $v$. The edge signs must satisfy
\begin{equation}\label{eq:admiss-inner}
	\prod_{e: v\in e} s(e) = (-1)^{D+K+1} \ ,
\end{equation}
where the product is over all edges containing the vertex $v$.
\item {\em $v$ is a boundary vertex:} Let $D$ and $K$ be as above. For $K$ include the boundary edge pointing away from $v$.
If the boundary component containing $v$ is of $NS$-type,  and if the vertex $v$ is the image of the point 1 under the corresponding parametrisation map  $\varphi_i : U_i \to \underline\Sigma$, set $D'=D+1$. Otherwise set $D'=D$. The edge signs must satisfy
\begin{equation}\label{eq:admiss-bnd}
	\prod_{e: v\in e} s(e) = (-1)^{D'+K+1} \ ,
\end{equation}
where again the product is over all edges containing the vertex $v$, including the two boundary edges adjacent to $v$.
\end{itemize}
If this condition holds at each vertex, we call the edge signs {\em admissible}. By definition, admissible edge signs turn the surface $\underline\Sigma$ into a spin surface $\Sigma$.
This combinatorial model for spin structures is the first main ingredient in this paper.

Given a spin surface, one can ask whether or not a given closed path in the frame bundle lifts to the spin bundle. 
Since we are working in two dimensions, we can turn a smooth closed path $\gamma : S^1 \to \underline\Sigma$ with nowhere vanishing derivative into an -- up to homotopy unique -- path $\hat\gamma$ in the frame bundle.
Namely, at each point of the path pick a second vector, which together with the velocity vector of the path is an oriented frame, and which depends continuously on the parametrisation of the path.
Thus, given $\gamma$ we can ask whether or not $\hat\gamma$ lifts to a closed path in $\Sigma$. We now explain how the lifting property can be read off from the edge signs.

\begin{figure}[tb]
\begin{center}
\raisebox{4em}{a)}
\hspace{.5em}
$\raisebox{-0.5\height}{\includegraphics[scale=0.5]{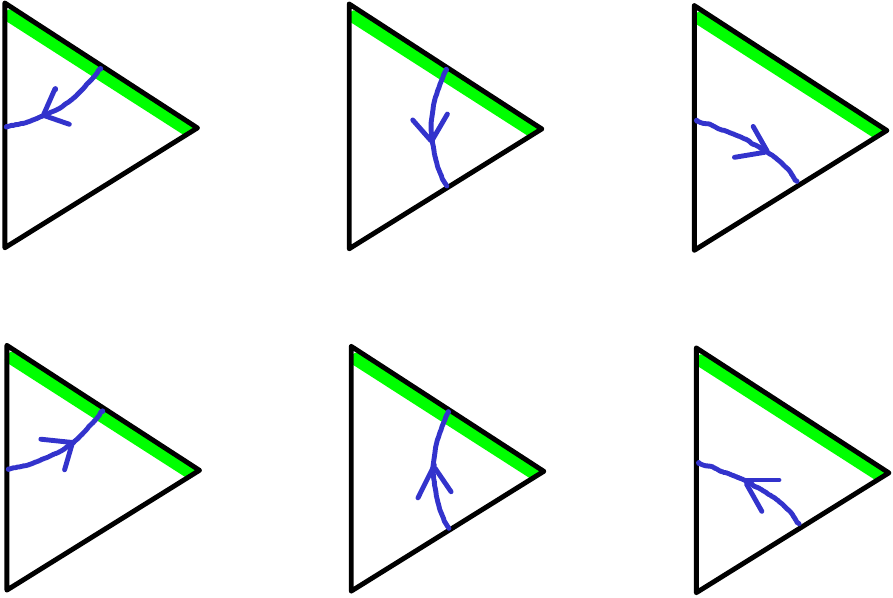}}$
\hspace{.3em}
\raisebox{-.2em}{
\begin{minipage}{4em}
$\Bigg\} +1$
\\[1.8em]
$\Bigg\} -1$
\end{minipage}}
\hspace{.5em}
\raisebox{4em}{b)}
\hspace{.5em}
$\raisebox{-0.5\height}{\includegraphics[scale=0.5]{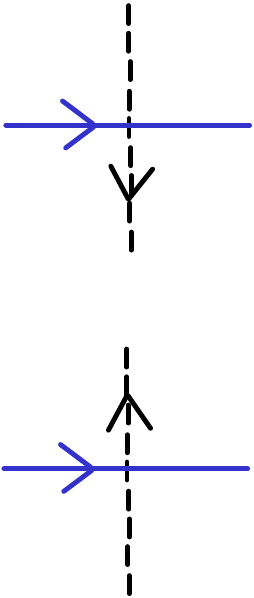}}$
\hspace{.1em}
\raisebox{-.2em}{
\begin{minipage}{4em}
$\Bigg\} +1$
\\[1.8em]
$\Bigg\} -1$
\end{minipage}}
\end{center}
\caption{a) The six configurations a path can transverse a triangle relative to the marked edge of the triangle, together with the sign used to determine the lifting property of the path. b) The two configurations the oriented path (horizontal, solid blue) can cross an oriented edge (vertical, dashed black), again with corresponding signs.}
\label{fig:path-lift-signs}
\end{figure}

Assume that in each triangle of the triangulation, the path $\gamma$ looks as in one of the six configurations shown in Figure \ref{fig:path-lift-signs}\,a. Now multiply together the signs given in Figure  \ref{fig:path-lift-signs}\,a for all triangles transversed by $\gamma$, and the signs given in Figure  \ref{fig:path-lift-signs}\,b for all edges crossed by $\gamma$. Let $S \in \{\pm1 \}$ be the result. Then the corresponding path $\hat\gamma$ in the frame bundle has a  closed lift if and only if
\begin{equation}\label{eq:path-lift-crit}
	\prod_{e} s(e) = S \ ,
\end{equation}
where the product is over all edges crossed by $\gamma$ 
\cite[Lem.\,3.13]{Novak:2014oca}. 

Suppose the path $\gamma$ is the boundary of a small disc containing a vertex $v$ of the triangulation. Then the spin structure extends to the vertex if and only if $\hat\gamma$ does not have a closed lift. One quickly checks that in this way \eqref{eq:path-lift-crit} reproduces the admissibility condition for edge signs at an inner vertex as given in \eqref{eq:admiss-inner}.

\section{Two-dimensional QFT with defects}\label{sec:defect-QFT}\label{sec:def-QFT}

The second ingredient in our construction are surfaces with defects and field theories on such surfaces. We collect some material from \cite{davydov2011field} to fix our conventions.

\subsection{Surfaces with defects, state spaces, amplitudes}\label{sec:def-surf+ampl}

A {\em surface with defects} $\Sigma^d$ is a surface $\underline\Sigma$ together with a compact oriented one-dimensional submanifold $\Delta$, such that $\partial\Delta \subset \partial\underline\Sigma$ and $\Delta$ meets $\partial\underline\Sigma$ transversally.
We also fix a set $D$ of defect conditions. Each connected component of $\Delta$ is labelled by an element of $D$.

So far the underlying surface $\underline\Sigma$ was an oriented smooth manifold. To discuss non-topological quantum field theories, we equip $\underline\Sigma$ with a metric or a conformal structure. For the purpose of this paper, by a {\em quantum field theory on surfaces with defects} we shall mean the following:
\begin{itemize}
\item[{\bf Q1}] For each surface with defects  $\Sigma^d$ and for all boundary components $i = 1,\dots,B$ of $\Sigma^d$, an assignment of a $\Cb$-super vector space $\mathcal{H}_i$ (the {\em state space}). 
\item[{\bf Q2}] 
For each surface with defects $\Sigma^d$ an even linear map $Q(\Sigma^d) : \mathcal{H}_1 \otimes \cdots \otimes \mathcal{H}_B \to \mathbb{C}$ (the {\em amplitude}).
\item[{\bf Q3}]
For a permutation $\sigma$ of $\{1,\dots,B\}$,
let $\Sigma^d_\sigma$ be identical to $\Sigma^d$ except that the ordering of the boundary components of $\Sigma^d$ is changed: The $i$'th boundary of $\Sigma^d_\sigma$ is the $\sigma(i)$'th boundary component of $\Sigma^d$. Then 
\begin{equation}\label{eq:Q-parity}
Q(\Sigma^d)(\psi_1 \otimes \cdots \otimes \psi_n)
= 
(-1)^{P} Q(\Sigma_\sigma^d)(\psi_{\sigma(1)} \otimes \cdots \otimes \psi_{\sigma(n)}) 
\ ,
\end{equation}
where $(-1)^P$ is the parity sign arising from applying the permutation $\sigma$ to the vectors $\psi_1,\dots,\psi_n$. 
(For example, a transposition of two adjacent arguments gives a sign if and only if both are odd.)
\item[{\bf Q4}] 
Compatibility with glueing. Since we will not use the glueing property explicitly in this paper, we will omit its description and refer to \cite{davydov2011field}.\footnote{
In \cite{davydov2011field} the usual functorial formulation of the glueing conditions is employed. When translating this into the present situation with only ``in-going'' boundary components one should add cylinders with two ``out-going'' boundaries and the corresponding copairing on state spaces to the setup.}
\end{itemize}

Consider a surface $C$ with defects which is diffeomorphic to $S^1 \times [0,1]$ such that each connected component of the defect submanifold in $C$ gets mapped to an interval $[0,1]$. That is, up to diffeomorphism $C$ is a cylinder with parallel defect lines connecting the two boundary circles.
We will say $Q$ is {\em non-degenerate} if for all such cylinders $C$ the pairing $Q(C) : \mathcal{H}_1 \otimes \mathcal{H}_2 \to \mathbb{C}$ is non-degenerate in the sense that for all $\psi_i \in \mathcal{H}_1$ there exists a $\psi_2 \in \mathcal{H}_2$ such that $Q(C)(\psi_1 \otimes \psi_2) \neq 0$, and vice versa.\footnote{
	We remark already here that the term ``non-degenerate'' is used in two different ways in this paper. Here it refers to the pairing $Q(C)$ having zero orthogonal subspaces. In Section \ref{sec:amp-spin-def}, when talking about pivotal monoidal categories, it will mean that there exists a copairing for a given pairing. Since $\mathcal{H}_1$ and $\mathcal{H}_2$ will typically be infinite-dimensional, even if $Q(C)$ is non-degenerate there need not exist a corresponding copairing. Hence the second notion of ``non-degenerate'' is stronger.}

We assume that all defect conditions in $B$ describe {\em topological defects}. This is a requirement on the state spaces $\mathcal{H}$ and amplitudes $Q$:
\begin{itemize}
\item $\mathcal{H}$: 
Consider the $i$'th boundary component of a surface with defects with parametrisation map $\varphi_i$. For each defect ending on that boundary component write $(d,\varepsilon)$, where $d \in D$ is the defect condition and $\varepsilon=+$ if the defect is oriented away from the boundary, and $\varepsilon=-$ otherwise.
Starting from $\varphi_i(-1)$ let $\underline d=((d_1,\pm),\dots,(d_n,\pm))$ be the sequence of defects encountered when following $\varphi(S^1)$ clockwise. Then $\mathcal{H}_i$ depends on the defects only through the ordered list $\underline d$, but not on where and how precisely the defect lines end on the boundary component.
\item $Q$: 
Let $\Sigma^d_1$ and $\Sigma^d_2$ be two surfaces with defects with the same underlying surface $\underline\Sigma$, such that the defect submanifold of $\Sigma_1^d$ (with its orientation and defect condition for each component) can be isotoped to that of $\Sigma^d_2$. The isotopy is allowed to move the endpoints of defect lines as long as no such endpoint crosses the points $\varphi_i(-1)$, $i=1,\dots,B$. Then $Q(\Sigma^d_1) = Q(\Sigma^d_2)$.
\end{itemize}
Each state space $\mathcal{H}_i$ contains a subspace $\mathcal{H}^\mathrm{top}_i$ of {\em topological junction fields} (which may be 0). When inserting a topological junction field $\psi$ in the $i$'th argument (corresponding to the $i$'th boundary component) the map $Q(\Sigma^d)$ does not change when changing the position or size of the $i$'th boundary component (by glueing on appropriate cylinders, see \cite{davydov2011field} for more details). 
We also demand that topological junction fields are parity-even.\footnote{The condition that topological junction fields are parity-even does not appear in \cite{davydov2011field} since the quantum field theories with defects discussed there take values in $\mathbf{Vect}$, not $\mathbf{SVect}$.}
This means that if we apply condition {\bf Q3} to a transposition of two adjacent arguments, one of which is a topological junction field, then no parity sign is produced. 

Below we will consider surfaces with defect networks. In such a network, the junction points are labelled by a topological junction field and it is understood that such a junction point represents a small circular hole, parametrised in a way compatible with the linear order of the defect lines attached to the junction point.
Let $\Sigma^d$ be such a surface, and let $p$ be a junction point labelled by a topological junction field $\psi$. 
When evaluating $Q$ on $\Sigma^d$ it is understood that the argument corresponding to the boundary component replacing the point $p$ is taken to be $\psi$. Since $\psi$ is a topological junction field, the precise shape of the small circular hole and the choice of parametrisation map are irrelevant.
Since a topological junction field is parity-even,  we do not need to remember the ordering of topological junction fields on surfaces with defect networks.

The topological junction fields can be used to turn the set $D$ of defect fields into a $\mathbb{C}$-linear pivotal monoidal category $\mathcal{D}$. The objects are sequences of defect conditions with orientations, $\underline d=((d_1,\varepsilon_1),\dots(d_n,\varepsilon_n))$ and morphisms $\underline d \to \underline d'$ are topological junction fields on a boundary circle to which the defects from $\underline d$ and $\underline d'$ are attached. The tensor product is concatenation of sequences; duality is orientation reversal $(d,\varepsilon) \mapsto (d,-\varepsilon)$ and reversion of the order of the sequence. Pivotality follows from rotation invariance of topological junction fields. For details we refer to \cite[Sect.\,2.4]{davydov2011field} and to \cite[Thm.\,3.3]{carqueville2012orbifold}.

We will assume the category $\mathcal{D}$ to be additive. This amounts to completing the set of (sequences of) defect conditions with respect to taking finite sums. The $\mathbb{C}$-linear additive pivotal monoidal category $\mathcal{D}$ of defect conditions and topological junction fields is the second central ingredient in this paper.

Below, we will often draw parts of defect surfaces to illustrate a configuration of defect lines and junction labels. We will use a notation for junction fields that is close to string diagram notation for morphisms in monoidal categories. Namely, we draw a topological junction field $\psi : \underline d \to \underline d'$ inserted on the surface as
\begin{equation}
\raisebox{-0.5\height}{\includegraphics[scale=0.4]{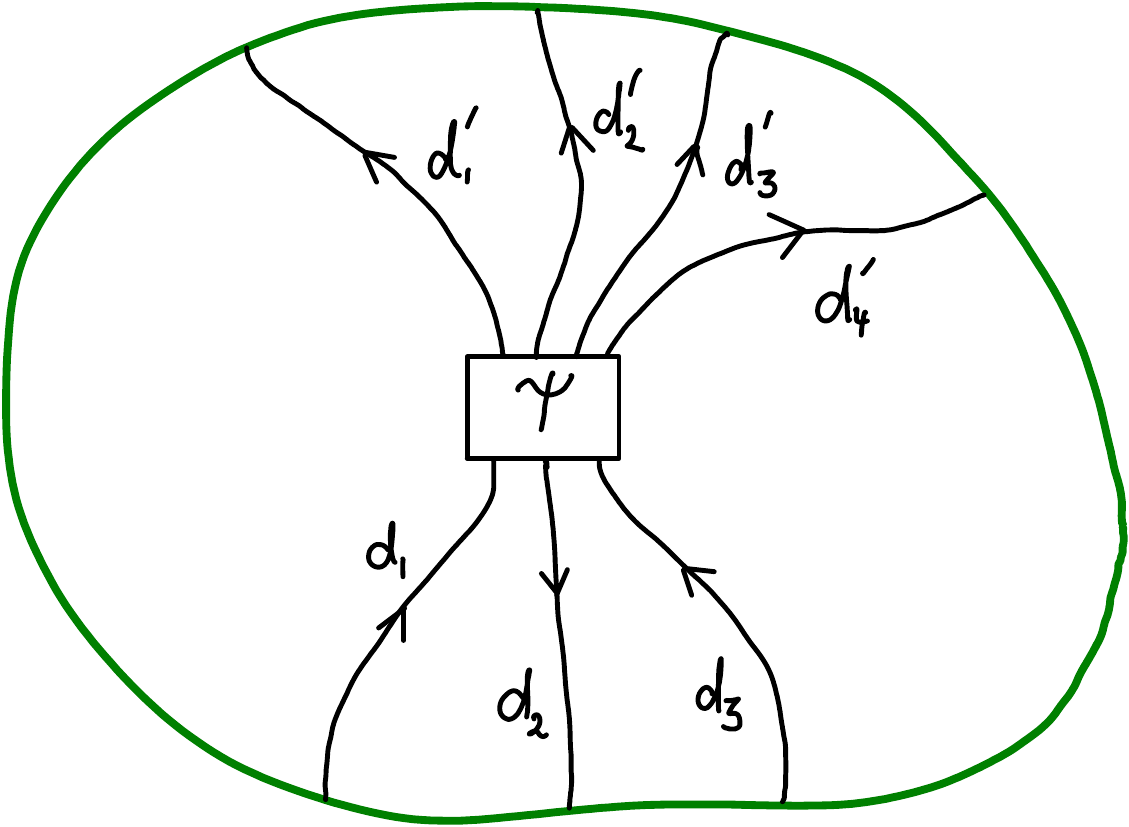}}
\qquad ,
\end{equation}
here in the example $\underline d = ((d_1,+),(d_2,-),(d_3,+))$, etc.
To turn this into a surface with defects, the box has to be replaced by a circular boundary, parametrised such that the image of $-1$ lies on the left vertical side of the box. Again, the precise way in which this replacement is done does not matter as $\psi$ is topological.

\subsection{The trivial defect TFT and products of defect theories}\label{sec:QxSV}

Let $\mathbf{SVect}$ be the category of $\Cb$-super vector spaces. This is a symmetric monoidal category with symmetric braiding given by exchange of factors with a parity sign:
For $U,V \in \mathbf{SVect}$ and $u \in U$, $v \in V$ homogeneous elements of degrees $|u|,|v| \in \mathbb{Z}_2$:
\begin{equation}
	\sigma_{U,V}(u \otimes v) = (-1)^{|u||v|} \, v \otimes u \ .
\end{equation}
We will denote by $Q^{SV}$ the trivial two-dimensional topological field theory with defects which takes values in $\mathbf{SVect}$. By this we mean the lattice TFT with defects constructed from the trivial algebra $\mathbb{C}$ via the procedure in \cite[Sect.\,3]{davydov2011field}.\footnote{The formalism in \cite[Sect.\,3]{davydov2011field} is developed for vector spaces as target category, but the same procedure works for any symmetric monoidal category.}
 Its defect category is $\mathcal{D}^{SV} = \mathbf{SVect}^{fd}$, the category of finite-dimensional $\Cb$-super vector spaces. Let $\underline V = ((V_1,\varepsilon_1),\cdots,(V_n,\varepsilon_n))$, with $V_i \in \mathbf{SVect}^{fd}$, be a sequence of defect conditions on a boundary component. The state space only depends on $\underline V$ and is given by $\mathcal{H}^{SV}(\underline V) = V_1^{\varepsilon_1} \otimes \cdots\otimes V_n^{\varepsilon_n}$, where $V^+ = V$, $V^- = V^*$. The topological junction fields therein are given by the even subspace, that is, $\mathcal{H}^{SV,top}(\underline V) = \mathrm{Hom}_0(\mathbb{C}^{1|0},V_1^{\varepsilon_1} \otimes \cdots\otimes V_n^{\varepsilon_n})$, where $\mathbb{C}^{1|0}$ is $\mathbb{C}$ concentrated in even degree and $\mathrm{Hom}_0$ denotes the ungraded $\Cb$-vector space of even linear maps.

Let us stress this point again: since the morphisms in $\mathcal{D}^{SV}$ are topological junctions, they are given by {\em even} linear maps (i.e.\ by morphisms in $\mathbf{SVect}^{fd}$). The state space assigned to a boundary circle, on the other hand, may contain even and odd elements.

\medskip

We will make use of $Q^{SV}$ to introduce a $\mathbb{Z}_2$-grading on the state spaces of ungraded defect theories $Q$ in Sections \ref{sec:spin-def-graded}--\ref{sec:so(n)}. This will be done by defining a product theory $Q \boxtimes Q^{SV}$. The necessity of such products will be illustrated in the free fermion example in Section \ref{sec:1ff-ex}.

Let $Q^1$, $Q^2$ be two QFTs on surfaces with defects. The product theory  $Q^1 \boxtimes Q^2 =: Q^{12}$ is defined as follows: 
\begin{itemize}
\item
The defect category is $\mathcal{D}^{12} = \mathcal{D}^1 \boxtimes \mathcal{D}^2$. Here ``$\boxtimes$'' is the product of $\mathbb{C}$-linear additive categories: the objects are direct sums of pairs $(\underline{d}_1,\underline{d}_2)$ of objects $\underline{d}_1 \in \mathcal{D}^1$ and $\underline{d}_2 \in \mathcal{D}^2$. The morphism space between two such pairs $(\underline{d}_1,\underline{d}_2) \to (\underline{d}_1',\underline{d}_2')$ is $\mathcal{D}^1(\underline{d}_1,\underline{d}_1') \otimes_\mathbb{C} \mathcal{D}^2(\underline{d}_2,\underline{d}_2')$. For direct sums of such pairs, the morphism spaces are the corresponding direct sums.
\item
The state space $\mathcal{H}^{12}_i$ of the product theory for the $i$'th boundary component of a defect surface $\Sigma^d$  is given by $\mathcal{H}^1_i \otimes \mathcal{H}^2_i$ (tensor product of $\Cb$-super vector spaces).
\item
The amplitudes $Q^{12}$ of the product theory are $Q^{12}(\Sigma^d) = (Q^1(\Sigma^d) \otimes Q^2(\Sigma^d)) \circ P$, where $P$ is the permutation in $\mathbf{SVect}$ (i.e.\ the permutation with parity signs) which reorders the argument from $\mathcal{H}^1_1 \otimes \mathcal{H}^2_1 \otimes \cdots \otimes  \mathcal{H}^1_B \otimes \mathcal{H}^2_B$ to $\mathcal{H}^1_1 \otimes \cdots \otimes \mathcal{H}^1_B \otimes \mathcal{H}^2_1 \otimes \cdots \otimes \mathcal{H}^2_B$.
\end{itemize}

We will only apply the product construction in the case that all state spaces of $Q^1 = Q$ are purely even and $Q^2 = Q^{SV}$ is the trivial theory with values in $\mathbf{SVect}$ defined above. In this situation there are no parity signs in the permutation map $P$ above.
 The topological theory $Q^{SV}$ has two fundamental defects, labelled by the one-dimensional even and odd super vector spaces $\mathbb{C}^{1|0}$ and $\mathbb{C}^{0|1}$. The state spaces of $Q \boxtimes Q^{SV}$ for a boundary circle with a defect line $X \boxtimes \mathbb{C}^{1|0}$ or $X \boxtimes \mathbb{C}^{0|1}$ starting at it are
\begin{align} \label{eq:X-vs-PiX}
	\mathcal{H}^{Q \boxtimes SV}((X \boxtimes \mathbb{C}^{1|0} ,+)) &= 
	\mathcal{H}^Q((X,+)) \otimes \mathcal{H}^{SV}((\Cb^{1|0},+)) 
	\cong \mathcal{H}^Q((X,+))  \ ,
\\
	\mathcal{H}^{Q \boxtimes SV}((X \boxtimes \mathbb{C}^{0|1} ,+)) &= 
	\mathcal{H}^Q((X,+)) \otimes \mathcal{H}^{SV}((\Cb^{0|1},+)) 
	\cong \Pi\mathcal{H}^Q((X,+)) \ ,
\nonumber
\end{align}
where for a super vector space $V$, $\Pi V$ denotes the parity shifted super vector space, i.e.\ $(\Pi V)_0 = V_1$ and $(\Pi V)_1 = V_0$. The last isomorphism in the two sets of identities above holds because by definition $\mathcal{H}^{SV}((\Cb^{1|0},+)) = \Cb^{1|0}$ and $\mathcal{H}^{SV}((\Cb^{0|1},+)) = \Cb^{0|1}$.

\section{Amplitudes on spin surfaces from amplitudes on defect surfaces}\label{sec:amp-spin-def}

Fix a QFT $Q$ on surfaces with defects, possibly equipped with a metric or a conformal structure. Let $\mathcal{D}$ be the corresponding monoidal category of topological defects and topological junction fields. Fix furthermore a defect condition $A \in \mathcal{D}$ and topological junction fields $t, c_{\pm1}$ which are morphisms
\begin{equation}
t : A \otimes A \otimes A \to \mathbf{1}
\quad , \qquad
c_{+1} , c_{-1} : \mathbf{1} \to A \otimes A 
\end{equation}
in $\mathcal{D}$.
The aim of this section is to explain how to obtain from the data $Q,A,t,c_{\pm1}$, subject to certain conditions, a QFT $Q_\mathrm{spin}$ which assigns amplitudes to spin surfaces.

\subsection{From spin structures to defect networks}\label{sec:spin-surf-to-def-surf}

\begin{figure}[tb]

\begin{center}
\raisebox{4em}{a)}
\hspace{1em}
$\raisebox{-0.3\height}{\includegraphics[scale=0.4]{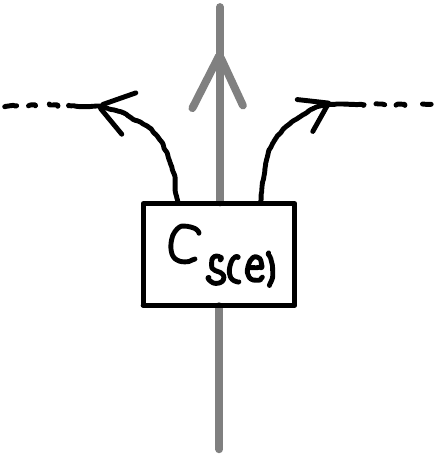}}$
\hspace{2em}
\raisebox{4em}{b)}
\hspace{1em}
$\raisebox{-0.5\height}{\includegraphics[scale=0.4]{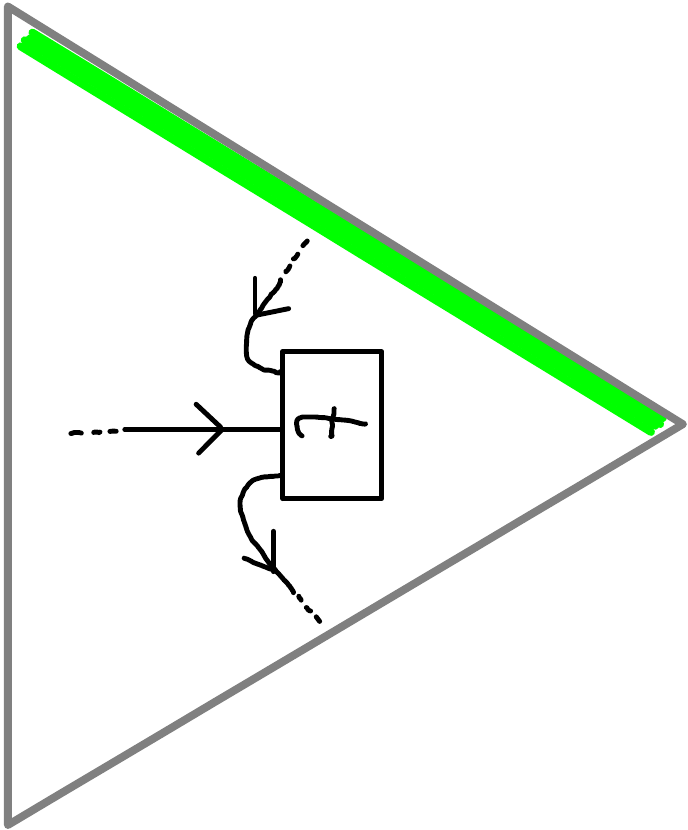}}$
\\[2em]
\raisebox{6em}{c)}
$\raisebox{-0.5\height}{\includegraphics[scale=0.4]{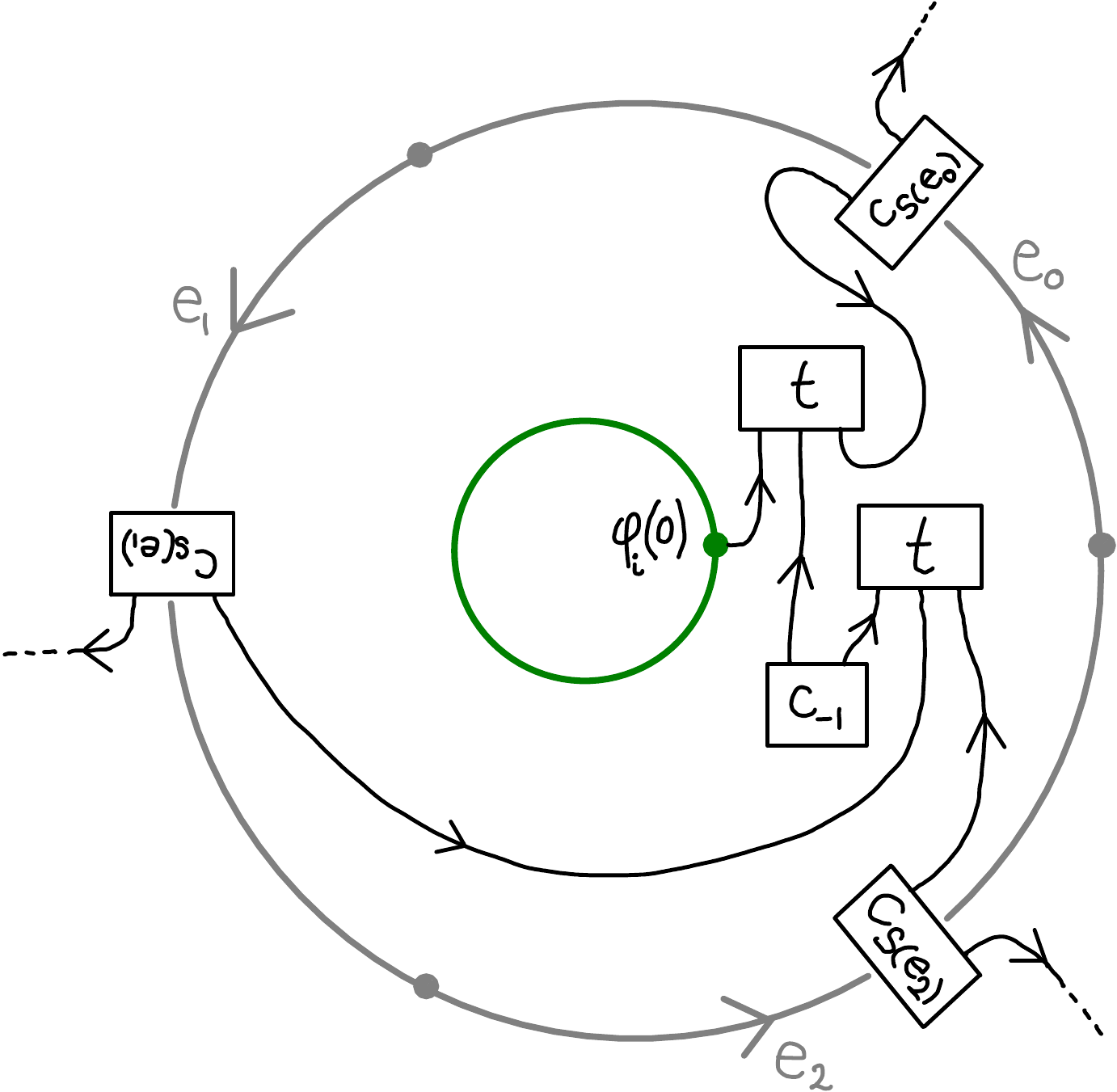}}$
\end{center}
\caption{a) Insertion of the topological junction field $c_{s(e)}$ on an oriented edge $e$ with edge sign $s(e)$.
b) On a triangle $t$ is inserted such that the left-most $A$-line entering $t$ comes from the marked edge of the triangle. 
c) The defect network inserted at each boundary component. Here the boundary triangulation with edges $e_0$, $e_1$, $e_2$ has been moved into the surface to make the defect network easier to draw.}
\label{fig:triang-to-def-rules}
\end{figure}

Let $\Sigma$ be a spin surface. Our first task is to produce from $\Sigma$ a surface with defects $\Sigma^d$. The construction is as follows:
\begin{itemize}
\item[(a)]
Pick a triangulation of $\underline\Sigma$. Equip the triangulation with a marking and with edge signs encoding the spin structure of $\Sigma$.
\item[(b)]
On each inner edge insert the topological junction field $c_{s(e)}$, aligned with the orientation of the edge as shown in Figure \ref{fig:triang-to-def-rules}\,a. In each triangle place the field $t$, aligned with the marked edge of that triangle as shown in Figure \ref{fig:triang-to-def-rules}\,b.
\item[(c)]
At each boundary component of $\underline\Sigma$, the slightly more complicated looking defect network shown in Figure \ref{fig:triang-to-def-rules}\,c is inserted.
\end{itemize}
We now need to make sure that the amplitude $Q(\Sigma^d)$ does not  depend on the choices made in (a). This is guaranteed by a number of identities on $t,c_{\pm1}$. We will write these as identities of morphisms in the defect category $\mathcal{D}$, using string diagram notation. By definition of $\mathcal{D}$, we may then as well think of these string diagrams as identities satisfied by amplitudes of surfaces with defects which only differ locally as indicated in the string diagram. 

The derivation of the identities below is essentially the same as in \cite{Novak:2014oca} with one important exception: In \cite{Novak:2014oca} we worked with a symmetric monoidal category, and so one is allowed to use the symmetric braiding but no duals; here we work in a pivotal non-braided category, so one is allowed to use duals but no braiding. We omit the calculations and just state the results, using the same ordering of the conditions as in \cite[Sect.\,4.2]{Novak:2014oca}.

\begin{enumerate}  \setlength{\leftskip}{-1.5em}
\item {\em Edge orientation change.} For $\varepsilon \in \{ \pm 1 \}$,
\begin{equation}\label{eq:inv-cond-1}
\raisebox{-0.5\height}{\includegraphics[scale=0.4]{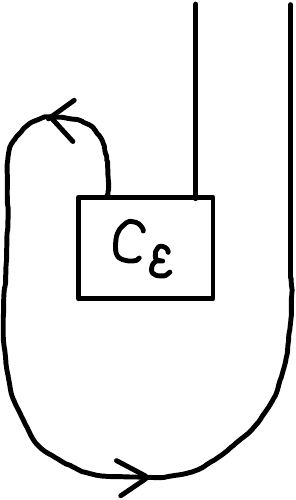}}
\quad=\quad
\raisebox{-0.2\height}{\includegraphics[scale=0.4]{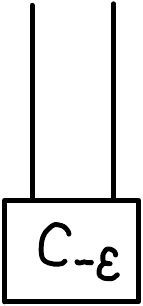}}
\quad=\quad
\raisebox{-0.5\height}{\includegraphics[scale=0.4]{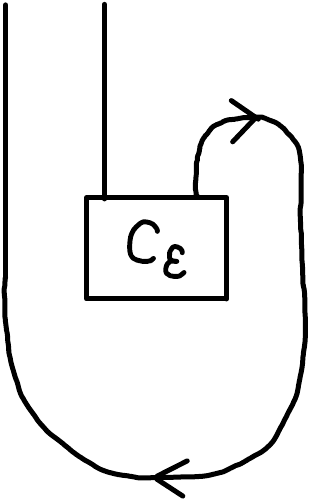}}
\end{equation}
(The first and second equality are equivalent by pivotality.)
\item {\em Leaf exchange automorphism on a single triangle.}
For $\alpha,\beta,\gamma \in \{ \pm 1 \}$,
\begin{equation}\label{eq:inv-cond-2}
\raisebox{-0.5\height}{\includegraphics[scale=0.4]{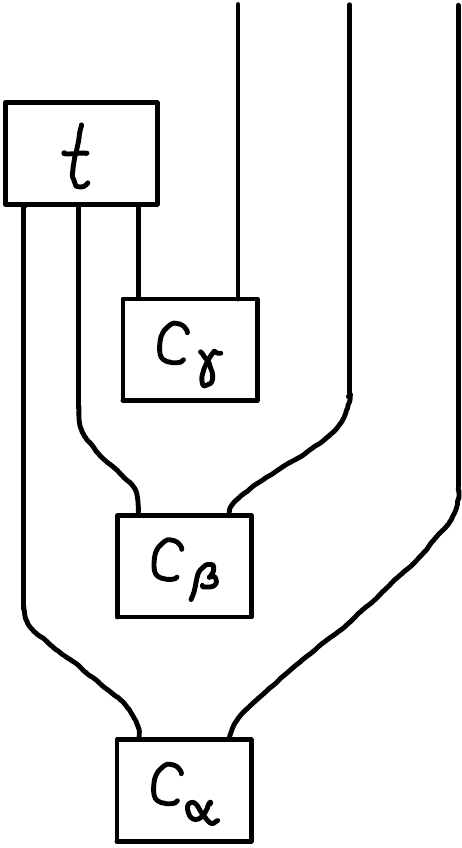}}
\quad=\quad
\raisebox{-0.5\height}{\includegraphics[scale=0.4]{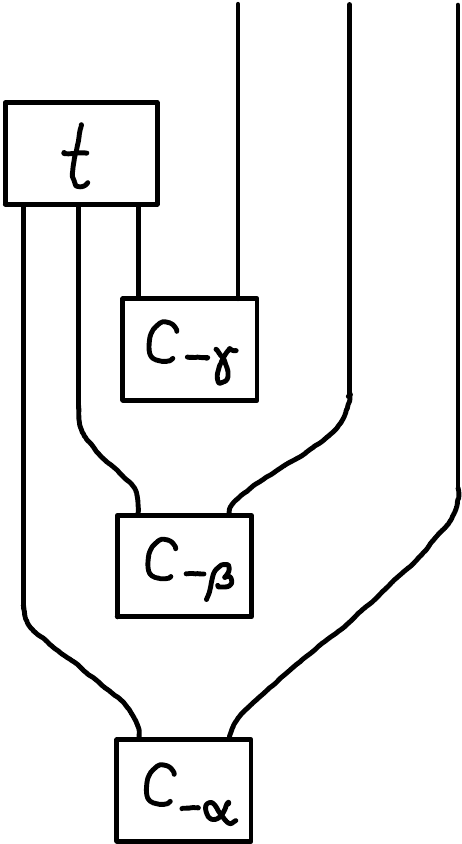}}
\end{equation}
\item {\em Cyclic permutation of boundary edges for a single triangle.}
For $\alpha,\beta,\gamma \in \{ \pm 1 \}$,
\begin{equation}\label{eq:inv-cond-3}
\raisebox{-0.5\height}{\includegraphics[scale=0.4]{cpic04a.pdf}}
\quad=\quad
\raisebox{-0.5\height}{\includegraphics[scale=0.4]{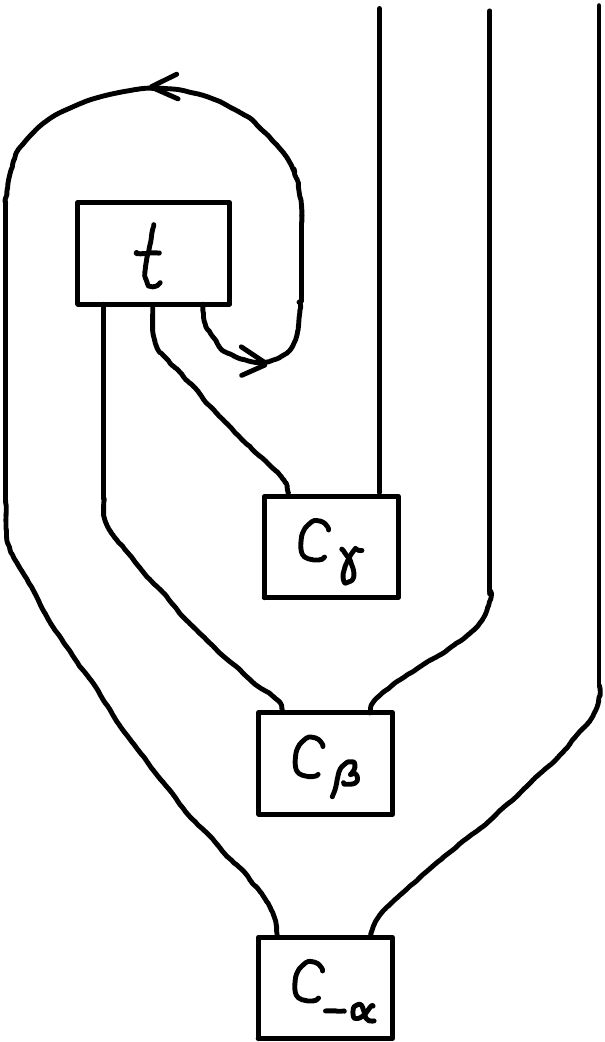}}
\end{equation}
\item {\em Pachner 2-2 move.} 
For $\alpha,\beta,\gamma,\delta,s \in \{ \pm 1 \}$,
\begin{equation}\label{eq:inv-cond-4}
\raisebox{-0.5\height}{\includegraphics[scale=0.4]{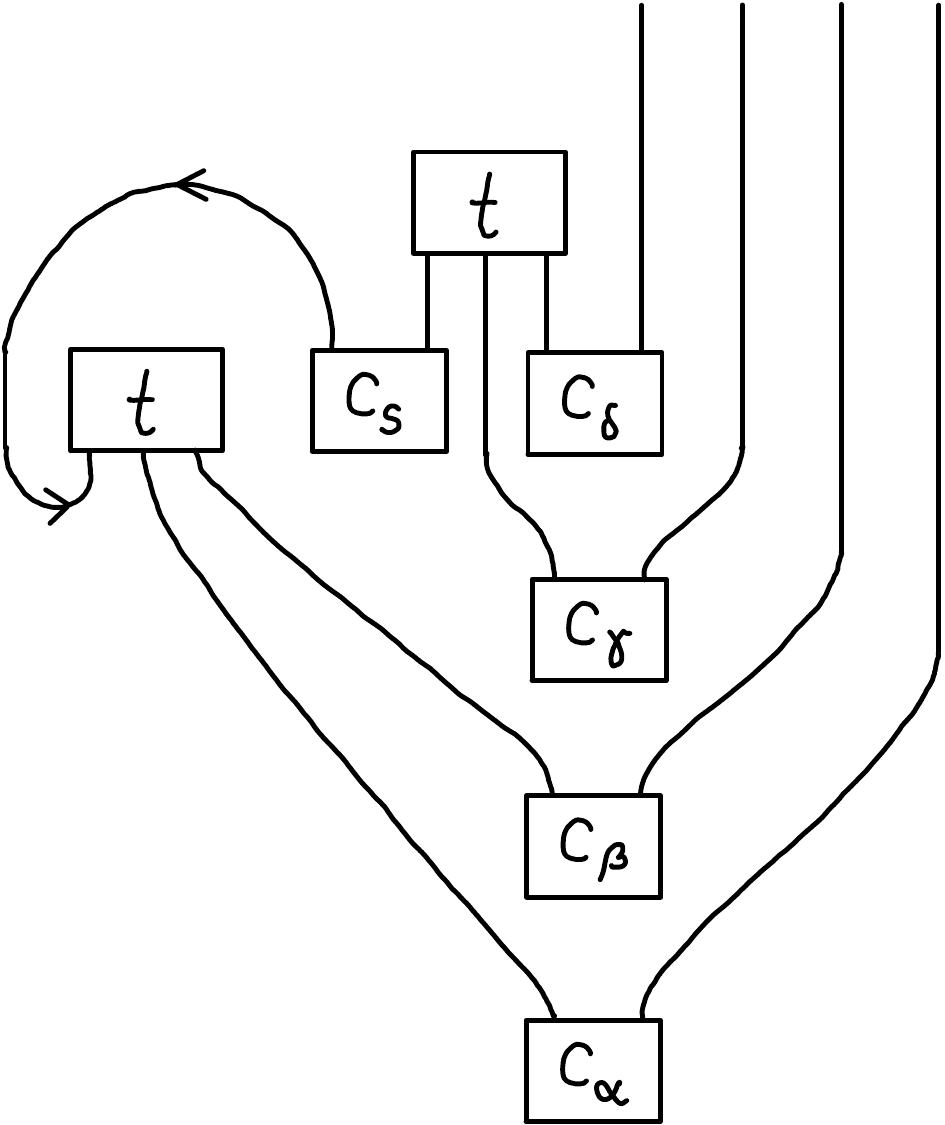}}
\quad=\quad
\raisebox{-0.5\height}{\includegraphics[scale=0.4]{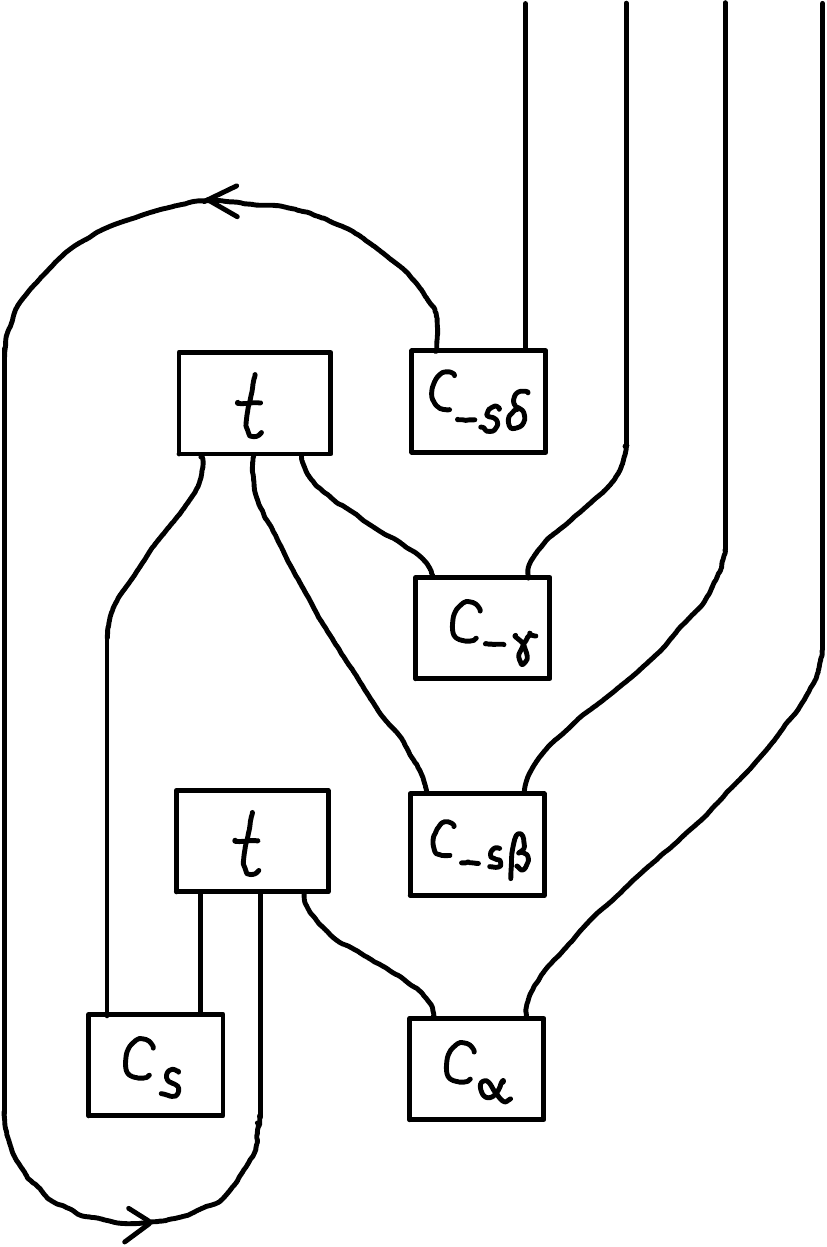}}
\end{equation}
\item {\em Pachner 3-1 move and its inverse.}
For $\alpha,\beta,\gamma,s_{12},s_{23},s_{31} \in \{ \pm 1 \}$ subject to $s_{12}s_{23}s_{31}=-1$,
\begin{equation}\label{eq:inv-cond-5}
\raisebox{-0.5\height}{\includegraphics[scale=0.4]{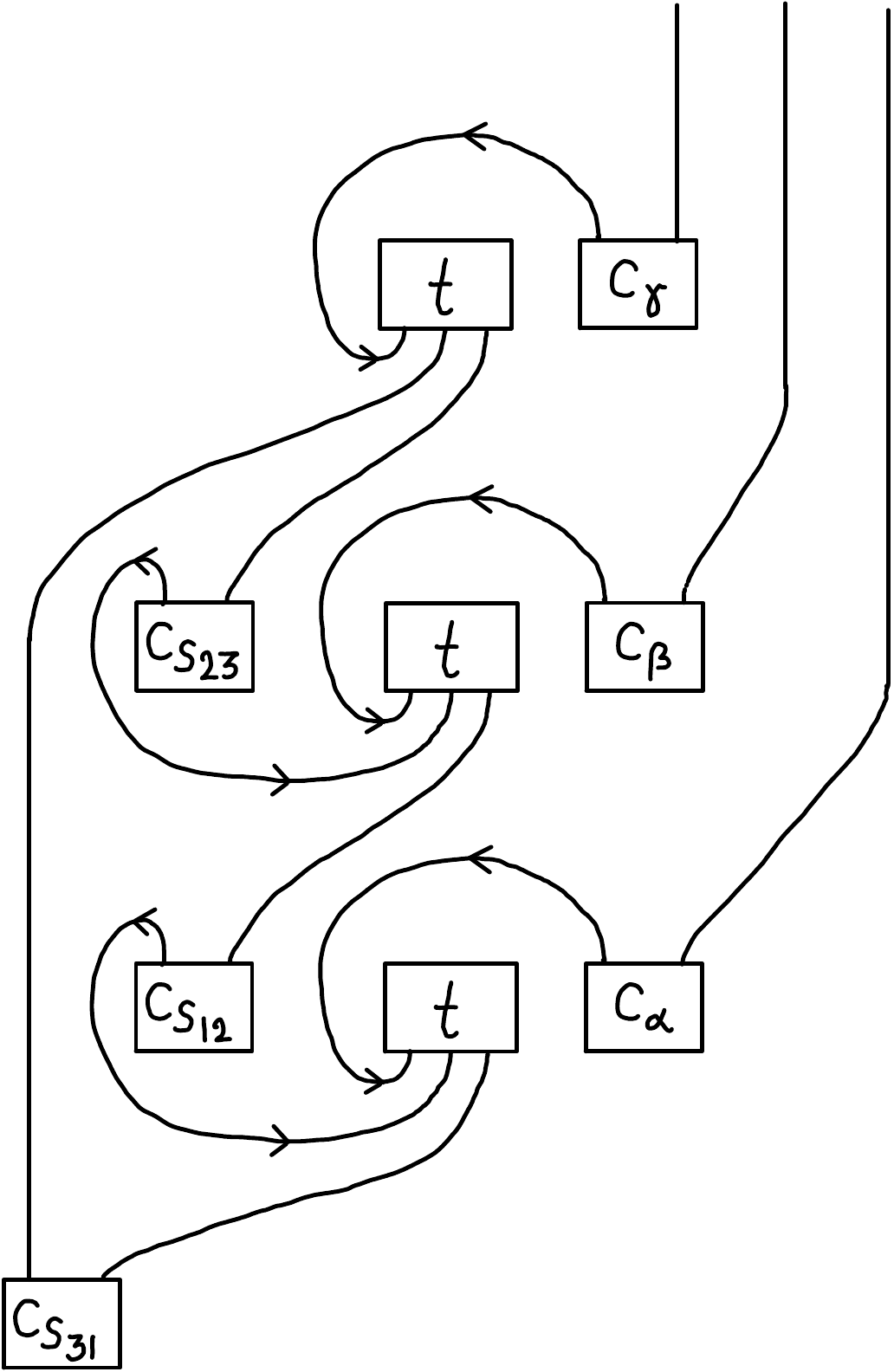}}
\quad
=
\quad
\raisebox{-0.5\height}{\includegraphics[scale=0.4]{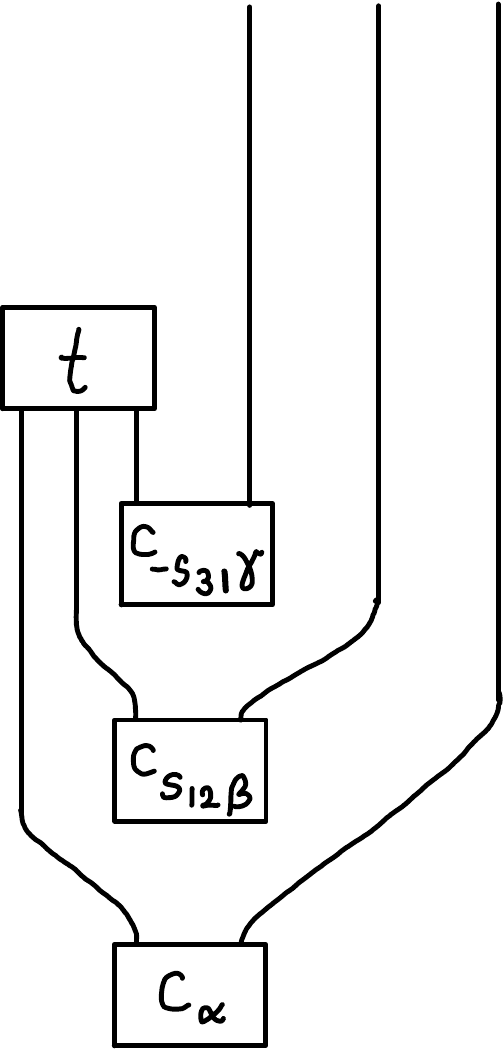}}
\end{equation}
\end{enumerate}
Using these conditions, we can now state:

\begin{proposition}\label{prop:Qspin-indep-choices}
Suppose $Q$ and $A,t,c_{\pm 1}$ satisfy conditions (1)--(5) above.
Let $\Sigma$ be a spin surface and $\Sigma^d_1$, $\Sigma^d_2$ surfaces with defects obtained by steps (a)--(c) above. Then $Q(\Sigma^d_1)=Q(\Sigma^d_2)$.
\end{proposition}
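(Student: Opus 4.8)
The plan is to reduce the statement to invariance of the amplitude under a finite list of \emph{local} moves, each of which is one of the identities (1)--(5), and then to argue that any two sets of choices made in step (a) encoding the \emph{same} spin structure $\Sigma$ are connected by a finite sequence of such moves. The proof therefore splits into a field-theoretic part (invariance under each elementary move) and a combinatorial part (connectivity of the choices), the former being immediate from the hypotheses.

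For the field-theoretic part, recall that by the construction of $\mathcal{D}$ an identity of morphisms in $\mathcal{D}$ is equivalent to the statement that two surfaces with defects differing only inside the corresponding small patch are assigned the same amplitude by $Q$. Hence each of the conditions \eqref{eq:inv-cond-1}--\eqref{eq:inv-cond-5} says precisely that the amplitude is unchanged under the associated local modification of the defect network: \eqref{eq:inv-cond-1} reverses the orientation of a single edge (and relabels $c_\varepsilon$ accordingly), \eqref{eq:inv-cond-2} and \eqref{eq:inv-cond-3} change the marking of a single triangle (the choice of preferred edge and the leaf ordering), \eqref{eq:inv-cond-4} is the Pachner 2-2 move that replaces the diagonal of a quadrilateral, and \eqref{eq:inv-cond-5} is the Pachner 3-1 move together with its inverse 1-3 move, which removes or inserts an inner vertex surrounded by three triangles. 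Thus, once the two combinatorial data are joined by a sequence of these moves, equality of amplitudes propagates step by step and yields $Q(\Sigma^d_1)=Q(\Sigma^d_2)$.

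For the combinatorial part I would invoke the corresponding connectivity result from \cite{Novak:2014oca}. Since the boundary triangulation and its orientation are fixed by the standard triangulation of the unit circle, the only freedom lies in the interior triangulation, the marking, and the edge signs. Any two interior triangulations with fixed boundary are related by Pachner 2-2 and 1-3/3-1 moves; any two markings on a given triangulation are related by the edge-orientation and preferred-edge/leaf moves; and, crucially, any two \emph{admissible} edge-sign assignments encoding the same spin structure are connected by moves that preserve admissibility. The key compatibility to check is that these moves respect admissibility: for the 3-1 move this is exactly the constraint $s_{12}s_{23}s_{31}=-1$ appearing in \eqref{eq:inv-cond-5}, which is the admissibility condition \eqref{eq:admiss-inner} at the inner vertex being removed, while for the remaining moves admissibility at every affected vertex is preserved by the same sign bookkeeping that underlies the path-lifting criterion \eqref{eq:path-lift-crit}.

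The main obstacle is precisely this combinatorial connectivity \emph{within} the admissible, fixed-spin-structure locus: one must verify that the elementary moves suffice to pass between any two admissible choices without ever leaving the set of data encoding $\Sigma$, and that the sign data transport consistently through each move, in particular through the vertex insertions and removals where the admissibility constraints themselves change. This is established in \cite{Novak:2014oca} in the symmetric monoidal setting, and the only genuinely new input here is that the five local identities now hold in a \emph{pivotal} rather than symmetric category, so that duals but no braiding are available when verifying \eqref{eq:inv-cond-1}--\eqref{eq:inv-cond-5}. Since the underlying combinatorics of triangulations, markings, and edge signs is unchanged, the connectivity argument carries over, and combining it with the move-by-move invariance above completes the proof.
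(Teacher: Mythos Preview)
Your proposal is correct and matches the paper's approach: the paper omits the proof entirely, stating only that it is analogous to \cite[Prop.\,4.1]{Novak:2014oca}, and your sketch is precisely the argument one would give---reduce to invariance under the local moves (1)--(5), then invoke the combinatorial connectivity of admissible marked triangulations encoding a fixed spin structure as established in \cite{Novak:2014oca}. Your observation that the only change from \cite{Novak:2014oca} is the passage from symmetric to pivotal, which affects the verification of \eqref{eq:inv-cond-1}--\eqref{eq:inv-cond-5} but not the underlying combinatorics, is exactly the point the paper makes just before stating conditions (1)--(5).
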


The proof is analogous to that of \cite[Prop.\,4.1]{Novak:2014oca} 
and we omit it here.

\subsection{Algebraic description}\label{sec:alg-desc}

\begin{figure}[tb]
\begin{center}
\begin{tabular}{cccc}
$\mathrm{ev}_U \,=~ \raisebox{-0.5\height}{\includegraphics[scale=0.4]{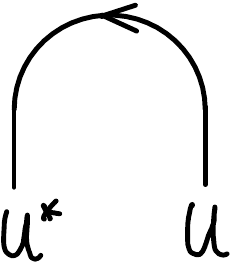}}$
&
$\widetilde{\mathrm{ev}}_U \,=~ \raisebox{-0.5\height}{\includegraphics[scale=0.4]{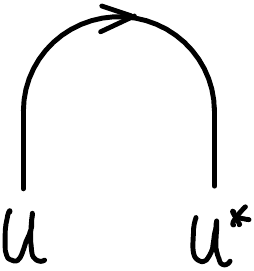}}$
&
$\mathrm{coev}_U \,=~ \raisebox{-0.5\height}{\includegraphics[scale=0.4]{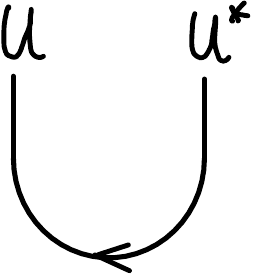}}$
&
$\widetilde{\mathrm{coev}}_U \,=~ \raisebox{-0.5\height}{\includegraphics[scale=0.4]{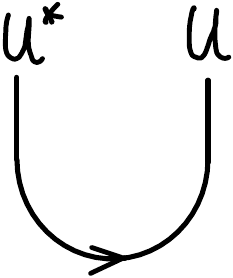}}$
\\[3em]
$\mu \,= \raisebox{-0.5\height}{\includegraphics[scale=0.4]{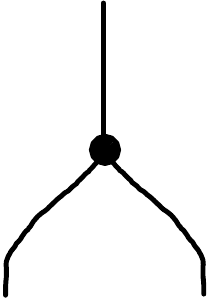}}$
&
$\eta \,=~ \raisebox{-0.5\height}{\includegraphics[scale=0.4]{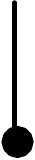}}$
&
$\Delta \,= \raisebox{-0.5\height}{\includegraphics[scale=0.4]{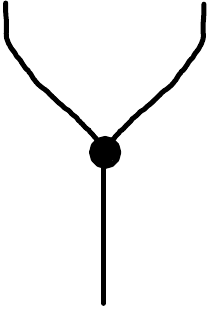}}$
&
$\varepsilon \,=~ \raisebox{-0.5\height}{\includegraphics[scale=0.4]{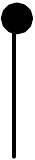}}$
\\[3em]
$b \,= \raisebox{-0.5\height}{\includegraphics[scale=0.4]{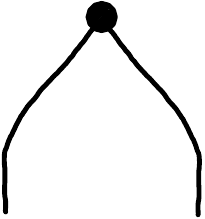}}$
&
$c_{-1} \,= \raisebox{-0.5\height}{\includegraphics[scale=0.4]{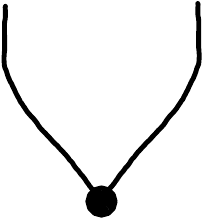}}$
&
$N \,=~ \raisebox{-0.5\height}{\includegraphics[scale=0.4]{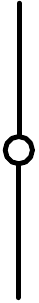}}$
\end{tabular}
\end{center}
\caption{Graphical notation for duality maps and shorthand graphical notation for frequently used morphisms.}
\label{fig:graph-short}
\end{figure}

A {\em Frobenius algebra} in a monoidal category is an object $A$ together with a unital algebra and counital coalgebra structure, such that the coproduct is a bimodule map. We denote the structure morphisms as
\begin{equation}
	\mu : A \otimes A \to A 
~,\quad
	\eta : \mathbf{1} \to A
~,\quad
	\Delta : A \to A \otimes A
~,\quad
	\varepsilon : A \to \mathbf{1} \ .
\end{equation}
A Frobenius algebra is called {\em $\Delta$-separable} if $\mu \circ \Delta = \id_A$. In a pivotal monoidal category, the Nakayama automorphism $N$ of a Frobenius algebra $A$ takes the form
\begin{equation}\label{eq:Nakayama-def}
	N ~=~~ \raisebox{-0.5\height}{\includegraphics[scale=0.4]{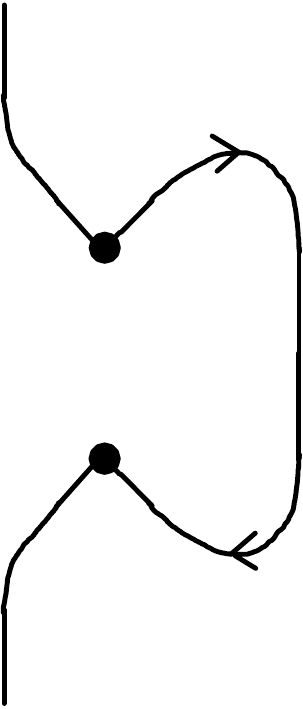}} \quad ,
\end{equation}
where in the above string diagram we use the graphical shorthands for morphisms as collected in Figure \ref{fig:graph-short}. One checks that $N$ is an algebra and coalgebra automorphism.

\medskip

After these preliminaries, we formulate the data $t,c_{\pm1}$ subject to conditions (1)--(5) above in terms of a Frobenius algebra with extra properties. This is done by the same procedure as \cite[Sect.\,4.3]{Novak:2014oca}
and we state the following results without proof.

Define the product $\mu : A \otimes A \to A$ as
\begin{equation}
\mu 
~=~ \raisebox{-0.5\height}{\includegraphics[scale=0.4]{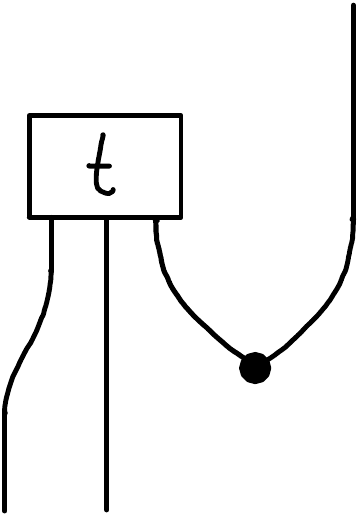}}
\quad .
\end{equation}
As in \cite[Sect.\,4.3]{Novak:2014oca} 
we make the
\begin{quote}
{\bf Assumptions:} 1) The coparing $c_{-1}$ is non-degenerate. 2) The product $\mu$ has a unit $\eta : \mathbf{1} \to A$.
\end{quote}
We denote the pairing dual to the copairing $c_{-1}$ by $b : A\otimes A \to \mathbf{1}$. Using the graphical shorthands in Figure \ref{fig:graph-short}, we set
\begin{equation}\label{eq:coproduct-counit}
    \Delta 
~=~ \raisebox{-0.5\height}{\includegraphics[scale=0.4]{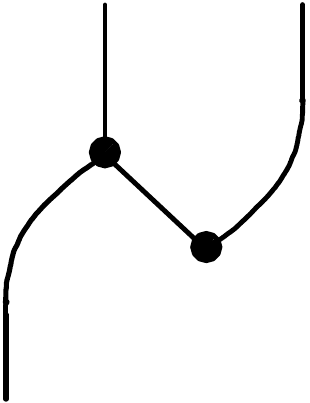}}
~=~ \raisebox{-0.5\height}{\includegraphics[scale=0.4]{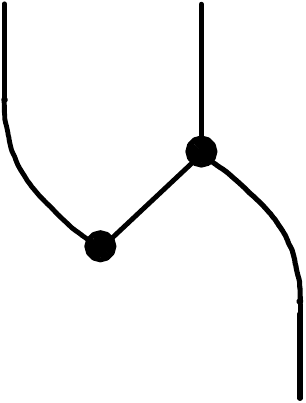}}
\quad,\qquad
    \varepsilon  
~=~ \raisebox{-0.5\height}{\includegraphics[scale=0.4]{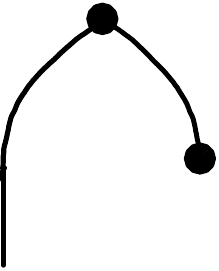}}
~=~ \raisebox{-0.5\height}{\includegraphics[scale=0.4]{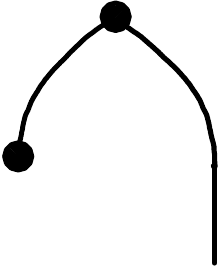}}
 \quad .
\end{equation}
That the two distinct expressions for $\Delta$ and $\varepsilon$ are indeed equal is straightforward to check, see e.g.\ \cite[Sect.\,4.3]{Novak:2014oca}. We have the following theorem, the proof of which is easily adapted from \cite[Sect.\,4.3]{Novak:2014oca}.

\begin{theorem}    \label{thm:moves_vs_alg}
\begin{enumerate}
\item
Let $t,c_{\pm 1}$ fulfil relations (1)--(5) in Section \ref{sec:spin-surf-to-def-surf}. Under assumpions 1,2 and defining $\mu,\eta,\Delta,\varepsilon$ as above, $A$ is a $\Delta$-separable Forbenius algebra satisfying $N \circ N = \id_A$.
\item
Let conversely  $A$ be a $\Delta$-separable Forbenius algebra satisfying $N \circ N = \id_A$. Then 
\begin{equation*} \label{eq:tc-via-A}
 t =\varepsilon \circ \mu  \circ (\mu \otimes \id_A) ~, \quad
 c_{-1} = \Delta \circ \eta ~, \quad
 c_{1} = (N \otimes \id) \circ \Delta \circ \eta \ .
\end{equation*} 
satisfy relations (1)--(5) in Section \ref{sec:spin-surf-to-def-surf}, as well as assumpions 1,2.
\end{enumerate}
\end{theorem}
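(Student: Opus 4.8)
The plan is to set up a dictionary between the five combinatorial moves \eqref{eq:inv-cond-1}--\eqref{eq:inv-cond-5} and the defining identities of a $\Delta$-separable Frobenius algebra with $N^2=\id$, and to prove the two implications by reading this dictionary in opposite directions. In both parts I would argue move-by-move, using the definitions of $\mu$, $b$, $\Delta$, $\varepsilon$ and $N$ in terms of $t, c_{\pm1}$ together with the shorthands of Figure \ref{fig:graph-short}. The one structural difference from \cite{Novak:2014oca} that has to be tracked throughout is that $\mathcal{D}$ is pivotal but \emph{not} braided: wherever the symmetric-category argument slides one defect line past another using the braiding, I would instead reroute the strand using only $\mathrm{coev}$, $\widetilde{\mathrm{coev}}$ and pivotality.

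For part (1), assuming 1 and 2, I would proceed as follows. \emph{Associativity of $\mu$} is the content of the Pachner $2$--$2$ move \eqref{eq:inv-cond-4}: contracting the free edge of both sides against the copairing $c_{-1}$ (whose non-degeneracy is assumption 1, with dual pairing $b$) identifies the two inequivalent triangulations of the square with the two bracketings $\mu\circ(\mu\otimes\id)$ and $\mu\circ(\id\otimes\mu)$. The \emph{cyclic and leaf-exchange symmetries} \eqref{eq:inv-cond-2}--\eqref{eq:inv-cond-3} say precisely that the trilinear form $b\circ(\mu\otimes\id)=\varepsilon\circ\mu\circ(\mu\otimes\id)$ is invariant under cyclic rotation and a controlled transposition of its legs; these are what make the two expressions for $\Delta$ and $\varepsilon$ in \eqref{eq:coproduct-counit} agree and what yield the Frobenius relation and symmetry-up-to-Nakayama of $b$ once $\Delta,\varepsilon$ are defined by raising the indices of $t$ with $c_{-1}$. \emph{Coassociativity, the counit axiom and the Frobenius condition} then follow formally from associativity and the non-degeneracy of $b$, as for any pairing built from a non-degenerate invariant form. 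The \emph{$\Delta$-separability} $\mu\circ\Delta=\id$ is the algebraic shadow of the $3$--$1$ move \eqref{eq:inv-cond-5}: collapsing one triangle replaces a $\mu$ followed by a $\Delta$ by the bare $A$-line.

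The remaining, and most delicate, point of part (1) is $N\circ N=\id$. Here the two copairings $c_{+1}$ and $c_{-1}$ must be kept apart, since by the definitions they differ by $N$ on one leg, $c_{+1}=(N\otimes\id)\circ c_{-1}$, and it is the edge sign $s(e)\in\{\pm1\}$ that selects which one is inserted. The sign constraint $s_{12}s_{23}s_{31}=-1$ attached to the $3$--$1$ move \eqref{eq:inv-cond-5} forces an even number---either zero or two---of the three spoke edges at the central vertex to carry $c_{+1}$, so the Nakayama monodromy accumulated around that vertex is either $N^{0}=\id$ or $N^{2}$; invariance of the move under both admissible sign assignments therefore forces $N\circ N=\id$. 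I expect this bookkeeping---controlling the net power of $N$ accumulated around a vertex without a braiding available to slide the $c_{+1}$ insertions along the edges---to be the main obstacle, and to be the structural reason the spin condition reads $N^2=\id$ rather than the $N=\id$ (symmetric) condition of the oriented orbifold construction.

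For part (2) I would run the dictionary backwards: substitute $t=\varepsilon\circ\mu\circ(\mu\otimes\id_A)$, $c_{-1}=\Delta\circ\eta$ and $c_{+1}=(N\otimes\id)\circ\Delta\circ\eta$, and verify each of \eqref{eq:inv-cond-1}--\eqref{eq:inv-cond-5} by routine Frobenius calculus: associativity gives the $2$--$2$ move \eqref{eq:inv-cond-4}, the Frobenius relation and symmetry-up-to-Nakayama give the triangle symmetries \eqref{eq:inv-cond-2}--\eqref{eq:inv-cond-3}, $\Delta$-separability gives the $3$--$1$ move \eqref{eq:inv-cond-5}, and $N^2=\id$ together with pivotality gives the edge-orientation move \eqref{eq:inv-cond-1} and reconciles its sign constraint. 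Finally, assumptions 1 and 2 hold because $c_{-1}=\Delta\circ\eta$ is non-degenerate with dual pairing $b=\varepsilon\circ\mu$, and $\eta$ is the given unit.
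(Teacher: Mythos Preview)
Your proposal is correct and matches the paper's approach. The paper does not give a detailed proof of this theorem at all: it simply states that ``the proof \dots\ is easily adapted from \cite[Sect.\,4.3]{Novak:2014oca}'', and your sketch is exactly such an adaptation --- a move-by-move dictionary between \eqref{eq:inv-cond-1}--\eqref{eq:inv-cond-5} and the Frobenius axioms, with the single structural change (duality/pivotality in place of the symmetric braiding of \cite{Novak:2014oca}) that you correctly flag. Your identification of the Pachner $2$--$2$ move with associativity, the $3$--$1$ move with $\Delta$-separability, and the edge/triangle symmetries with the Frobenius and Nakayama relations is the same as in \cite{Novak:2014oca}, and your reading of the sign constraint $s_{12}s_{23}s_{31}=-1$ as forcing an even power of $N$ at the collapsed vertex is the right mechanism for extracting $N^2=\id_A$.
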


\subsection{State spaces}\label{sec:state-spaces}

Consider a surface with defects $C$ which is diffeomorphic to a cylinder with one $A$-defect connecting the two parametrised boundaries. 
Let $\mathcal{H}_A$ be the state space assigned to the boundary component from which the $A$-defect points away. In this section we will define two subspaces $\mathcal{H}^{NS} \subset \mathcal{H}_A$ and $\mathcal{H}^R\subset \mathcal{H}_A$ and investigate their properties.

\medskip

To start with, we will need two more assumption on the behaviour of $Q$. The first is easy:
\begin{quote}
{\bf Assumption:} 3) The QFT $Q$ is non-degenerate.
\end{quote}

The second additional assumption is slightly more elaborate. Let $X,Y,Z \in \mathcal{D}$ and $f : Z \otimes X \to Y \otimes Z$ a morphism in $\mathcal{D}$, i.e.\ a topological junction field. 
Consider the two cylinders
\begin{equation}\label{eq:cylinder-junction-to-map}
C_{X,Y,Z}^f ~=~  \raisebox{-0.5\height}{\includegraphics[scale=0.3]{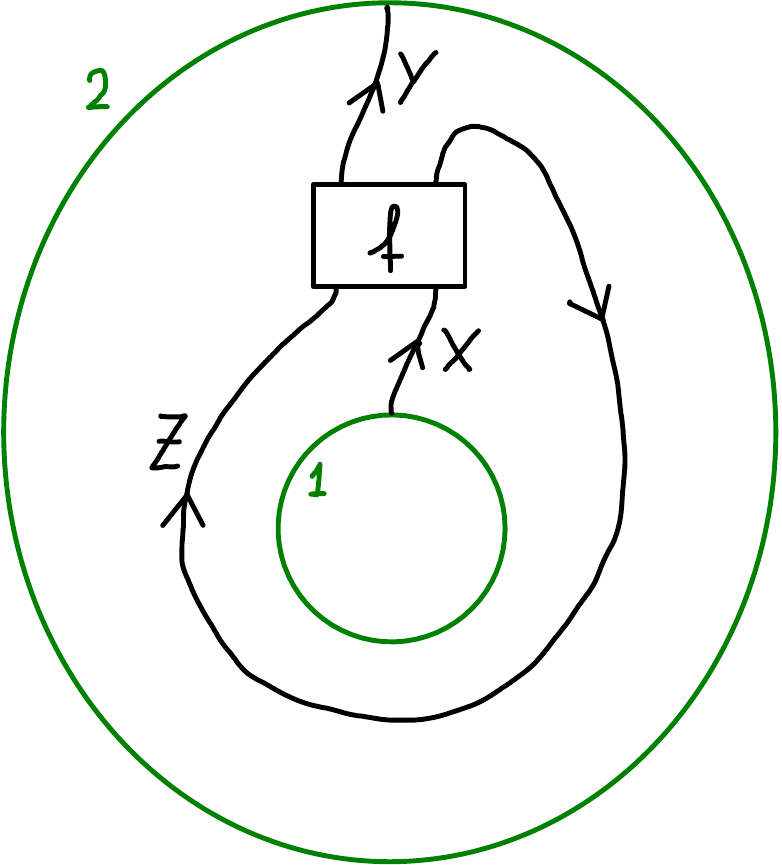}}
\qquad , \qquad
C_{Y} ~=~  \raisebox{-0.5\height}{\includegraphics[scale=0.3]{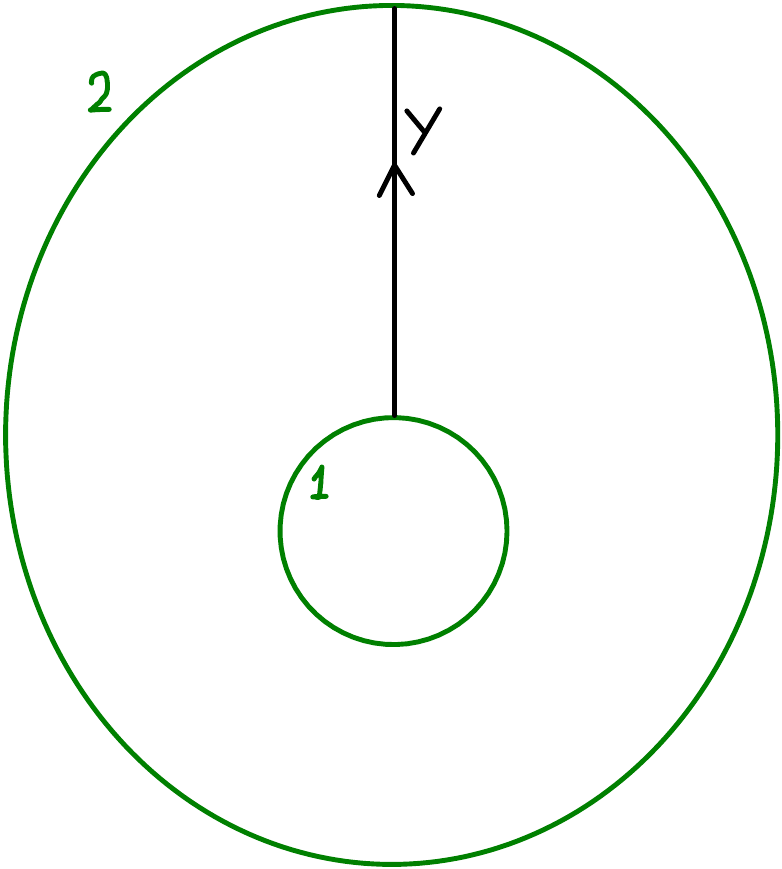}}  \ .
\end{equation}
Here we suppose that the underlying surface of both cylinders is the same, and that the defect graph (but not the labelling) agrees in a neighbourhood of both boundary components. Note that we can then write $C_Y = C_{Y,Y,\one}^{\id_Y}$. 

Let $\mathcal{K}_Y$ be the state space assigned to boundary component 2 of $C_{X,Y,Z}^f$ and $C_Y$, and let $\mathcal{H}_X$ (resp.\ $\mathcal{H}_Y$) be that assigned to boundary component 1 of  $C_{X,Y,Z}^f$ (resp.\ $C_Y$).
We would like to encode the difference between the value of $Q$ on the two cylinders
in a linear map $T[f] : \mathcal{H}_X \to \mathcal{H}_Y$. That is, we assume:
\begin{quote}
{\bf Assumption:} 4) In the situation \eqref{eq:cylinder-junction-to-map} there exists a linear map $T[f]: \mathcal{H}_X \to \mathcal{H}_Y$ such that for all $\phi \in \mathcal{H}_X$ and $\psi \in \mathcal{K}_Y$ we have
\begin{equation}\label{eq:T[f]-def}
	Q(C_{X,Y,Z}^f)(\phi \otimes \psi) = Q(C_Y)(T[f](\phi) \otimes \psi) \ .
\end{equation}
\end{quote}
The linear map $T[f]$ is unique by non-degeneracy in Assumption 3. The symbol ``$T$'' is chosen as $T[f]$ behaves like a partial trace:

\begin{lemma}\label{lem:T[]-properties}
Let $U,V,W,Z,Z' \in \mathcal{D}$. Then
\begin{enumerate}
\item for all $f : Z' \otimes U \to V \otimes Z$ and $h : Z \to Z'$ we have 
$$T\big[\,(\id_V \otimes h) \circ f\,\big] ~=~ T\big[\,f \circ (h \otimes \id_U)\,\big] \ ,$$
\item for all 
$f : Z \otimes U \to V \otimes Z$ and $g : Z' \times V \to W \otimes Z'$ we have
$$T[g] \circ T[f] ~=~ T\big[\,(g \otimes \id_Z) \circ (id_{Z'} \otimes f)\,\big] \ .$$
\end{enumerate}
\end{lemma}

\begin{proof} For part 1 consider the equalities
\begin{align}
& Q(C_V) \circ \big( T\big[\,(\id_V \otimes h) \circ f\,\big] \otimes \id \big)
~\overset{(1)}{=}~
Q\Bigg(~\raisebox{-0.5\height}{\includegraphics[scale=0.3]{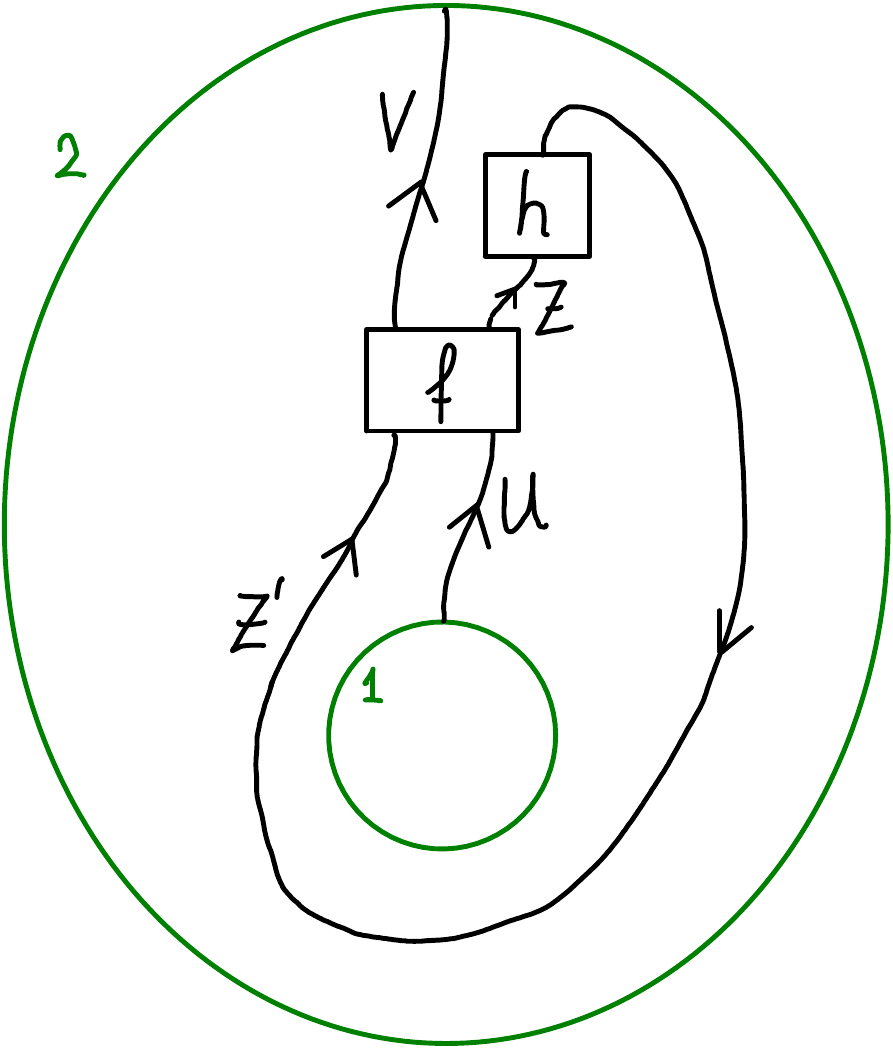}}~\Bigg)
\\[-2em] \nonumber
&\overset{(2)}{=}~
Q\Bigg(~\raisebox{-0.5\height}{\includegraphics[scale=0.3]{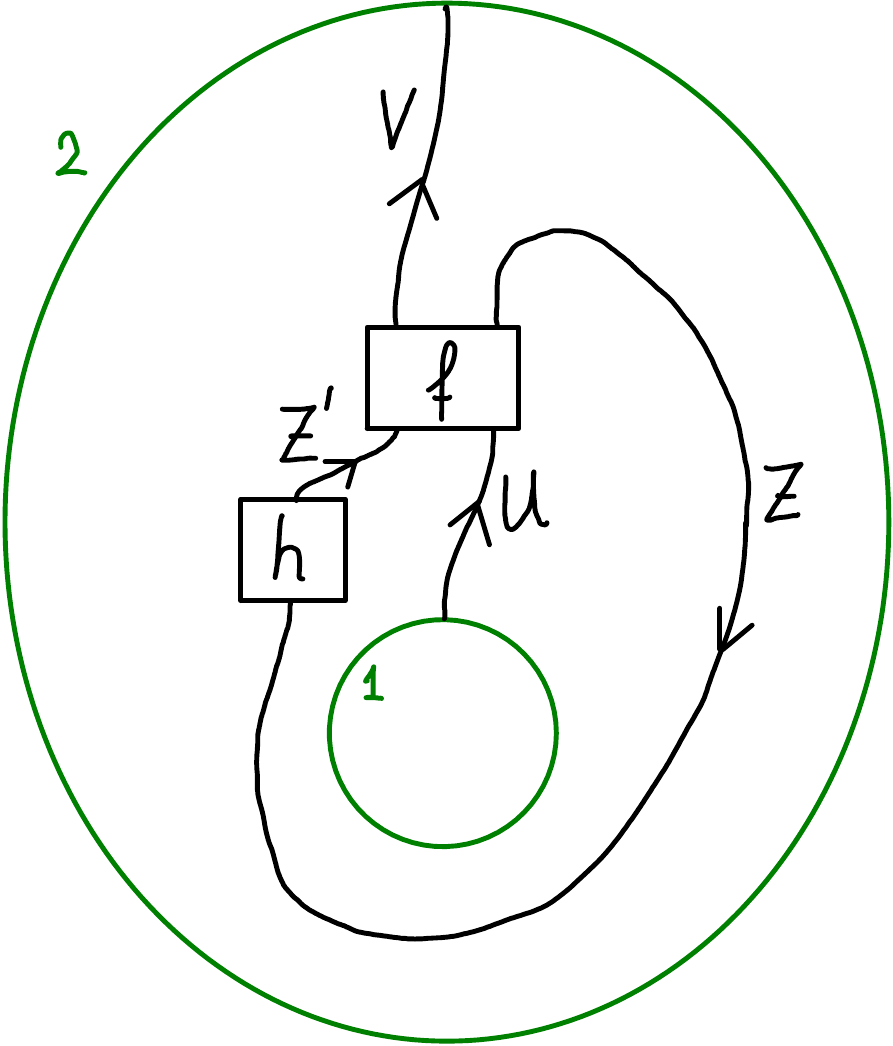}}~\Bigg)
~\overset{(3)}{=}~
Q(C_V) \circ \big( T\big[\,f \circ (h \otimes \id_U)\,\big] \otimes \id \big) \ .
\end{align}
Step 1 is the definition of $T$, step 2 uses invariance of topological defects and junctions under isotopies, and step 3 is again the definition of $T$. Non-degeneracy of $Q(C_V)$ (Assumption 3) implies the identity in part 1.
Part 2 works along the same lines:
\begin{align}
&Q(C_V) \circ \big( (T[g] \circ T[f]) \otimes \id \big)
~=~
Q\Bigg(~\raisebox{-0.5\height}{\includegraphics[scale=0.3]{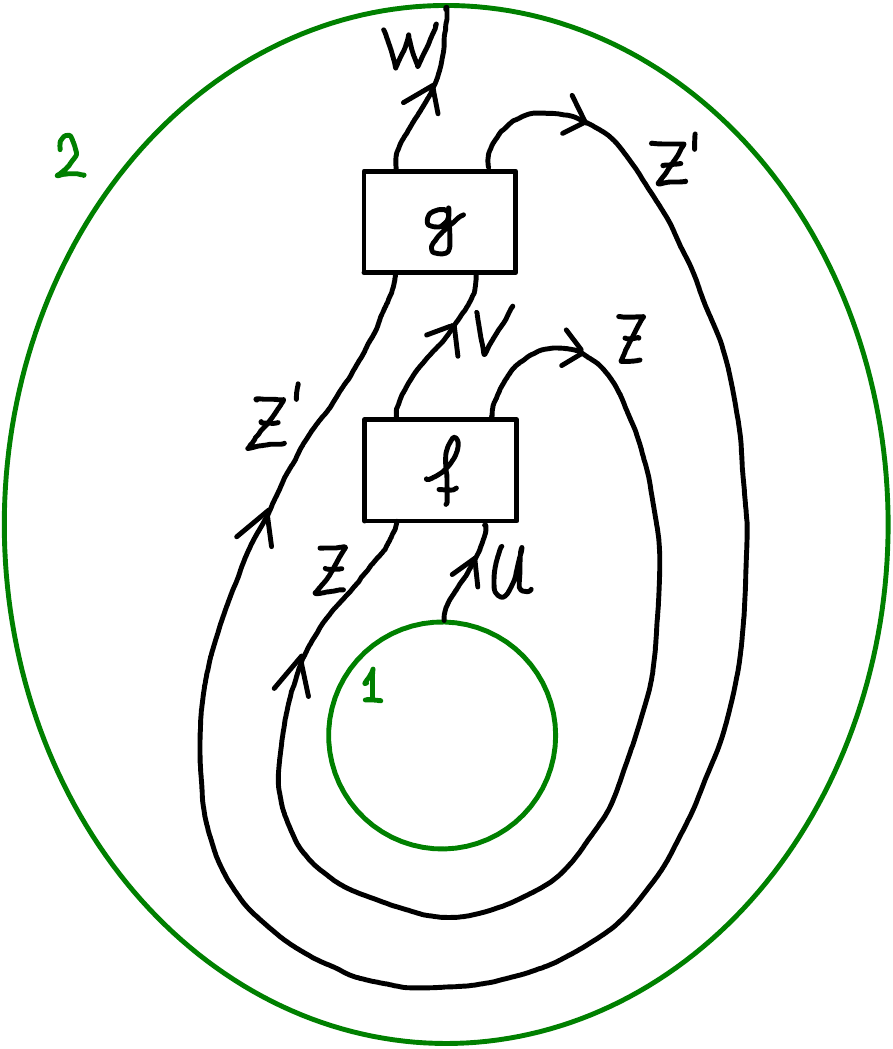}}~\Bigg)
\\[.5em] \nonumber
&
=
Q(C_V) \circ \big( T\big[\,(g \otimes \id_Z) \circ (id_{Z'} \otimes f)\,\big] \otimes \id \big) \ .
\end{align}
\end{proof}

In the following computation we fix an underlying cylinder $C$ as in \eqref{eq:cylinder-junction-to-map} and just vary the defect network and junction fields. In all cases a defect of type $A$ will emanate from boundary 1 (the ``inner'' boundary in the picture), and we denote by $\mathcal{H}_A$ the corresponding state space.
Using Assumption 4, we define the following endomorphisms of $\mathcal{H}_A$:
\begin{equation}\label{eq:PNS/R-def-via-T}
	N_\mathcal{H_A} := T[N]
	~~,\quad
	P^{NS} := T[\Delta \circ \mu \circ (N \otimes \id_A)]
	~~,\quad
	P^{R} := T[\Delta \circ \mu] \ .
\end{equation} 
Here, in the first definition $N$ is understood as a morphism $\one \otimes A \to A \otimes \one$. In terms of cylinders $C$ with defect networks, in the case of $P^{NS}$ for example, we have
\begin{equation}
	Q\Bigg(~\raisebox{-0.5\height}{\includegraphics[scale=0.3]{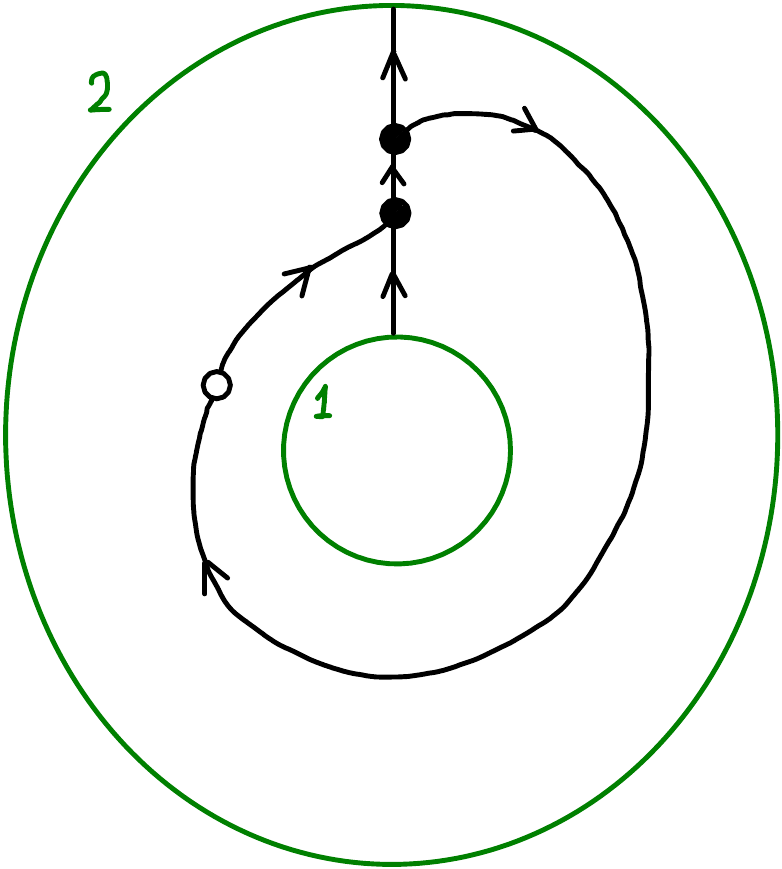}}~\Bigg) ~=~ Q(C_A) \circ (P^{NS} \otimes \id)
	~~.
\end{equation}

\begin{lemma}\label{lem:N+P-properties}
\begin{enumerate}
\item $N_\mathcal{H}$ is an involution.
\item $P^{NS}$ and $P^R$ are projectors.
\item For $\delta \in \{NS,R\}$, $N_\mathcal{H_A} \circ P^\delta = P^\delta \circ N_\mathcal{H_A}$.
\end{enumerate}
\end{lemma}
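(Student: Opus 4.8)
The plan is to use Lemma \ref{lem:T[]-properties} to reduce all three assertions to purely algebraic identities in the $\Delta$-separable Frobenius algebra $A$, which then follow from its defining properties together with $N^2 = \id_A$. The key mechanism is Lemma \ref{lem:T[]-properties}(2): composition of the operators $N_{\mathcal{H}_A}$, $P^{NS}$, $P^R$ on $\mathcal{H}_A$ corresponds, via $T[-]$, to a prescribed composition of the underlying morphisms in $\mathcal{D}$. So in each case I would write down the morphism $f : A \to A$ (viewed with the appropriate trivial tensor factors of $\mathbf{1}$) whose image under $T$ is the operator in question, perform the composition, and simplify using the Frobenius relations.

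For part (1), since $N_{\mathcal{H}_A} = T[N]$ with $N : \mathbf{1} \otimes A \to A \otimes \mathbf{1}$, Lemma \ref{lem:T[]-properties}(2) gives $N_{\mathcal{H}_A} \circ N_{\mathcal{H}_A} = T[N \circ N] = T[\id_A]$, and the hypothesis $N \circ N = \id_A$ finishes it; I would only need to check that $T[\id_A] = \id_{\mathcal{H}_A}$, which is immediate from the defining equation \eqref{eq:T[f]-def} since $C_{A,A,\mathbf{1}}^{\id_A} = C_A$. For part (2), I would compute $P^{R} \circ P^{R} = T[\Delta \circ \mu] \circ T[\Delta \circ \mu]$. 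Applying Lemma \ref{lem:T[]-properties}(2) turns this into $T$ of a composite in which the inner $\mu$ and outer $\Delta$ meet; the Frobenius relation lets one rewrite $(\Delta\circ\mu \otimes \id)\circ(\id\otimes \Delta\circ\mu)$, and $\Delta$-separability $\mu\circ\Delta=\id_A$ collapses the middle, leaving a single $\Delta\circ\mu$ (up to the bookkeeping of the $\mathbf{1}$-strands). Hence $P^R$ is idempotent. The computation for $P^{NS} = T[\Delta\circ\mu\circ(N\otimes\id)]$ is the same but with an extra Nakayama insertion; here I expect to use that $N$ is a Frobenius-algebra automorphism (stated below \eqref{eq:Nakayama-def}) to move $N$ past $\mu$ and $\Delta$, together with $N^2=\id$, so that the two $N$-insertions either cancel or reorganise into a single one, again reducing to $\Delta$-separability.

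For part (3), I would again convert both sides through Lemma \ref{lem:T[]-properties}(2) into $T$ of a composite and show the two composites agree as morphisms in $\mathcal{D}$; this is where the automorphism property of $N$ does the work, since commuting $N$ with the Frobenius structure maps $\mu,\Delta$ is exactly what is needed to slide it through the projector. I expect the main obstacle to be the careful tracking of the trivial tensor factors and the precise ordering of strands when applying Lemma \ref{lem:T[]-properties}(2), since the operators are built from morphisms of different source/target types ($N$ as $\mathbf{1}\otimes A \to A\otimes \mathbf{1}$ versus $\Delta\circ\mu$ as $A \to A$), so one must insert identities on $\mathbf{1}$ and possibly invoke part (1) of that lemma to reconcile the domains before the composition rule applies cleanly. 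Once the diagrams are set up correctly, each simplification is a routine use of the Frobenius, separability and involutivity relations, so no genuinely hard step remains beyond this bookkeeping.
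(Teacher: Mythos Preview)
Your plan is correct and matches the paper's proof in structure and ingredients: reduce everything via Lemma~\ref{lem:T[]-properties} to identities in $\mathcal{D}$, then use the Frobenius axioms, $\Delta$-separability, and that $N$ is a Frobenius automorphism with $N^2=\id_A$. One small sharpening: the place where Lemma~\ref{lem:T[]-properties}\,(1) actually enters is not the bookkeeping of $\mathbf{1}$-strands but the substantive cyclic step---in part (2) the paper slides a coproduct around the trace to create a $\mu\circ\Delta$ bubble, and in part (3) it moves one $N$ from the output side to the input side of the wrapping strand; with that in hand the rest is exactly as you outline.
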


\begin{proof} Part 1 is immediate from Lemma \ref{lem:T[]-properties}\,(2). For part 2 compute
\begin{equation}\label{eq:PNS-idem-calc}
P^{NS} \circ P^{NS} 
~\overset{(1)}{=}~
T\!\left[~\raisebox{-0.5\height}{\includegraphics[scale=0.4]{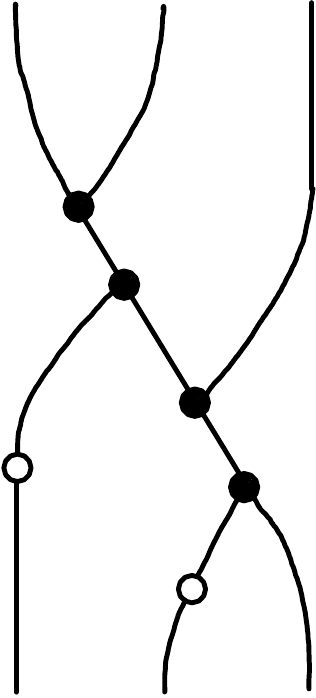}}~\right]
~\overset{(2)}{=}~
T\!\left[~\raisebox{-0.5\height}{\includegraphics[scale=0.4]{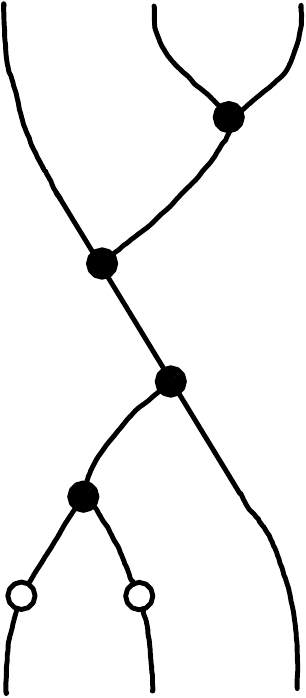}}~\right]
~\overset{(3)}{=}~
T\!\left[~\raisebox{-0.5\height}{\includegraphics[scale=0.4]{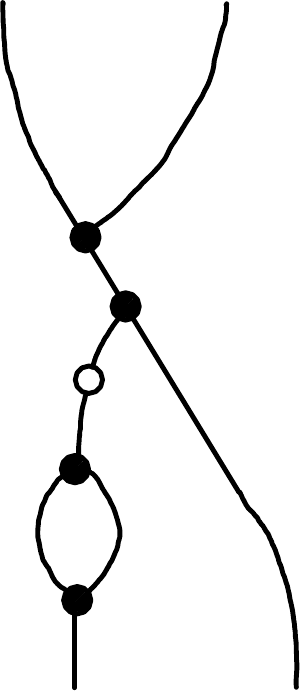}}~\right]
~\overset{(4)}{=}~
P^{NS} \ .
\end{equation}
Step 1 is Lemma \ref{lem:T[]-properties}\,(2), step 2 is associativity, coassociativity and the Frobenius property of $A$, step 3 is Lemma
\ref{lem:T[]-properties}\,(1) applied to the coproduct, and in step 4 $\Delta$-separability of $A$ is used and the definition of $P^{NS}$ is substituted.
The calculation for $P^R$ is similar. 
Part 3 for $\delta=NS$ follows from
\begin{align}
N_\mathcal{H_A} \circ P^{NS}
&\overset{(1)}{=}
T\big[\,(N \otimes \id)\circ  \Delta \circ \mu \circ (N \otimes \id)\, \big]
\overset{(2)}{=}
T\big[\,(N \otimes N) \circ \Delta \circ \mu\, \big]
\\ \nonumber
&\overset{(3)}{=}
T\big[\, \Delta \circ \mu \circ (N \otimes N)\, \big]
\overset{(4)}{=}
 P^{NS}  \circ N_\mathcal{H_A}
\end{align}
Steps 1 and 2 are Lemma \ref{lem:T[]-properties}, parts (2) and (1), respectively. Step 3 follows as $N$ is an automorphism of the Frobenius algebra $A$, and step 4 is as step 1. The argument for $P^R$ is the same.
\end{proof}

We can now define the quantum field theory on spin surfaces $Q_\mathrm{spin}$ in terms of $Q$. Let $\Sigma$ be a spin surface and $\Sigma^d$ a corresponding defect surface obtained by steps (a)--(c) in Section \ref{sec:spin-surf-to-def-surf}. We set
\begin{equation}\label{eq:Qspin-def}
	Q_\mathrm{spin}(\Sigma) := Q(\Sigma^d) \ .
\end{equation}
By  Theorem \ref{thm:moves_vs_alg}\,(2) and Proposition \ref{prop:Qspin-indep-choices}, $Q_\mathrm{spin}(\Sigma)$ is independent of the choices made in obtaining $\Sigma^d$.

\begin{lemma}\label{lem:Qspin-insert-P}
Let $\Sigma$ be a spin surface and let $\delta_i \in \{NS,R\}$ be the type of the $i$'th boundary component of $\Sigma$. Then, for $\psi_i \in \mathcal{H}_i$, $i=1,\dots,B$, we have
\begin{equation}
Q_\mathrm{spin}(\Sigma)(\psi_1 \otimes \cdots \otimes \psi_B)
=
Q_\mathrm{spin}(\Sigma)(P^{\delta_1}(\psi_1) \otimes \cdots \otimes P^{\delta_B}(\psi_B))
\end{equation}
\end{lemma}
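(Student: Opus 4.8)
The plan is to exhibit the projector $P^{\delta_i}$ as a piece of defect network that can be grown in a collar of the $i$'th boundary without changing the amplitude, so that its insertion becomes a consequence of the triangulation independence already established in Proposition \ref{prop:Qspin-indep-choices}. Since the projectors at distinct boundaries act on distinct tensor factors, it is enough to treat one boundary at a time and then iterate.

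First I would localise the problem at boundary $i$. Near this boundary $\Sigma^d$ contains a collar that is a cylinder of the type in \eqref{eq:cylinder-junction-to-map} with a single $A$-defect leaving the boundary. Abbreviating $g_i := \Delta\circ\mu\circ(N\otimes\id_A)$ for $\delta_i=NS$ and $g_i := \Delta\circ\mu$ for $\delta_i=R$, the definitions in \eqref{eq:PNS/R-def-via-T} read $P^{\delta_i}=T[g_i]$. Inserting the defect network for $g_i$ into this collar therefore converts, by Assumption 4, the boundary datum $\psi_i$ into $P^{\delta_i}(\psi_i)$: if $\Sigma^d_{(i)}$ denotes $\Sigma^d$ with this collar network inserted at boundary $i$, then
$$ Q(\Sigma^d_{(i)})(\psi_1\otimes\cdots\otimes\psi_B) = Q_\mathrm{spin}(\Sigma)(\psi_1\otimes\cdots\otimes P^{\delta_i}(\psi_i)\otimes\cdots\otimes\psi_B) . $$
Hence it suffices to prove $Q(\Sigma^d_{(i)})=Q_\mathrm{spin}(\Sigma)=Q(\Sigma^d)$.

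Second I would show that $\Sigma^d_{(i)}$ is itself one of the defect surfaces produced from the \emph{same} spin surface $\Sigma$ by steps (a)--(c) of Section \ref{sec:spin-surf-to-def-surf}, differing from $\Sigma^d$ only by a refinement of the triangulation inside the collar. Concretely, I would present the $g_i$-network as the dual network of a triangulated annulus attached to boundary $i$, choose a marking and edge signs on it, and check via the dictionary of Theorem \ref{thm:moves_vs_alg}(2) --- namely $t=\varepsilon\circ\mu\circ(\mu\otimes\id_A)$, $c_{-1}=\Delta\circ\eta$ and $c_{1}=(N\otimes\id)\circ\Delta\circ\eta$ --- that reducing this annular network with the relations (1)--(5) together with the Frobenius and $\Delta$-separability relations collapses it precisely to $g_i$. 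Given this, the edge signs on the collar restrict to the spin structure of $\Sigma$, so $\Sigma^d_{(i)}$ and $\Sigma^d$ both arise from $\Sigma$ by admissible choices, and Proposition \ref{prop:Qspin-indep-choices} yields $Q(\Sigma^d_{(i)})=Q(\Sigma^d)$, which proves the lemma.

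The main obstacle is the $NS$ versus $R$ bookkeeping in this second step: I must verify that it is exactly the boundary admissibility condition \eqref{eq:admiss-bnd}, with its shift $D'=D+1$ occurring precisely at the marked point $\varphi_i(1)$ of an $NS$-type boundary, that forces the extra Nakayama automorphism $N$ in $g_i$ for $\delta_i=NS$ and its absence for $\delta_i=R$. In practice this means tracking how the sign $s(e)$ on the collar edges selects $c_{1}=(N\otimes\id)\circ c_{-1}$ rather than $c_{-1}$, and confirming that the resulting $N$ survives the reduction. As a cross-check --- and as a route independent of the refinement argument --- I would compute the collar operator directly using Lemma \ref{lem:T[]-properties}, absorbing the boundary network of Figure \ref{fig:triang-to-def-rules}c into $P^{\delta_i}$ by the idempotency of Lemma \ref{lem:N+P-properties}(2); consistency of the two computations pins down the assignment $NS\leftrightarrow$ presence of $N$ and $R\leftrightarrow$ absence of $N$.
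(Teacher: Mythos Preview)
Your proposal is correct and follows the same approach the paper indicates: the paper omits the proof, stating only that it ``uses invariance of $Q_\mathrm{spin}$ under the choice of triangulation and is analogous to that of \cite[Prop.\,4.14]{Novak:2014oca}''. Your plan --- refine the triangulation in a boundary collar so that the dual defect network reduces to the $g_i$ producing $P^{\delta_i}$, then invoke Proposition~\ref{prop:Qspin-indep-choices} --- is exactly this strategy spelled out, and your identification of the $NS$/$R$ bookkeeping via the boundary admissibility shift in \eqref{eq:admiss-bnd} is the right place to look for the extra $N$.
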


The proof uses invariance of $Q_\mathrm{spin}$ under the choice of triangulation and is analogous to that of 
\cite[Prop.\,4.14]{Novak:2014oca}; 
we omit the details. 

The above lemma shows that $Q_\mathrm{spin}$ will in general be degenerate, but that it can easily be made non-degenerate by restricting to the image of $P^{NS/R}$. In more detail, one proceeds as follows. Define
\begin{equation}
\mathcal{H}^\delta := \mathrm{im}(P^\delta) \subset \mathcal{H}
\quad \text{where} \quad \delta \in \{NS,R\} \ .
\end{equation}

\begin{lemma}\label{lem:Qspin-nondeg}
Let $\delta \in \{NS,R\}$ and 
let $\Lambda^\delta$ be a spin cylinder with boundary components of type $\delta$. Then $Q_\mathrm{spin}(\Lambda^\delta) : \mathcal{H}^\delta_1 \otimes \mathcal{H}^\delta_2 \to \Cb$ is non-degenerate.
\end{lemma}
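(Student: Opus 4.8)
The plan is to reduce the assertion to the non-degeneracy of the plain cylinder pairing supplied by Assumption 3, together with a general fact about restricting a non-degenerate pairing to the images of two mutually adjoint projectors. Write $\beta := Q(C_A) : \mathcal{H}_1 \otimes \mathcal{H}_2 \to \Cb$ for the pairing of the cylinder carrying a single $A$-line, where $\mathcal{H}_1 = \mathcal{H}_A$ and $\mathcal{H}_2$ is the state space of the opposite boundary; $\beta$ is non-degenerate by Assumption 3. I will produce projectors $P^\delta_1$ on $\mathcal{H}_1$ and $P^\delta_2$ on $\mathcal{H}_2$ with $\mathcal{H}^\delta_i = \mathrm{im}(P^\delta_i)$ such that (i) $Q_\mathrm{spin}(\Lambda^\delta)(\psi_1 \otimes \psi_2) = \beta(P^\delta_1\psi_1 \otimes \psi_2)$ for all $\psi_1,\psi_2$, and (ii) $\beta(P^\delta_1\psi_1 \otimes \psi_2) = \beta(\psi_1 \otimes P^\delta_2\psi_2)$; compatibility of $Q_\mathrm{spin}(\Lambda^\delta)$ with the subspaces $\mathcal{H}^\delta_i$ is already recorded in Lemma \ref{lem:Qspin-insert-P}.

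For (i) I would triangulate $\Lambda^\delta$ and reduce the associated defect network. By Proposition \ref{prop:Qspin-indep-choices} the amplitude is independent of this choice, so I may use a triangulation concentrated near boundary $1$; collapsing it with the moves underlying Theorem \ref{thm:moves_vs_alg} turns the network into a single $A$-line carrying one insertion near boundary $1$, which by the definition of $T$ and of $P^\delta = P^{NS}$ or $P^R$ in \eqref{eq:PNS/R-def-via-T} is exactly $\beta \circ (P^\delta_1 \otimes \id)$. This is the computation already displayed for $P^{NS}$ just before Lemma \ref{lem:N+P-properties}. Since $P^\delta_1$ is idempotent by Lemma \ref{lem:N+P-properties}\,(2), it is harmless if the collapse produces several copies of the insertion.

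For (ii) I would run the same reduction but with a triangulation concentrated near boundary $2$, obtaining $\beta(\psi_1 \otimes P^\delta_2\psi_2)$, where $P^\delta_2$ is the projector defined by the mirror-image $T$-construction at boundary $2$ and $\mathcal{H}^\delta_2 := \mathrm{im}(P^\delta_2)$. Both expressions equal $Q_\mathrm{spin}(\Lambda^\delta)(\psi_1\otimes\psi_2)$ by Proposition \ref{prop:Qspin-indep-choices}, which yields the adjointness (ii); equivalently, one isotopes the single defect insertion along the $A$-line from one boundary to the other. Granting (i) and (ii), the conclusion is elementary: for $0 \neq \psi_1 \in \mathcal{H}^\delta_1$ one has $P^\delta_1\psi_1 = \psi_1$, non-degeneracy of $\beta$ gives $\psi_2$ with $\beta(\psi_1\otimes\psi_2)\neq 0$, and then $Q_\mathrm{spin}(\Lambda^\delta)(\psi_1\otimes P^\delta_2\psi_2) = \beta(\psi_1 \otimes P^\delta_2\psi_2) = \beta(P^\delta_1\psi_1\otimes\psi_2) = \beta(\psi_1\otimes\psi_2)\neq 0$ with $P^\delta_2\psi_2 \in \mathcal{H}^\delta_2$; the symmetric argument handles a fixed $\psi_2 \in \mathcal{H}^\delta_2$.

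I expect the only genuinely delicate point to be step (ii): one must check that a spin cylinder with both boundaries of type $\delta$ admits admissible markings (in the sense of Section \ref{sec:spin-comb}) that localise the spin data near either boundary and produce the \emph{same} type of insertion $P^\delta$ at each end, so that the edge-sign and marking bookkeeping is consistent with freely sliding the insertion across the cylinder. An alternative to the isotopy argument would be a direct string-diagram proof that $P^\delta$ is self-adjoint for $\beta = \varepsilon\circ\mu$, using coassociativity, the Frobenius property, and that $N$ is an involutive Frobenius-algebra automorphism; either way this adjointness is the crux, whereas step (i) and the final linear-algebra step are routine given the machinery already in place.
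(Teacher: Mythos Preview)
Your strategy is the paper's: write the spin cylinder amplitude as a non-degenerate pairing precomposed with $P^\delta$ on one slot, verify that $P^\delta$ is self-adjoint for that pairing, and conclude with the elementary linear-algebra step you spell out. For (ii) the paper also takes your second alternative, a direct string-diagram check that $(P^\delta\phi,\psi)=(\phi,P^\delta\psi)$, rather than the two-triangulations argument.

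There is, however, one genuine oversight in step (i). After collapsing the bulk triangulation you do \emph{not} land on $\beta\circ(P^\delta\otimes\id)$ with $\beta=Q(C_A)$. The display before Lemma~\ref{lem:N+P-properties} which you cite concerns a specific cylinder with a single loop insertion, but that cylinder is not $Q_\mathrm{spin}(\Lambda^\delta)$: step (c) of the construction deposits the fixed boundary network of Figure~\ref{fig:triang-to-def-rules}\,c in a collar at \emph{each} end, and this residual network does not simplify away. The paper therefore first introduces an auxiliary cylinder $C$ carrying precisely this collar network at both boundaries, writes $Q(C)=Q(C_{(A,-)})\circ(T[f]\otimes\id)$ for a suitable $f:\one\otimes A\to A\otimes\one$, exhibits an explicit inverse $T[g]$ using $(b\otimes\id_A)\circ(\id_A\otimes c)=\id_A=(\id_A\otimes b)\circ(c\otimes\id_A)$, and concludes that the pairing $(\phi,\psi):=Q(C)(\phi\otimes\psi)$ is non-degenerate. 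Only then does the reduction of the spin-cylinder network give $Q_\mathrm{spin}(\Lambda^\delta)(\phi\otimes\psi)=(P^\delta\phi,\psi)$. Your argument goes through verbatim once you replace $\beta=Q(C_A)$ by this $(-,-)=Q(C)$ and insert the short invertibility check for $T[f]$; as written, your identification of the pairing is not correct.
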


\begin{proof}
Let $C$ be a surface with defects such that $\underline C = \underline \Lambda^\delta$. On $C$ place the following defect network:
\begin{equation}
  C = \raisebox{-0.5\height}{\includegraphics[scale=0.3]{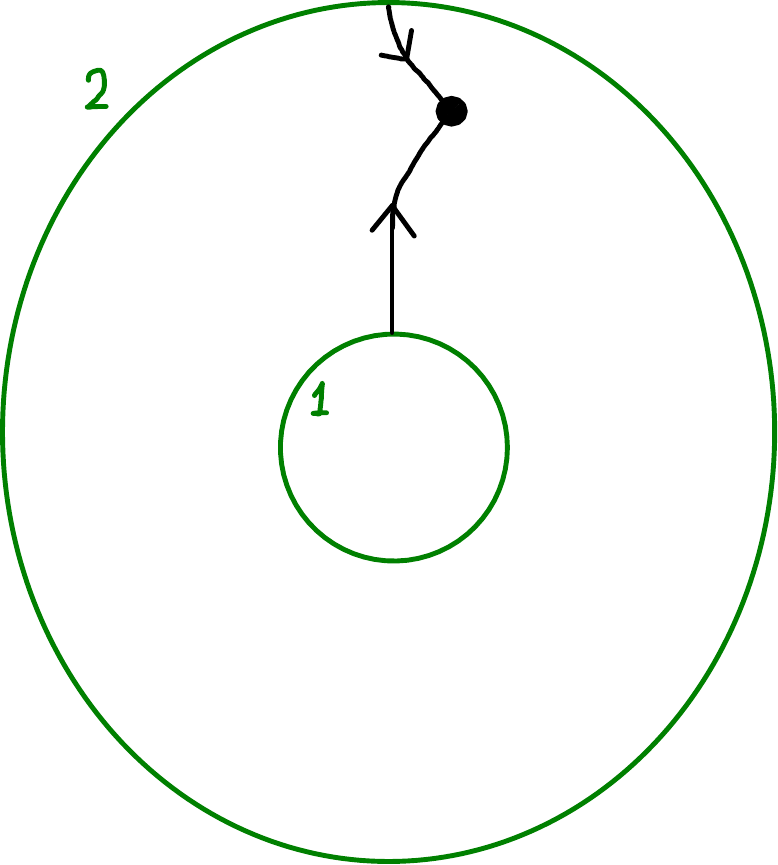}}
\end{equation}
By Assumption 4 we can write $Q(C)(\phi \otimes \psi) = Q(C_{(A,-)})(T[f]\phi \otimes \psi)$ for an appropriate $f : \one \otimes A \to A \otimes \one$. Using $(b \otimes \id_A) \circ (\id_A \otimes c) = \id_A = (\id_A \otimes c) \circ (b \otimes \id_A)$ together with Lemma \ref{lem:T[]-properties}\,(2), one checks that there is a $g : \one \otimes A \to A \otimes \one$ such that $T[f] \circ T[g] = \id_{\mathcal{H}_A} = T[g] \circ T[f]$. Since $Q(C_{(A,-)})$ is a non-degenerate pairing (Assumption 3) and $T[f]$ is invertible, also $(\phi,\psi) := Q(C)(\phi \otimes \psi)$ is non-degenerate.

Now proceed via steps (a)--(c) in Section \ref{sec:spin-surf-to-def-surf} to construct a cylinder with defect network encoding the spin structure of $\Lambda^\delta$. An analogous calculation has been carried out in \cite[Sect.\,4.5]{Novak:2014oca}. 
One finds that the resulting defect network can be simplified as shown in the first equality below -- we omit the details:
\begin{equation}
Q_\mathrm{spin}(\Lambda^\delta)(\phi \otimes \psi) = Q\Bigg(~\raisebox{-0.5\height}{\includegraphics[scale=0.3]{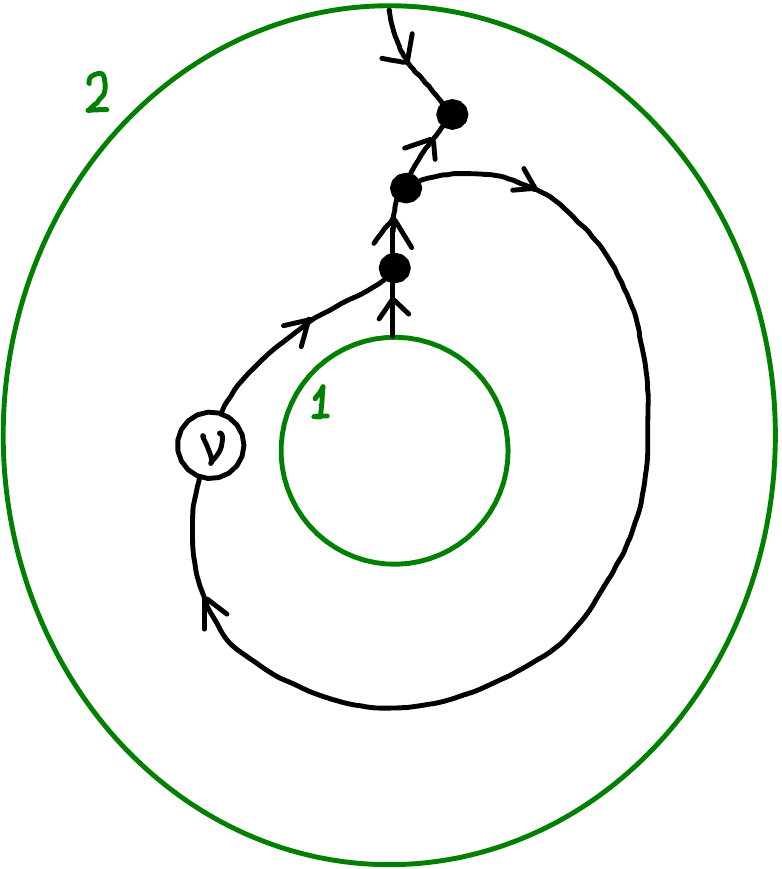}}~\Bigg)(\phi \otimes \psi) = (P^\delta \phi,\psi) \ ,
\end{equation}
where $\nu = -1$ for $\delta = NS$ and $\nu=1$ for $\delta = R$.
A short calculation shows that $(P^\delta \phi,\psi) = (\phi,P^\delta\psi)$ for all $\phi,\psi$, and so the pairing $(-,-)$ remains non-degenerate when restricted to $\mathcal{H}^\delta$. But for $\phi,\psi \in \mathcal{H}^\delta$ one has $Q_\mathrm{spin}(\Lambda^\delta)(\phi \otimes \psi) = (\phi,\psi)$, which is therefore non-degenerate.
\end{proof}

Lemmas \ref{lem:Qspin-insert-P} and \ref{lem:Qspin-nondeg} show that, apart from the glueing property {\bf Q4} which was not discussed,\footnote{
{\bf Q4} for $Q_\mathrm{spin}$ would follow from {\bf Q4} for Q and the behaviour of the combinatorial model for the spin structure under glueing as described in \cite[Sect.\,3.5]{Novak:2014oca}
see also \cite[Sect.\,4.4]{Novak:2014oca}, 
where the glueing procedure is described for the spin state sum TFT.
}, 
\begin{equation}
	Q_\mathrm{spin}(\Sigma) : \mathcal{H}_1^{\delta_1} \otimes \cdots \otimes \mathcal{H}_B^{\delta_B} \to \Cb 
\end{equation}
is a non-degenerate QFT and that one does not loose any information by restricting the arguments from $\mathcal{H}_i$ to $\mathcal{H}^{NS/R}_i$. 

In the context of 2d\,TFTs, state spaces defined in terms of projectors similar to $P^{NS/R}$ from \eqref{eq:PNS/R-def-via-T} appear in \cite{Brunner:2013ota,barrett2013spin,Novak:2014oca}.

\medskip

By Lemma \ref{lem:N+P-properties}\,(3), the involution $N_\mathcal{H}$ maps $\mathcal{H}^{NS}$ to itself, and dito for $\mathcal{H}^R$. We denote these restrictions of $N_\mathcal{H}$ by $N|_{NS}$ and $N|_R$, respectively. 
We stress that we have not required the Nakayama automorphism of $A$ to be parity involution. Accordingly, also the involutions $N|_{NS}$ and $N|_R$ do not have to be the parity involution. For theories with fermions, the parity involution is usually denoted by $(-1)^F$, so that in other words, in the formalism developed here it is not imposed that $N|_{NS}$ and $N|_R$ act as $(-1)^F$.
(Of course, the free fermion example discussed in Sections \ref{sec:1ff-ex} and \ref{sec:so(n)} will have this property.)

\medskip

Finally, we consider the action of Dehn twists on the state spaces $\mathcal{H}^{NS/R}$. 
We will need
\begin{quote}
{\bf Assumption:} 5) 
Let $\underline d$ be a sequence of defect conditions and let $\mathcal{H}_{\underline d}$ be the state space assigned to a boundary component with rotationally symmetric parametrisation (i.e.\ the metric on the parametrising annulus in $\mathbb{C}$ is invariant under rotations).
Then there is an action $J$ of infinitesimal rotations on $\mathcal{H}_{\underline d}$ such that
\begin{equation} \label{eq:ass-rotation-action}
Q\Bigg(~\raisebox{-0.5\height}{\includegraphics[scale=0.3]{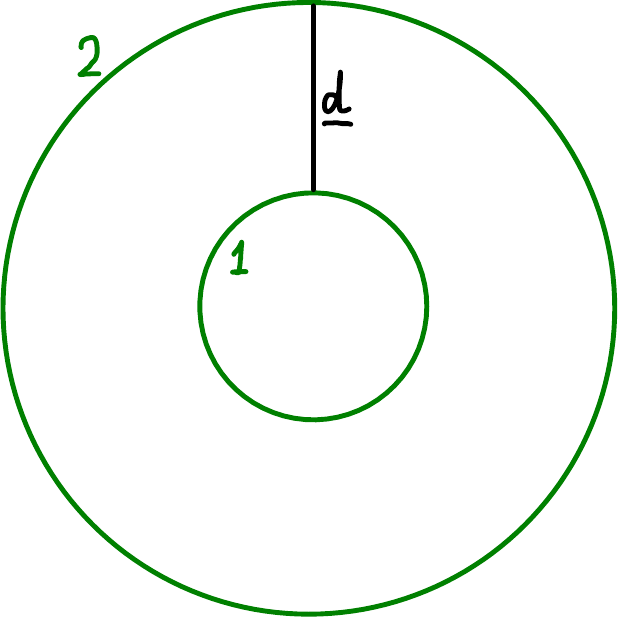}}~\Bigg)((e^{2 \pi i J} \psi) \otimes \xi)
=
Q\Bigg(~\raisebox{-0.5\height}{\includegraphics[scale=0.3]{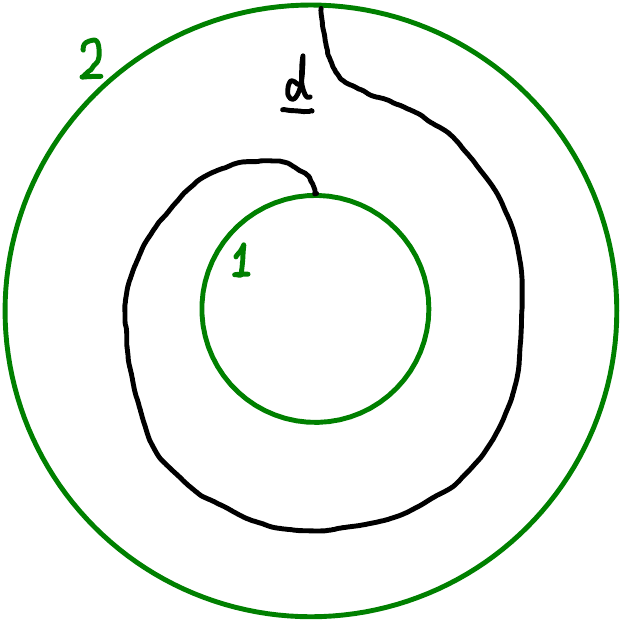}}~\Bigg)(\psi \otimes \xi) \ .
\end{equation}
\end{quote}
In particular, if $\underline d=()$, i.e.\ if $\underline d$ is the trivial defect, then $e^{2 \pi i J}$ is the identity map. 
Furthermore, $e^{2 \pi i J}=\id$ on the subspace of topological junctions -- this relates to pivotality of $\mathcal{D}$.
In terms of the assignment $T$ from Assumption 4, \eqref{eq:ass-rotation-action} can be written as
\begin{equation}
	T[ \id_{\underline d \otimes \underline d} ] = e^{2 \pi i J} \ .
\end{equation}

\begin{lemma}\label{lem:dehn-twist-vs-N}
On $\mathcal{H}^{NS}$ we have $e^{2 \pi i J} = N|_{NS}$. On $\mathcal{H}^{R}$ we have $e^{2 \pi i J} = \id$.
\end{lemma}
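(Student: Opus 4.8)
The plan is to convert the geometric statement into two operator identities on $\mathcal{H}_A$ and then verify these with the $T$-calculus of Lemma~\ref{lem:T[]-properties}. Since $\mathcal{H}^\delta = \mathrm{im}(P^\delta)$ and $P^\delta$ acts as the identity there, and since $N_\mathcal{H}$ preserves $\mathcal{H}^{NS}$ by Lemma~\ref{lem:N+P-properties}\,(3), it suffices to prove
\[
 e^{2\pi i J}\circ P^{NS} = N_\mathcal{H}\circ P^{NS} \ , \qquad e^{2\pi i J}\circ P^{R} = P^{R}
\]
as endomorphisms of $\mathcal{H}_A$: restricting the first to $\mathcal{H}^{NS}$ then gives, for $\phi\in\mathcal{H}^{NS}$, $e^{2\pi iJ}\phi = e^{2\pi iJ}P^{NS}\phi = N_\mathcal{H}P^{NS}\phi = N_\mathcal{H}\phi = N|_{NS}\phi$, while restricting the second to $\mathcal{H}^R$ gives $e^{2\pi iJ}\phi = e^{2\pi iJ}P^R\phi = P^R\phi = \phi$. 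By Assumption~5 applied to $\underline d = (A,+)$ we have $e^{2\pi iJ} = T[\id_{A\otimes A}]$, and by definition $P^{NS} = T[\Delta\circ\mu\circ(N\otimes\id_A)]$, $P^R = T[\Delta\circ\mu]$ and $N_\mathcal{H} = T[N]$, so everything is expressed through $T[\,\cdot\,]$.

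First I would compose the two $T$'s. Merging $T[\id_{A\otimes A}]$ with $T[g^\delta]$ (where $g^{NS} = \Delta\circ\mu\circ(N\otimes\id_A)$ and $g^{R} = \Delta\circ\mu$) via Lemma~\ref{lem:T[]-properties}\,(2) produces a single $T[\,\cdot\,]$ whose wrapping defect is now $A\otimes A$, i.e. the extra full wrap contributed by the Dehn twist sits alongside the wrapping strand already present in $P^\delta$. I would then collapse this doubled wrapping back to a single $A$-strand by sliding morphisms around the wrapping defect with the cyclicity of Lemma~\ref{lem:T[]-properties}\,(1), and simplifying the resulting string diagram with $\Delta$-separability ($\mu\circ\Delta = \id_A$), the Frobenius relation, (co)associativity, and the fact that $N$ is an algebra and coalgebra automorphism. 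In the Ramond case the extra wrap is absorbed completely and the network reduces to the one defining $P^R$, giving $e^{2\pi iJ}\circ P^R = P^R$. In the Neveu--Schwarz case the $N$-twisted junction obstructs the cancellation by exactly one factor, and the diagram reduces to $T[N]$ precomposed with the $P^{NS}$-network, i.e. $e^{2\pi iJ}\circ P^{NS} = N_\mathcal{H}\circ P^{NS}$.

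Conceptually, the surviving factor of $N$ is the statement that a full $2\pi$ rotation of the $A$-strand, transported through the Frobenius pairing by the pivotal structure, is precisely the Nakayama automorphism (compare~\eqref{eq:Nakayama-def}); the difference between the two sectors is then just the difference between $c_{+1}$ and $c_{-1}$, i.e. the presence or absence of $N$ in $P^{NS}$ versus $P^R$, which in the combinatorial model reflects the different edge-sign/lifting behaviour of the generating loop for the two boundary types. A useful consistency check is to apply the twist twice: since $N_\mathcal{H}^2 = \id$ by Lemma~\ref{lem:N+P-properties}\,(1), both identities yield $(e^{2\pi iJ})^2 = \id$ on $\mathcal{H}^{NS}$ and on $\mathcal{H}^R$, matching the fact that a double Dehn twist ($4\pi$ rotation) lifts to a contractible loop in the spin bundle and therefore acts trivially --- this is exactly the point where the hypothesis $N^2=\id$ is indispensable. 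The main obstacle is the middle step: carrying out the string-diagram manipulation that collapses the doubled wrapping defect and verifying that precisely one $N$ survives in the $NS$ case and none in the $R$ case.
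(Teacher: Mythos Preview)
Your approach is correct and genuinely different from the paper's. The paper proves the lemma geometrically: it takes the explicit marked triangulation with edge signs of the spin cylinder, realises the Dehn twist as a deformation of the associated defect network, and then uses three iterations of a $2\pi/3$-shift combined with Pachner 2--2 moves to bring the triangulation back to its original form while tracking the change in edge signs. The upshot is that the sign on one boundary edge gets multiplied by $\nu$ ($\nu=-1$ for $NS$, $\nu=+1$ for $R$), which is exactly the insertion of $N$ in the $NS$ case and nothing in the $R$ case.

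Your route bypasses the combinatorial model entirely and works directly with the operator identities $e^{2\pi iJ}\circ P^\delta$ on $\mathcal{H}_A$ via the $T$-calculus. This is more elementary in that it does not invoke moves (1)--(5) or the triangulated cylinder; it only needs Assumption~5, Lemma~\ref{lem:T[]-properties}, and the Frobenius axioms. The ``main obstacle'' you flag is in fact short: after one application of Lemma~\ref{lem:T[]-properties}\,(2) and one cycling of $\Delta$ via Lemma~\ref{lem:T[]-properties}\,(1), you obtain $T\big[(\id_A\otimes\mu)\circ(\id_A\otimes N_\nu\otimes\id_A)\circ(\Delta\otimes\id_A)\big]$ with wrapping $A$. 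For $\nu=-1$ (i.e.\ $N_\nu=\id_A$) the Frobenius relation collapses this to $T[\Delta\circ\mu]=P^R$. For $\nu=+1$, use $(\id_A\otimes N)\circ\Delta = (N\otimes\id_A)\circ\Delta\circ N$ (from $N$ being a coalgebra automorphism together with $N^2=\id_A$) and then the Frobenius relation to rewrite the argument as $(N\otimes\id_A)\circ\Delta\circ\mu\circ(N\otimes\id_A)$, which is exactly $N_\mathcal{H}\circ P^{NS}$. So the manipulation goes through cleanly. What the paper's approach buys is a transparent geometric picture tying the result to the lifting behaviour of the boundary loop in the spin bundle; what your approach buys is independence from the combinatorial encoding of the spin structure.
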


\begin{proof}
Geometrically, this is just the statement that a full rotation amounts to sheet exchange in the boundary parametrisation for an $NS$-type boundary and to the identity for an $R$-type boundary. In terms of the combinatorial model, this is described as follows. 

Consider the marked triangulation with edge signs of a cylinder as in \cite[Sect.\,4.5]{Novak:2014oca}, 
\begin{equation}\label{eq:dehntwist-aux1}
\raisebox{-0.5\height}{\includegraphics[scale=0.3]{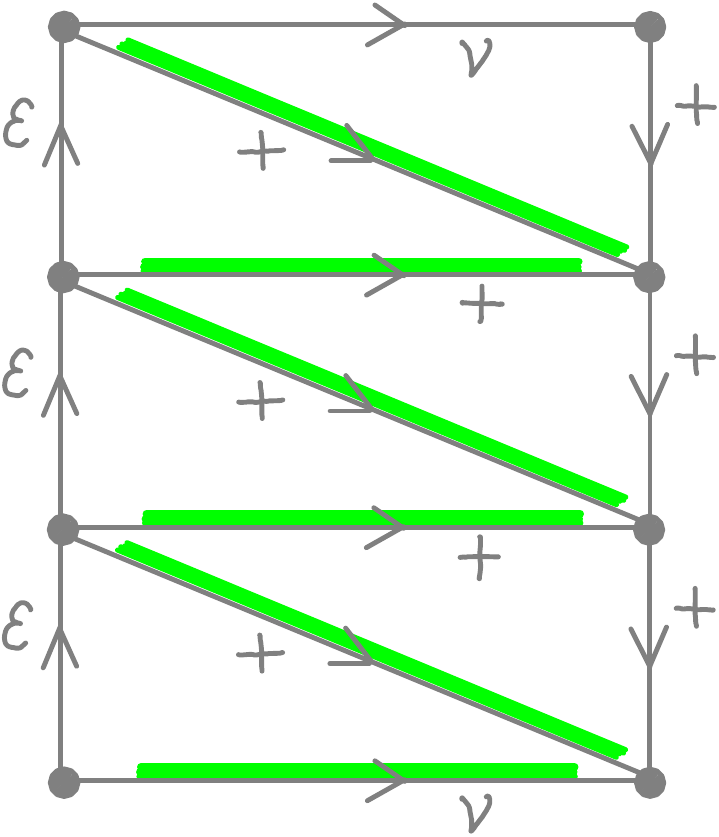}}
\end{equation}
Here, the top and bottom edges are identified, and $\nu=-1$ if both boundaries are of $NS$-type and $\nu=1$ if both boundaries are of $R$-type. According to Assumption 5, acting with $e^{2 \pi i J}$ on the state space amounts to deforming (the defect network corresponding to) the triangulation above to (the defect network corresponding to) the triangulation below:
\begin{equation}\label{eq:dehntwist-aux2}
\raisebox{-0.5\height}{\includegraphics[scale=0.3]{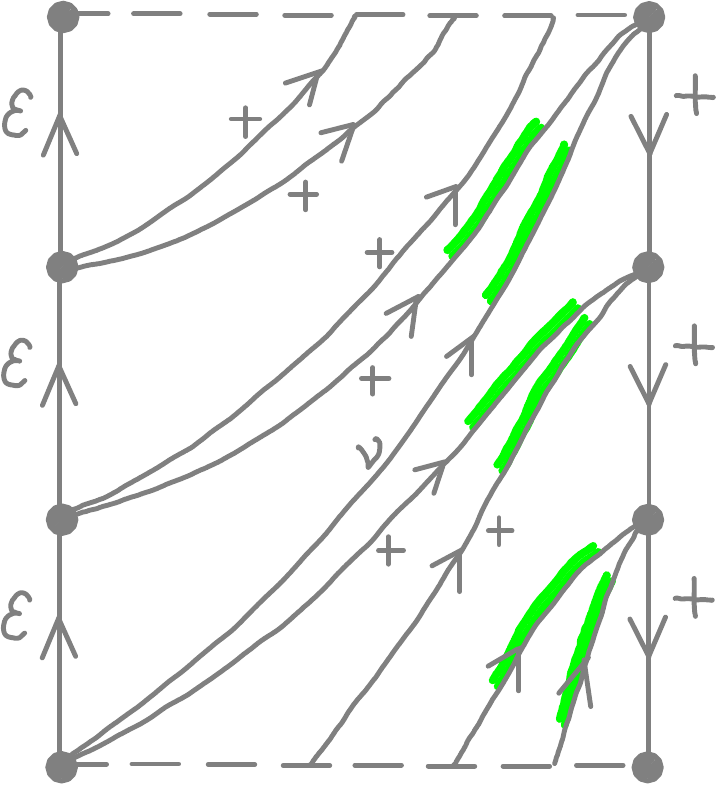}}
\end{equation}
The triangulation can be brought back to the form \eqref{eq:dehntwist-aux1} via 
the moves (1)--(5) listed in Section \ref{sec:spin-surf-to-def-surf}. One way to do this explicitly is to iterate the following transformation three times:
\begin{equation}
\raisebox{-0.5\height}{\includegraphics[scale=0.3]{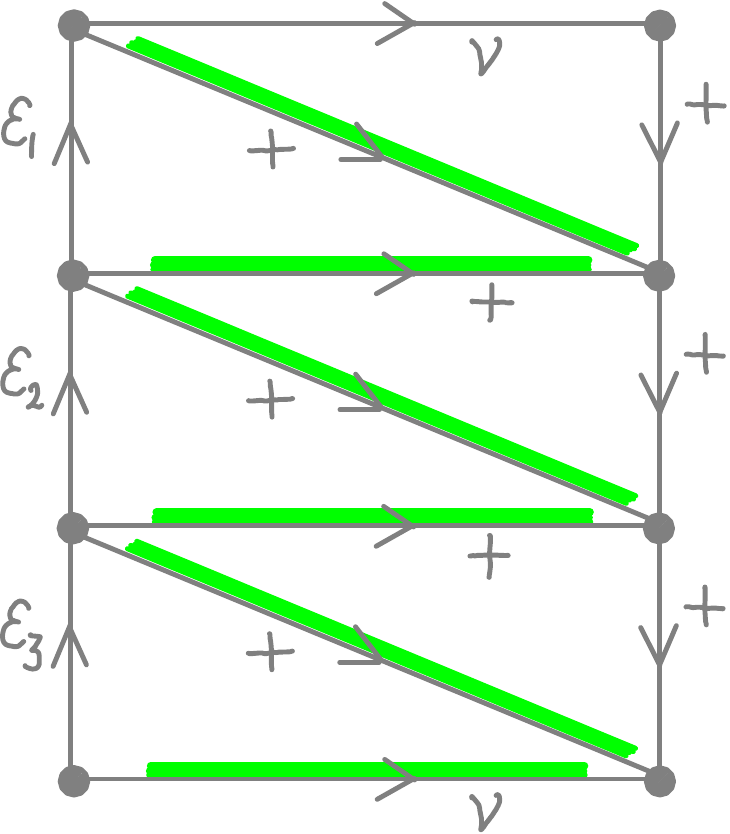}}
\overset{\text{$\frac13$ shift}}
\longmapsto
\raisebox{-0.5\height}{\includegraphics[scale=0.3]{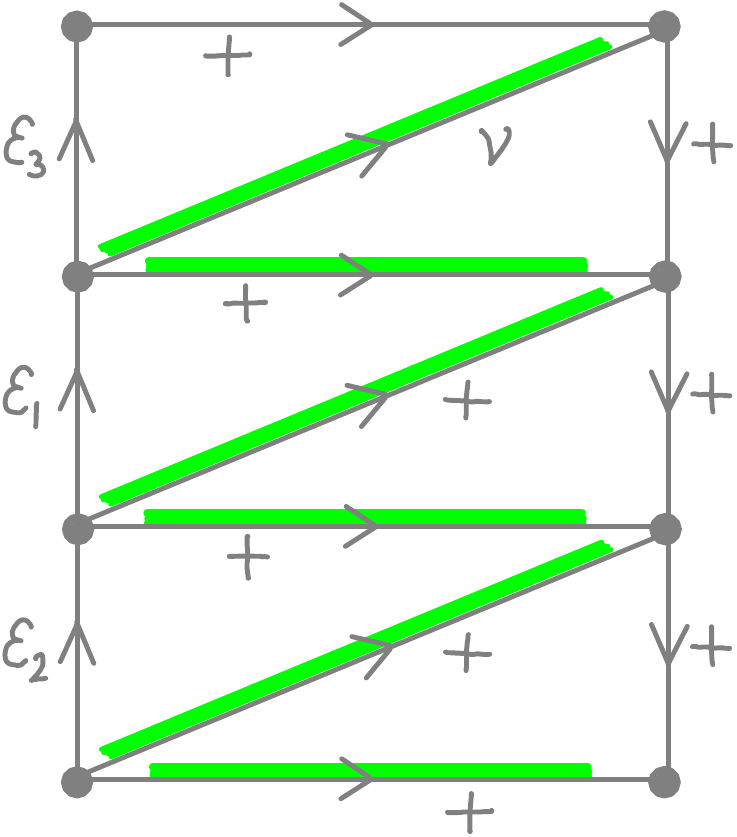}}
\overset{(*)}
\cong
\raisebox{-0.5\height}{\includegraphics[scale=0.3]{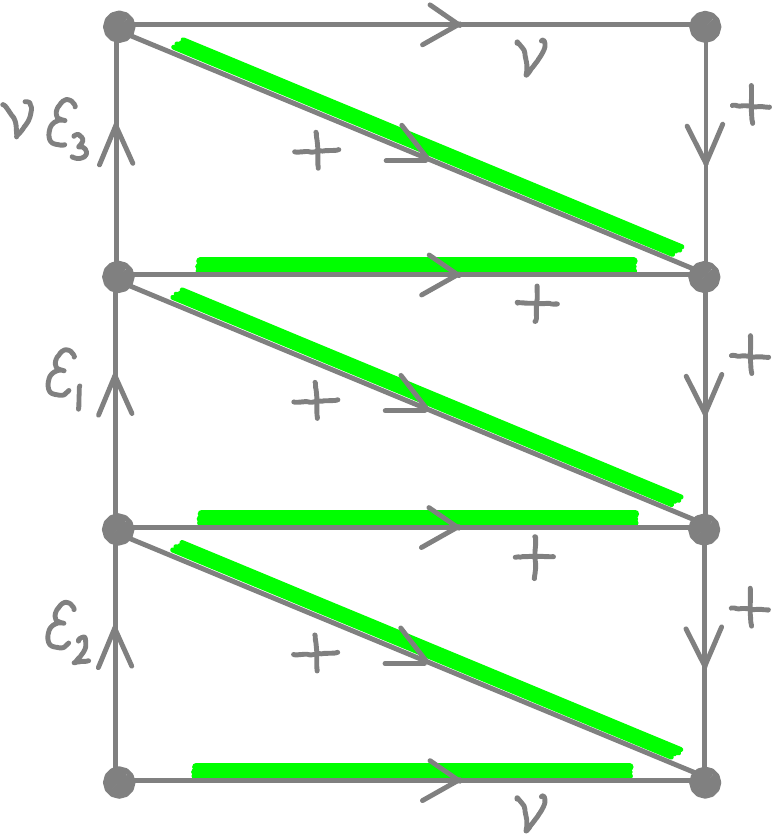}}
\ .
\end{equation}
By ``$\frac13$ shift'' we mean rotating the boundary circle by $2 \pi /3$, which deforms the triangulation as shown. In (*) the variant of the 2-2 Pachner move from 
\cite[Prop.\,3.19]{Novak:2014oca}
shown below is used three times, and move (2) ``leave exchange'' is applied to one of the triangles:
\begin{equation}
\raisebox{-0.5\height}{\includegraphics[scale=0.3]{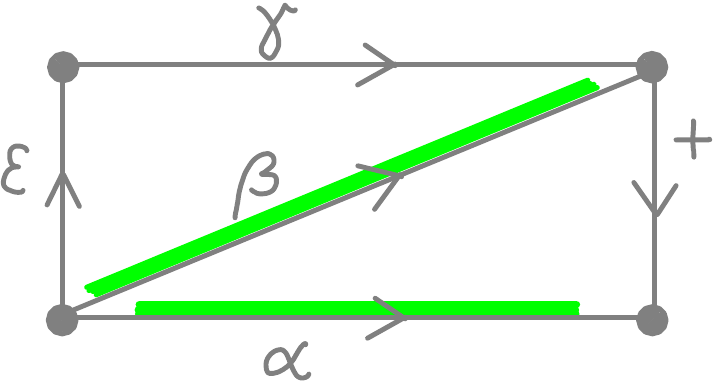}}
\quad \longleftrightarrow \quad
\raisebox{-0.5\height}{\includegraphics[scale=0.3]{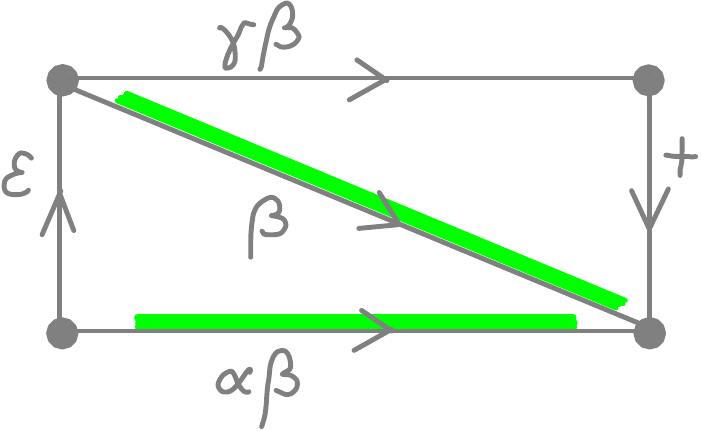}}
\ .
\end{equation}
The overall effect of iterating the above step three times is to replace $\varepsilon$ by $\nu \varepsilon$. This proves the statement of the lemma.
\end{proof}

\section{Application to two-dimensional rational conformal field theory}\label{sec:RCFT}

In this section we apply the general considerations above to two-dimensional rational conformal field theories (CFTs). In these theories one has good control over the monoidal category of topological defects which preserve the rational symmetry. This allows one to make the constructions of Section \ref{sec:amp-spin-def} explicit.

\subsection{Topological defects in rational conformal field theory}

Fix a chiral symmetry algebra, that is, a vertex operator algebra $\mathcal{V}$ (which is not $\mathbb{Z}_2$-graded, i.e.\ not a vertex operator super algebra). We require $\mathcal{V}$ to be rational in the sense that its representation category $\Rep\mathcal{V}$ is modular \cite{Moore:1988qv,Huang2005}. This means that $\Rep\mathcal{V}$ is a $\mathbb{C}$-linear finitely semi-simple abelian ribbon category with simple tensor unit and with a non-degeneracy condition on the braiding, see e.g.\ \cite{baki} for details.

We will say that a CFT {\em has symmetry} $\mathcal{V} \otimes_{\mathbb{C}} \bar{\mathcal{V}}$ if a copy of $\mathcal{V}$ is contained in the holomorphic fields and in the anti-holomorphic fields of the CFT. The overline in the notation $\mathcal{V} \otimes_{\mathbb{C}} \bar{\mathcal{V}}$ signifies that one copy of $\mathcal{V}$ is embedded in the anti-holomorphic fields.

One approach to study the properties of rational CFT is via the relation to three-dimensional topological field theories \cite{Felder:1999mq,tft1}. 
In this approach, CFTs with symmetry $\mathcal{V} \otimes_{\mathbb{C}} \bar{\mathcal{V}}$ which contain the vacuum representation with multiplicity one are classified by Morita classes of simple symmetric $\Delta$-separable Frobenius algebras in $\Rep\mathcal{V}$, see \cite{Fjelstad:2006aw} and \cite[Thm.\,1.1]{Kong:2007yv}, \cite[Thm.\,3.22]{Kong:2008ci}. 

The three-dimensional TFT approach to CFT with defects has been developed in detail in \cite{frohlich2009defect,fjelstad2012rcft}. Let $F$ be an algebra as above.
Topological defects of the CFT described by $F$ which are transparent not only to the stress tensor, but to all fields in $\mathcal{V} \otimes_{\mathbb{C}} \bar{\mathcal{V}}$, have an easy representation theoretic interpretation: the defect category $\mathcal{D}$ for such defects is the monoidal category of $F$-$F$-bimodules in $\Rep\mathcal{V}$ together with bimodule intertwiners. 

\medskip

The situation simplifies significantly if one chooses $F=\mathbf{1}:= \mathcal{V}$, the monoidal unit in $\Rep\mathcal{V}$. This is known as the {\em Cardy case}. Then simply $\mathcal{D} =\Rep\mathcal{V}$, so that in this special case $\mathcal{D}$ is equipped with a braiding and a twist, i.e.\ it is a ribbon category. Let us describe the state spaces in this simplified setting (see \cite{frohlich2009defect} for the general case). Denote by $\{S_i \,|\, i\,{\in}\,\mathcal{I}\}$ a choice of representatives of the isomorphism classes of simple objects in $\Rep\mathcal{V}$.  For $X \in \mathcal{D}$ we have
\begin{equation}\label{eq:state-space-ungraded-Cardy}
	\mathcal{H}_{X} \equiv \mathcal{H}_{(X,+)} = \bigoplus_{i,j \in \mathcal{I}} 
	\mathcal{D}(S_i \otimes S_j,X) \otimes_{\mathbb{C}} S_i \otimes_\mathbb{C} \bar S_j \ .
\end{equation}
Let us explain this equation in more detail. The left hand side, as always, stands for the state space of a boundary circle with a defect line labelled $X$ which is oriented away from the boundary circle. On the right hand side, $S_i \otimes_\mathbb{C} \bar S_j$ is an irreducible representation of the holomorphic and anti-holomorphic symmetry algebra $\mathcal{V} \otimes_{\mathbb{C}} \bar{\mathcal{V}}$. 
Finally, $\mathcal{D}(S_i \otimes S_j,X)$ is the $\Cb$-vector space of intertwiners from $S_i \otimes S_j$ to $X$ in $\mathcal{D}=\Rep\mathcal{V}$; the ``$\otimes$'' in $S_i \otimes S_j$ stands from the fusion tensor product of $\mathcal{V}$-modules.

As a special case, consider the defect labelled by the monoidal unit $\mathbf{1}$. This is the trivial defect, and so $\mathcal{H}_\one$ is the space of bulk fields. Indeed, inserting $X=\one$ in the above formula produces $\bigoplus_{i \in \mathcal{I}} S_i \otimes_{\mathbb{C}} \bar S_{i^*}$, where $i^* \in \mathcal{I}$ is the unique index such that $S_{i^*}$ is isomorphic to the dual $(S_i)^*$. This is the space of bulk fields in the Cardy case.

We will write $Q_{\mathcal{V}}$ for the Cardy case conformal field theory for the symmetry $\mathcal{V} \otimes_{\mathbb{C}} \bar{\mathcal{V}}$, and $\mathcal{D}_{\mathcal{V}} = \Rep\mathcal{V}$ for its symmetry-preserving topological defects. The QFT $Q_{\mathcal{V}}$ satisfies {\bf Q1}--{\bf Q3}, and {\bf Q4} for surfaces of genus 0 and 1,\footnote{
More precisely, these results should follow by extending the methods of \cite{Huang:2005gz,Kong2006c} to the present setting with defects by combining it with the TFT description in  \cite{frohlich2009defect}. However, the details have so far not been worked out.
} 
and for all surfaces if suitable factorisation and monodromy properties of the conformal blocks are assumed \cite{frohlich2009defect,fjelstad2012rcft}. $Q_{\mathcal{V}}$ also satisfies Assumptions 3 and 4.

\medskip

The assignment $T$ from Assumption 4 can easily be given explicitly. Recall the cylinders in \eqref{eq:cylinder-junction-to-map} and the definition of $T[f] : \mathcal{H}_X \to \mathcal{H}_Y$ in \eqref{eq:T[f]-def}.
Since the defect conditions in $\mathcal{D}_\mathcal{V}$ describe defects transparent to $\mathcal{V} \otimes_{\mathbb{C}} \bar{\mathcal{V}}$, $T[f]$ must be of the form
\begin{equation}\label{eq:defect-wrap-operator-ungraded}
	T[f] = \bigoplus_{i,j \in \mathcal{I}} 
	T[f]_{ij} \otimes_{\mathbb{C}} \id_{S_i \otimes_\mathbb{C} \bar S_j} \ ,
\end{equation}
where $T[f]_{ij}$ is a linear map from $\mathcal{D}_\mathcal{V}(S_i \otimes S_j,X)$ to $\mathcal{D}_\mathcal{V}(S_i \otimes S_j,Y)$. The map $T[f]_{ij}$ can be computed via the methods in \cite{frohlich2009defect} to be, for $h \in \mathcal{D}_\mathcal{V}(S_i \otimes S_j,X)$,
\begin{equation}\label{eq:ungraded-defect-operator-component}
	T[f]_{ij}(h) = \raisebox{-0.5\height}{\includegraphics[scale=0.4]{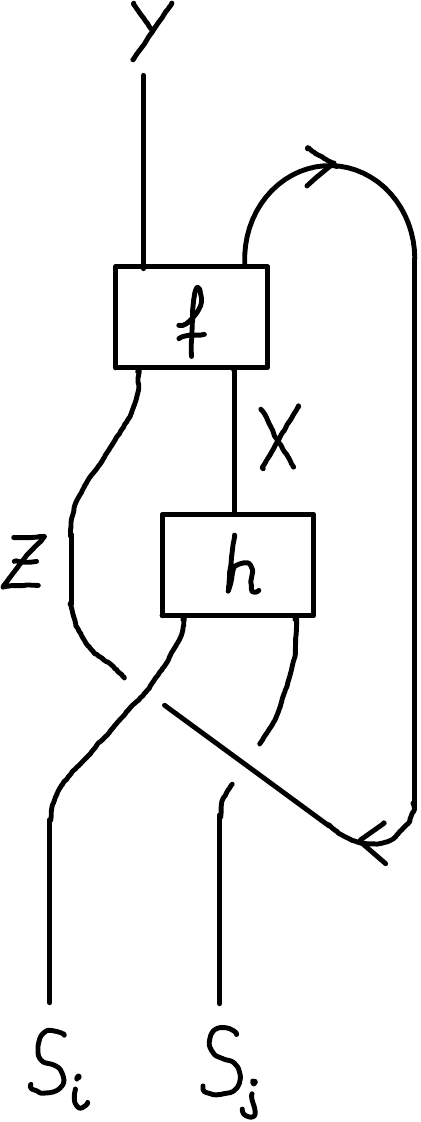}} \quad .
\end{equation}
The diagram describes the morphism $T[f]_{ij}(h) : S_i \otimes S_j \to Y$ via the standard graphical notation in ribbon categories, see e.g.\ \cite{baki} or see \cite[Sect.\,2.1]{tft1} for our precise conventions.

\subsection{Spin theories from defects in the graded case}\label{sec:spin-def-graded}

We will now describe conformal field theories on spin surfaces starting from the product defect category $\mathcal{D} \boxtimes \mathcal{D}^{SV}$. Here, $\mathcal{D}$ is the category of $F$-$F$-bimodules in $\Rep\mathcal{V}$, as described above. According to Section \ref{sec:alg-desc}, to describe a ($\mathbb{Z}_2$-graded) CFT on spin surfaces we need to identify a $\Delta$-separable Frobenius algebra $A$ in $\mathcal{D} \boxtimes \mathcal{D}^{SV}$ whose Nakayama automorphism is an involution.

By an argument similar to the ``orbifolding twice can be combined into a single orbifold'' statement in \cite[Sect.\,4.2]{carqueville2012orbifold}, one can check that without loss of generality one may choose $F = \one$, since the choice of $F$ can be absorbed into the choice of $A$. 
Thus we choose as our starting point the product of $Q_{\mathcal{V}}$, the Cardy case conformal field theory for the symmetry $\mathcal{V} \otimes_{\mathbb{C}} \bar{\mathcal{V}}$,
with the topological theory $Q^{SV}$. We write $\hat Q$ and $\hat{\mathcal{H}}$ for the amplitudes and state spaces of the product theory.
The defect category of the product theory is
\begin{equation}
\hat{\mathcal{D}} := \mathcal{D}_{\mathcal{V}} \boxtimes \mathcal{D}^{SV} = \Rep\mathcal{V} \boxtimes \mathbf{SVect}^{fd} \ .
\end{equation}
The category $\hat{\mathcal{D}}$ is a $\mathbb{C}$-linear additive pivotal monoidal category as we required in Section \ref{sec:def-surf+ampl}. Since we are in the Cardy case,  $\hat{\mathcal{D}}$ is in addition ribbon (i.e.\ braided with a twist). Our choice of ribbon structure on $\mathbf{SVect}^{fd}$ is such that $\theta_V = \id_V$ for all $V \in \mathbf{SVect}^{fd}$.

For the simple objects in $\mathbf{SVect}^{fd}$ we choose the representatives
\begin{equation}
	K_+ := \mathbb{C}^{1|0} \quad , \qquad
	K_- := \mathbb{C}^{0|1} \ .
\end{equation}
We will use the following notations for objects in $\hat{\mathcal{D}}$: An object which is just the product of two factors (rather than a direct sum of such) is written as $U \boxtimes V$, $U \in \Rep\mathcal{V}$, $V \in \mathbf{SVect}^{fd}$; for $V$ being $K_\pm$ we use the shorthands $U := U \boxtimes K_+$ and $\Pi U := U \boxtimes K_-$. We will also write $K_\pm := \one \boxtimes K_\pm$.
We choose $\{ S_i , \Pi S_i | i \in \mathcal{I} \}$ as representatives of the isomorphism classes of simple objects in $\hat{\mathcal{D}}$.

As before, $\Pi$ also denotes the parity flip on $\mathbf{SVect}$. For example, in the following lemma ``$\Pi(S_i \otimes S_j)$'' stands for the object $(S_i \otimes S_j) \boxtimes K_-$ of $\hat{\mathcal{D}}$, while $\Pi(S_i \otimes_\mathbb{C} \bar S_j)$ stands for the super vector space with $\{0\}$ as even component and the irreducible $\mathcal{V} \otimes_{\mathbb{C}} \bar{\mathcal{V}}$-module $S_i \otimes_\mathbb{C} \bar S_j$ as odd component.

\begin{lemma}
Given the state spaces \eqref{eq:state-space-ungraded-Cardy} of the Cardy case CFT, the state spaces of the product theory are, for $X \in \hat{\mathcal{D}}$,
\begin{align}
	\hat{\mathcal{H}}_X 
	\cong
	\bigoplus_{i,j \in \mathcal{I}} \Big(&
	\hat{\mathcal{D}}(S_i \otimes S_j,X) \otimes_{\mathbb{C}} S_i \otimes_\mathbb{C} \bar S_j 
\\[-1em]
	& \hspace{1em} \oplus ~
	\hat{\mathcal{D}}(\Pi(S_i \otimes S_j),X) \otimes_{\mathbb{C}} \Pi(S_i \otimes_\mathbb{C} \bar S_j) \Big) \ .
\nonumber
\end{align}
\end{lemma}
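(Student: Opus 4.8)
The plan is to reduce to the case of a pure product label $X = U \boxtimes V$ with $U \in \Rep\mathcal{V}$ and $V \in \mathbf{SVect}^{fd}$, to compute $\hat{\mathcal{H}}_X$ directly, and then to rewrite the answer in the stated form. Both sides of the claimed isomorphism are additive functors of $X$: the right hand side because each $\hat{\mathcal{D}}(S_i \otimes S_j,-)$ and $\hat{\mathcal{D}}(\Pi(S_i \otimes S_j),-)$ is additive and tensoring over $\Cb$ with a fixed super vector space preserves additivity; the left hand side because, by $\Cb$-linearity of the defect theory, a defect labelled by a direct sum is assigned the direct sum of the corresponding state spaces. It therefore suffices to establish the isomorphism for $X = U \boxtimes V$, the general case following by taking direct sums over the summands of $X$.

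For such $X$, I would first invoke the definition of the product theory in Section \ref{sec:QxSV}, which gives $\hat{\mathcal{H}}_{U \boxtimes V} \cong \mathcal{H}_{(U,+)} \otimes_\Cb \mathcal{H}^{SV}((V,+))$, with $\mathcal{H}_{(U,+)}$ the Cardy-case state space \eqref{eq:state-space-ungraded-Cardy} and $\mathcal{H}^{SV}((V,+)) = V$. Because all state spaces of $Q_\mathcal{V}$ are purely even by hypothesis, the entire $\Cb$-grading of $\hat{\mathcal{H}}_{U \boxtimes V}$ is carried by the factor $V$; writing $V = V_0 \oplus V_1$ for its even/odd decomposition yields an even part $\big(\bigoplus_{i,j}\mathcal{D}_\mathcal{V}(S_i \otimes S_j,U)\otimes_\Cb S_i \otimes_\Cb \bar S_j\big)\otimes_\Cb V_0$ and an odd part obtained by replacing $V_0$ with $V_1$. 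Matching this with the claimed expression then rests on two elementary facts: by definition of the product of $\Cb$-linear categories, $\hat{\mathcal{D}}((S_i \otimes S_j)\boxtimes K_\pm,\, U \boxtimes V) \cong \mathcal{D}_\mathcal{V}(S_i \otimes S_j,U)\otimes_\Cb \mathbf{SVect}^{fd}(K_\pm,V)$; and since morphisms in $\mathbf{SVect}^{fd}$ are even linear maps, $\mathbf{SVect}^{fd}(K_+,V)\cong V_0$ while $\mathbf{SVect}^{fd}(K_-,V)\cong V_1$. Substituting these identifications, the even sector $\hat{\mathcal{D}}((S_i\otimes S_j)\boxtimes K_+,X)\otimes_\Cb (S_i\otimes_\Cb \bar S_j)$ reproduces the even part above, and the odd sector $\hat{\mathcal{D}}((S_i\otimes S_j)\boxtimes K_-,X)\otimes_\Cb \Pi(S_i\otimes_\Cb \bar S_j)$ reproduces the odd part, the shift $\Pi$ placing $S_i \otimes_\Cb \bar S_j$ in the odd degree supplied by $V_1$.

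I do not expect a genuine obstacle here; the computation is routine once the definitions are unwound. The one point demanding care, and what I regard as the actual content, is the parity bookkeeping: one must verify that the odd degree of $\hat{\mathcal{H}}_{U \boxtimes V}$ is simultaneously recorded on the label side, through the $K_-$ summand of the morphism space, and on the module side, through the shift $\Pi$ on $S_i \otimes_\Cb \bar S_j$, and that these are identified by a single isomorphism rather than counted twice. Since the statement asserts only an isomorphism of super vector spaces, the reorderings of tensor factors needed to bring the direct computation into the stated shape introduce no difficulty, being implemented by the symmetric braiding of $\mathbf{SVect}$; tracking the Koszul signs is therefore unnecessary for the conclusion.
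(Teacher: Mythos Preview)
Your proposal is correct and follows essentially the same route as the paper: reduce by additivity to pure product labels, invoke the definition of the product state space from Section~\ref{sec:QxSV}, and then identify the $\hat{\mathcal{D}}$-Hom spaces with those in the two factors. The only cosmetic difference is that the paper writes a general $X$ directly as $U \oplus \Pi V$ with $U,V \in \mathcal{D}_\mathcal{V}$, rather than first passing to $U \boxtimes V$ for arbitrary $V$ and then splitting $V = V_0 \oplus V_1$; the content is the same, and your parity bookkeeping concern is handled exactly as you describe.
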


\begin{proof}
Any $X \in \hat{\mathcal{D}}$ is isomorphic to $U \oplus \Pi V$ for some $U,V \in \mathcal{D}_\mathcal{V}$. Then also 
$\hat{\mathcal{H}}_X \cong \hat{\mathcal{H}}_U \oplus \hat{\mathcal{H}}_{\Pi V}$.
 By definition of the product theory, $\hat{\mathcal{H}}_U = \mathcal{H}_U \otimes_{\mathbb{C}} \mathbb{C}^{1|0}$ and $\hat{\mathcal{H}}_{\Pi V} = \mathcal{H}_V \otimes_{\mathbb{C}} \mathbb{C}^{0|1}$. Since for $X =U \oplus \Pi V$ we have $\hat{\mathcal{D}}(S_i \otimes S_j,X) = \mathcal{D}_\mathcal{V}(S_i \otimes S_j,U)$ and 
$\hat{\mathcal{D}}(\Pi(S_i \otimes S_j),X) = \mathcal{D}_\mathcal{V}(S_i \otimes S_j,V)$ we obtain the statement of the lemma.
\end{proof}

The action \eqref{eq:defect-wrap-operator-ungraded} of defects on state spaces is transported to the graded case as follows.

\begin{lemma}\label{lem:Dhat-via-stringdiag}
Let $X,Y,Z \in \hat{\mathcal{D}}$ and $f : Z \otimes X \to Y \otimes Z$. Then $T[f] : \hat{\mathcal{H}}_X \to \hat{\mathcal{H}}_Y$ is given by
\begin{equation}\label{eq:defect-wrap-operator-graded}
	T[f] = \bigoplus_{i,j \in \mathcal{I}} \Big(
	T[f]^+_{ij} \!\otimes_{\mathbb{C}} \id_{S_i \otimes_\mathbb{C} \bar S_j} 
	~\oplus~
	T[f]^-_{ij} \!\otimes_{\mathbb{C}} \id_{\Pi(S_i \otimes_\mathbb{C} \bar S_j)} \Big) \ ,
\end{equation}
where $T[f]^\alpha_{ij} : \hat{\mathcal{D}}\big((S_i \otimes S_j) \boxtimes K_\alpha,X\big) \to  \hat{\mathcal{D}}\big((S_i \otimes S_j) \boxtimes K_\alpha,Y\big)$ is
\begin{equation}\label{eq:graded-defect-operator-component}
	T[f]^\alpha_{ij}(h) = \raisebox{-0.5\height}{\includegraphics[scale=0.4]{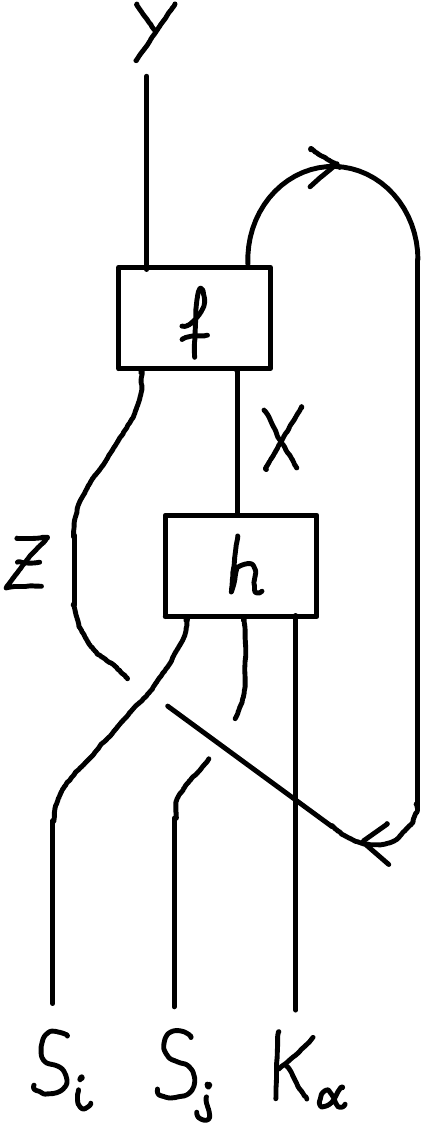}} \qquad ; ~~ \alpha = \pm1 \ .
\end{equation}
Since $K_\pm$ is transparent in $\hat{\mathcal{D}}$, we do not distinguish over- and under-crossings for strings labelled $K_\pm$.
\end{lemma}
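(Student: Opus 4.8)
The plan is to reduce the computation to the ungraded formula \eqref{eq:defect-wrap-operator-ungraded}--\eqref{eq:ungraded-defect-operator-component} for $Q_{\mathcal{V}}$ together with the elementary computation of the analogous operator for the trivial super vector space theory $Q^{SV}$, and then to recombine the two factors into a single wrapping string diagram in the ribbon category $\hat{\mathcal{D}}$. First I would record that $T[f]$ is linear in $f$: the amplitude $\hat Q(C^f_{X,Y,Z})$ depends linearly on the junction field $f$ inserted at the wrapping point, so by the defining equation \eqref{eq:T[f]-def} and the uniqueness of $T[f]$ (non-degeneracy, Assumption 3) the assignment $f \mapsto T[f]$ is linear. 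Since every morphism in $\hat{\mathcal{D}} = \mathcal{D}_{\mathcal{V}} \boxtimes \mathcal{D}^{SV}$ is a finite sum of factorised morphisms $f_1 \otimes f_2$ with $f_1$ in $\mathcal{D}_{\mathcal{V}}$ and $f_2$ in $\mathcal{D}^{SV}$, it suffices to prove the formula for $f = f_1 \otimes f_2$, where $X = X_1 \boxtimes X_2$, $Y = Y_1 \boxtimes Y_2$, $Z = Z_1 \boxtimes Z_2$.

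Next I would use the definition of the product amplitude $\hat Q(\Sigma^d) = (Q_{\mathcal{V}}(\Sigma^d) \otimes Q^{SV}(\Sigma^d)) \circ P$ from Section \ref{sec:QxSV}. Evaluating \eqref{eq:T[f]-def} on the cylinders $C^{f_1 \otimes f_2}_{X,Y,Z}$ and $C_Y$, and using that the state spaces factorise as $\hat{\mathcal{H}}_X = \mathcal{H}^{Q_{\mathcal{V}}}_{X_1} \otimes \mathcal{H}^{SV}_{X_2}$, the defining equation for $T[f_1\otimes f_2]$ splits into the product of the two defining equations for $T_{\mathcal{V}}[f_1]$ and $T_{SV}[f_2]$. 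Because all state spaces of $Q_{\mathcal{V}}$ are purely even, the reordering permutation $P$ produces no sign, and one obtains $T[f_1 \otimes f_2] = T_{\mathcal{V}}[f_1] \otimes T_{SV}[f_2]$. For the first factor I insert the known formula \eqref{eq:defect-wrap-operator-ungraded}--\eqref{eq:ungraded-defect-operator-component}. For the second factor I would compute $T_{SV}[f_2]$ directly from the lattice construction of $Q^{SV}$: since $\mathcal{D}^{SV} = \mathbf{SVect}^{fd}$ is rigid and symmetric and its state spaces are the duality-twisted tensor products recalled in Section \ref{sec:QxSV} (cf.\ \eqref{eq:X-vs-PiX}), the operator $T_{SV}[f_2] : X_2 \to Y_2$ is the categorical partial trace of $f_2$ over $Z_2$ in $\mathbf{SVect}^{fd}$, i.e.\ the same wrapping diagram evaluated in super vector spaces with the Koszul signs of the symmetric braiding.

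Finally I would reassemble. On the summand of $\hat{\mathcal{H}}_X$ indexed by $(i,j,\alpha)$, namely $\hat{\mathcal{D}}\big((S_i\otimes S_j)\boxtimes K_\alpha,\,X\big)\otimes_{\mathbb{C}}(S_i\otimes_{\mathbb{C}}\bar S_j)$ with parity dictated by $\alpha$, the factorised operator $T_{\mathcal{V}}[f_1]\otimes T_{SV}[f_2]$ is the tensor product of two partial traces, one over $Z_1$ in $\Rep\mathcal{V}$ and one over $Z_2$ in $\mathbf{SVect}^{fd}$. Since $\hat{\mathcal{D}}$ is ribbon (being the Cardy case, as noted in the text), these merge into a single partial trace over $Z = Z_1\boxtimes Z_2$, which is exactly the string diagram \eqref{eq:graded-defect-operator-component}. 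The asserted indifference to over- versus under-crossings for $K_\pm$-labelled strings is then the statement that $K_\pm$ is transparent, which holds because the braiding on $\mathbf{SVect}^{fd}$ is symmetric (so $\sigma^{-1}=\sigma$) and $\theta=\id$ for the chosen ribbon structure.

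I expect the main obstacle to be the careful bookkeeping of parity signs in the super vector space factor, and in particular checking that they are absorbed correctly into the grading of the state-space summand, i.e.\ that the label $K_\alpha$ on the wrapping string is correctly correlated with whether the bulk factor is $S_i\otimes_{\mathbb{C}}\bar S_j$ or $\Pi(S_i\otimes_{\mathbb{C}}\bar S_j)$. Once the product decomposition $T[f_1\otimes f_2]=T_{\mathcal{V}}[f_1]\otimes T_{SV}[f_2]$ is established, the identification with the ribbon string diagram is formal; the genuinely delicate point is verifying that no stray Koszul sign appears when the two wrapping operations are fused into the single diagram in $\hat{\mathcal{D}}$.
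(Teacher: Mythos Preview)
Your proposal is correct and follows essentially the same route as the paper: reduce by linearity and additivity to factorised objects and morphisms, use the product-theory definition to split $T[f_1\otimes f_2]$ as $T_{\mathcal{V}}[f_1]\otimes T_{SV}[f_2]$, invoke \eqref{eq:defect-wrap-operator-ungraded}--\eqref{eq:ungraded-defect-operator-component} for the first factor, identify the second factor as the wrapping/partial-trace diagram in $\mathbf{SVect}^{fd}$, and recombine the two into the single ribbon diagram \eqref{eq:graded-defect-operator-component} in $\hat{\mathcal{D}}$. The only cosmetic difference is that the paper spells out the intermediate step of rewriting $T[f^{(2)}]$ on the decomposition $X^{(2)}\cong\bigoplus_\alpha \mathrm{Hom}_{\mathbf{SVect}}(K_\alpha,X^{(2)})\otimes K_\alpha$ before merging, whereas you phrase it directly as a categorical partial trace; both amount to the same computation.
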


\begin{proof}
It is sufficient to prove the formula for $T[f]$ in the case that all objects and morphisms factorise. That is, $X = X^{(1)} \boxtimes X^{(2)}$, with $X^{(1)} \in \Rep\mathcal{V}$ and $X^{(2)} \in \mathbf{SVect}^{fd}$, etc. By definition of the product theory in Section \ref{sec:QxSV}, in this case $\hat{\mathcal{H}}_X = \mathcal{H}_{X^{(1)}} \otimes_{\mathbb{C}} X^{(2)}$ and dito for $\hat{\mathcal{H}}_Y$. 
Let $C := C_{X,Y,Z}^f$ be the cylinder with defect network  in \eqref{eq:cylinder-junction-to-map} (with all labels from $\hat{\mathcal{D}}$), and let $C^{(1)}$ and $C^{(2)}$ be the corresponding cylinders with labels $X^{(1)}, f^{(1)},\dots$ and $X^{(2)}, f^{(2)},\dots$, respectively. 
Then, again by definition of the product theory, $\hat Q(C) = Q_\mathcal{V}(C^{(1)}) \otimes_{\mathbb{C}} Q^{SV}(C^{(2)})$. The induced map $T[f]$ is then equally a tensor product, $T[f] = T[f^{(1)}]  \otimes_{\mathbb{C}} T[f^{(2)}]$, where $T[f^{(1)}]$ is as in \eqref{eq:defect-wrap-operator-ungraded} (with all labels of the form $(-)^{(1)}$). 
To make contact with \eqref{eq:defect-wrap-operator-graded}, we need to rewrite $T[f^{(2)}]$ as an even linear map from $\bigoplus_{\alpha \in \{\pm\}} \mathrm{Hom}_{\mathbf{SVect}}(K_\alpha,X^{(2)}) \otimes_\mathbb{C} K_\alpha$ to $\bigoplus_{\alpha \in \{\pm\}} \mathrm{Hom}_{\mathbf{SVect}}(K_\alpha,Y^{(2)}) \otimes_\mathbb{C} K_\alpha$. That is, we set \begin{equation}
T[f^{(2)}] = \bigoplus_{\alpha \in \{\pm\}} T[f^{(2)}]_\alpha \otimes_\mathbb{C} \id_{K_\alpha} \ ,
\end{equation}
where, for $u \in \mathrm{Hom}_{\mathbf{SVect}}(K_\alpha,X^{(2)})$,
\begin{equation}
	T[f^{(2)}]_\alpha(u) ~=~ \raisebox{-0.5\height}{\includegraphics[scale=0.4]{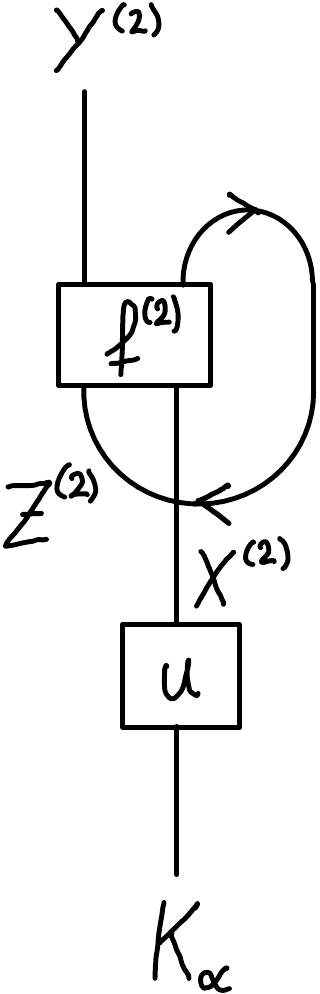}} \ .
\end{equation}
Altogether, for $h \in \hat{\mathcal{D}}\big((S_i \otimes S_j) \boxtimes K_\alpha,X\big)$ of the form $h = h^{(1)} \boxtimes h^{(2)}$,
\begin{align}
	T[f](h^{(1)} \boxtimes h^{(2)}) 
	&= \bigoplus_{i,j\in\mathcal{I}, \alpha \in \{\pm\}}
	T[f^{(1)}]_{ij}(h^{(1)}) \boxtimes T[f^{(2)}]_{\alpha}(h^{(2)})
	 \otimes_\mathbb{C} 
	 \id_{S_i \otimes_\mathbb{C} \overline S_j \otimes_\mathbb{C} K_\alpha} 
	\\ \nonumber 
	&= \bigoplus_{i,j\in\mathcal{I}, \alpha \in \{\pm\}}
	T[f]^\alpha_{ij}(h^{(1)} \boxtimes h^{(2)})
	 \otimes_\mathbb{C} 
	 \id_{S_i \otimes_\mathbb{C} \overline S_j \otimes_\mathbb{C} K_\alpha} \ .
\end{align}
\end{proof}

After these preliminaries, we turn to the analysis of state spaces in the spin theory. Let in the following $A$ be a $\Delta$-separable Frobenius algebra in $\hat{\mathcal{D}}$ whose Nakayama automorphism $N$ is an involution. 
For $\alpha,\nu \in \{ \pm 1\}$ and $i,j \in \mathcal{I}$, define the linear endomorphism
$Q_{\nu}^{i,j,\alpha}$ of $\hat{\mathcal{D}}(S_i \otimes S_j \otimes K_\alpha,A)$ as
\begin{equation}\label{eq:Q-string-diag}
	Q_{\nu}^{i,j,\alpha} = \raisebox{-0.5\height}{\includegraphics[scale=0.4]{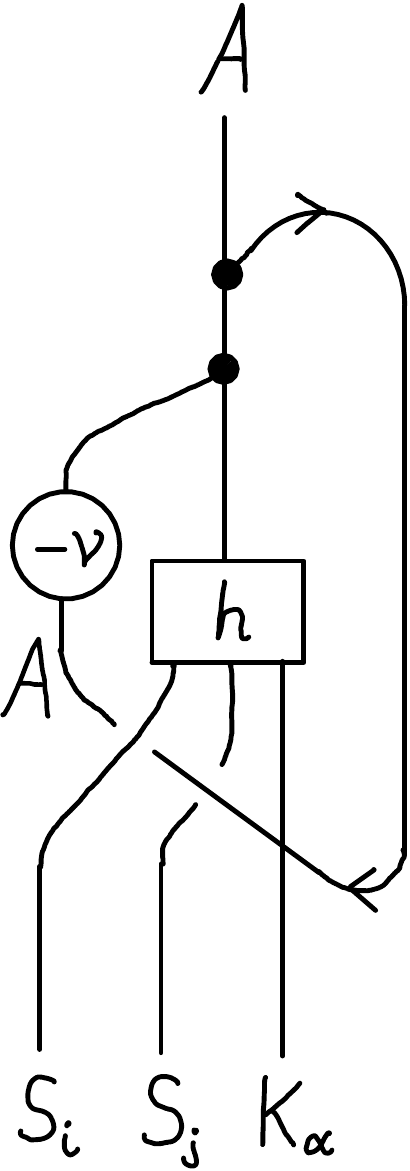}} \quad .
\end{equation}
One checks in a calculation analogous to \eqref{eq:PNS-idem-calc} that $Q_{\nu}^{i,j,\alpha}$ is a projector,
$\big(Q_{\nu}^{i,j,\alpha}\big)^2 = Q_{\nu}^{i,j,\alpha}$.
Applying Lemma \ref{lem:Dhat-via-stringdiag} to the projectors $P^{NS/R}$ on $\hat{\mathcal{H}}_A$ defined in Section \ref{sec:state-spaces} shows that
\begin{equation}\label{eq:PNSR-CFT-graded}
	P^{NS/R} = \bigoplus_{i,j \in \mathcal{I}} \Big(
	Q_{\nu}^{i,j,+} \!\otimes_{\mathbb{C}} \id_{S_i \otimes_\mathbb{C} \bar S_j} 
	~\oplus~
	Q_{\nu}^{i,j,-} \!\otimes_{\mathbb{C}} \id_{\Pi(S_i \otimes_\mathbb{C} \bar S_j)} \Big) \ ,
\end{equation}
where $\nu=+1$ for $P^{NS}$ and $\nu=-1$ for $P^R$. 
Recall that the state spaces in the $NS$- and $R$-sector were defined as $\mathcal{H}^{NS/R} = \mathrm{im}(P^{NS/R}) \subset \hat{\mathcal{H}}_A$. The decomposition of $\mathcal{H}^{NS/R}$ as $\mathbb{Z}_2$-graded $\mathcal{V} \otimes_{\mathbb{C}} \bar{\mathcal{V}}$-representations is therefore described by the images of the projectors $Q_{\nu}^{i,j,\alpha}$. The next lemma will give an alternative description of these images. But first we need some notation.

Given two $A$-$A$-bimodules $M,N$ in $\hat{\mathcal{D}}$, denote by $\mathrm{Hom}_{AA}(M,N)$ the subspace of $\hat{\mathcal{D}}(M,N)$ that intertwines the left and right $A$-action.

Denote by $\lambda^M : A \otimes M \to M$ and $\rho^M : M \otimes A \to M$ the left and right $A$-action on $M$. Given an object $X \in \hat{\mathcal{D}}$, write $M \otimes^+ X$ and $M \otimes^- X$ for the $A$-$A$-bimodules with left/right action
\begin{align}
\lambda^{M \otimes^+ X} &= \lambda^M \otimes \id_X ~~,
&
\rho^{M \otimes^+ X} &= (\rho^M \otimes \id_X) \circ (\id_M \otimes c_{A,X}^{-1}) \ ,
\\
\lambda^{M \otimes^- X} &= \lambda^M \otimes \id_X~~,
&
\rho^{M \otimes^- X} &= (\rho^M \otimes \id_X) \circ (\id_M \otimes c_{X,A}) \ .
\nonumber
\end{align}
Since $K_\alpha$ is transparent, we have $M \otimes^+ K_\alpha = M \otimes^- K_\alpha$, and we will just write $M \otimes K_\alpha$.

By ${}_NA$ we mean the $A$-$A$-bimodule whose left action is twisted by $N$. Let $N_+=\id_A$ and $N_- = N$.
We will be interested in the $A$-$A$-bimodule ${}_{N_\nu}A \otimes^+ S_i \otimes^- S_j \otimes K_\alpha$. Diagramatically, the left and right action are
\begin{equation}
	\raisebox{-0.5\height}{\includegraphics[scale=0.4]{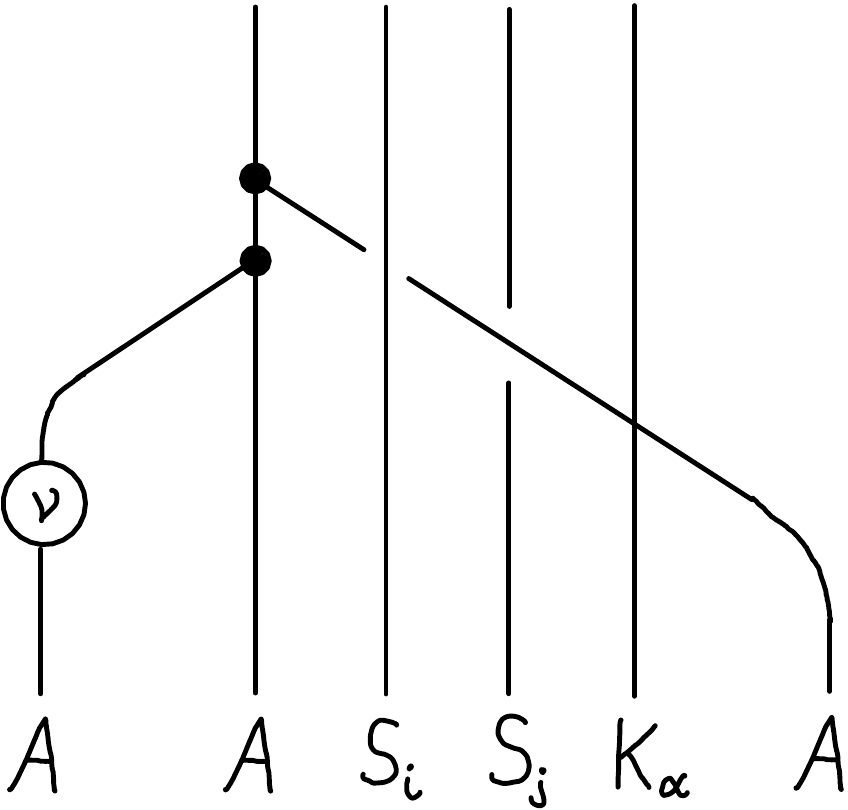}}
\end{equation}

\begin{lemma}\label{lem:imQ}
The map $\varphi : \mathrm{Hom}_{AA}({}_{N_{\nu}}A \otimes^+ S_i \otimes^- S_j \otimes K_\alpha,A) \to \hat{\mathcal{D}}(S_i \otimes S_j \otimes K_\alpha,A)$, $\varphi(h) = h \circ (\eta \otimes \id_{S_i \otimes S_j \otimes K_\alpha})$ is injective and has image $\mathrm{im}(Q_{\nu}^{i,j,\alpha})$.
\end{lemma}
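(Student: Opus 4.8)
The plan is to exhibit $\varphi$ as the restriction to bimodule maps of the free‑module adjunction isomorphism, and then to recognise $Q_\nu^{i,j,\alpha}$ as the transport to $\hat{\mathcal{D}}(S_i \otimes S_j \otimes K_\alpha, A)$ of the averaging idempotent that projects left‑module maps onto bimodule maps. Write $X := S_i \otimes S_j \otimes K_\alpha$ and $B := {}_{N_\nu}A \otimes^+ S_i \otimes^- S_j \otimes K_\alpha$, so that the underlying object of $B$ is $A \otimes X$ with left action $(\mu \circ (N_\nu \otimes \id_A)) \otimes \id_X$, the prescriptions $\otimes^+,\otimes^-,\otimes K_\alpha$ affecting only the right action. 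Since $N^2 = \id_A$ we have $N_\nu^{-1}=N_\nu$, which keeps all formulas below free of inverse Nakayama maps.

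First I would treat injectivity via the free left‑module structure. Because $B$ is free as a left $A$‑module on $X$ (the twist by the invertible $N_\nu$ only precomposes the module structure with an automorphism), any left‑$A$‑module morphism $h:B\to A$ is determined by $h\circ(\eta\otimes\id_X)$. Concretely, extend the formula defining $\varphi$ to all of $\hat{\mathcal{D}}(B,A)$ and set $\Phi(g) = \mu \circ (N_\nu \otimes \id_A) \circ (\id_A \otimes g)$ for $g \in \hat{\mathcal{D}}(X,A)$. Using that $N_\nu$ is a unital algebra automorphism with $N_\nu^2=\id_A$, one checks that $\Phi(g)$ is always left‑$A$‑linear, that $\varphi\circ\Phi = \id$, and that $\Phi\circ\varphi=\id$ on left‑module morphisms. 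Hence $\varphi$ and $\Phi$ are mutually inverse bijections between the space of left‑$A$‑module morphisms $B\to A$ and $\hat{\mathcal{D}}(X,A)$. In particular $\varphi$ is injective on this larger space, and a fortiori on the subspace $\mathrm{Hom}_{AA}(B,A)$; this settles injectivity.

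It remains to identify the image. Under the bijection, $\Phi(g)$ is a bimodule map precisely when it is right‑$A$‑linear (left‑linearity being automatic), so $g\in\mathrm{im}\big(\varphi|_{\mathrm{Hom}_{AA}}\big)$ iff $\Phi(g)$ is right‑$A$‑linear. I would then invoke the averaging operator $\mathcal{A}$ on left‑module morphisms, built from the Casimir $c=\Delta\circ\eta$ and the right actions on $B$ and $A$. As in the idempotency computation \eqref{eq:PNS-idem-calc}, $\Delta$‑separability $\mu\circ\Delta=\id_A$ makes $\mathcal{A}$ a projector; the Frobenius relation and the bimodule axioms give $\mathrm{im}(\mathcal{A})\subseteq\mathrm{Hom}_{AA}(B,A)$, while right‑$A$‑linearity together with $\mu\circ\Delta=\id_A$ gives $\mathcal{A}|_{\mathrm{Hom}_{AA}}=\id$. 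Thus $\mathcal{A}$ is a projector onto $\mathrm{Hom}_{AA}(B,A)$ inside the left‑module maps. The key computation is then the diagrammatic identity $Q_\nu^{i,j,\alpha} = \varphi \circ \mathcal{A} \circ \Phi$: substituting $\Phi(g)$ into $\mathcal{A}$ and composing with $\varphi=(-)\circ(\eta\otimes\id_X)$, one uses naturality of the braiding to drag the strands $S_i,S_j,K_\alpha$ through the $A$‑actions — this is exactly where $\otimes^+,\otimes^-$ (hence the crossings $c_{A,S_i}^{-1}$, $c_{S_j,A}$, with $K_\alpha$ transparent) enter — together with associativity, the Frobenius relation and $\mu\circ\Delta=\id_A$ to reach the form \eqref{eq:Q-string-diag}; the ${}_{N_\nu}A$‑twist produces precisely the $N_\nu$ in that diagram, and $N^2=\id_A$ ensures consistency. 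Granting this identity, $Q_\nu^{i,j,\alpha}$ is a projector whose image is $\varphi(\mathrm{im}\,\mathcal{A})=\varphi\big(\mathrm{Hom}_{AA}(B,A)\big)=\mathrm{im}\big(\varphi|_{\mathrm{Hom}_{AA}}\big)$, as claimed.

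I expect the main obstacle to be this last diagrammatic matching: reconciling the braidings and the Nakayama twist encoded in the bimodule $B$ with the precise crossings and $N_\nu$‑insertions in \eqref{eq:Q-string-diag}, which is a careful bookkeeping of over/under‑crossings in the ribbon category $\hat{\mathcal{D}}$. Everything else is a routine application of the free‑module adjunction and of the standard separability projector, both of which are forced by the already‑established structure of $A$ as a $\Delta$‑separable Frobenius algebra with involutive Nakayama automorphism.
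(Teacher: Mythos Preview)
Your approach is correct and essentially the same as the paper's. Your inverse map $\Phi(g)=\mu\circ(N_\nu\otimes\id_A)\circ(\id_A\otimes g)$ is exactly the paper's $\psi(f)=\mu\circ(N_\nu\otimes f)$, and your averaging-idempotent framing repackages the paper's two direct checks (that $Q_\nu^{i,j,\alpha}\circ\varphi=\varphi$ via a diagrammatic computation, and that $\psi(f)$ is an $A$-$A$-bimodule map when $f=Q_\nu^{i,j,\alpha}(f)$) into the single identity $Q_\nu^{i,j,\alpha}=\varphi\circ\mathcal{A}\circ\Phi$; the underlying diagram manipulations with the Frobenius relation, $\Delta$-separability, and the braidings are the same in both arguments.
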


\begin{proof}
That $\varphi(h) \in \mathrm{im}(Q_{\nu}^{i,j,\alpha})$ for all $h$ can be seen as follows:
\begin{align}\label{eq:imQ-calc-aux1}
Q_{\nu}^{i,j,\alpha}(\varphi(h)) ~~
&\overset{(1)}= ~~
 \raisebox{-0.5\height}{\includegraphics[scale=0.4]{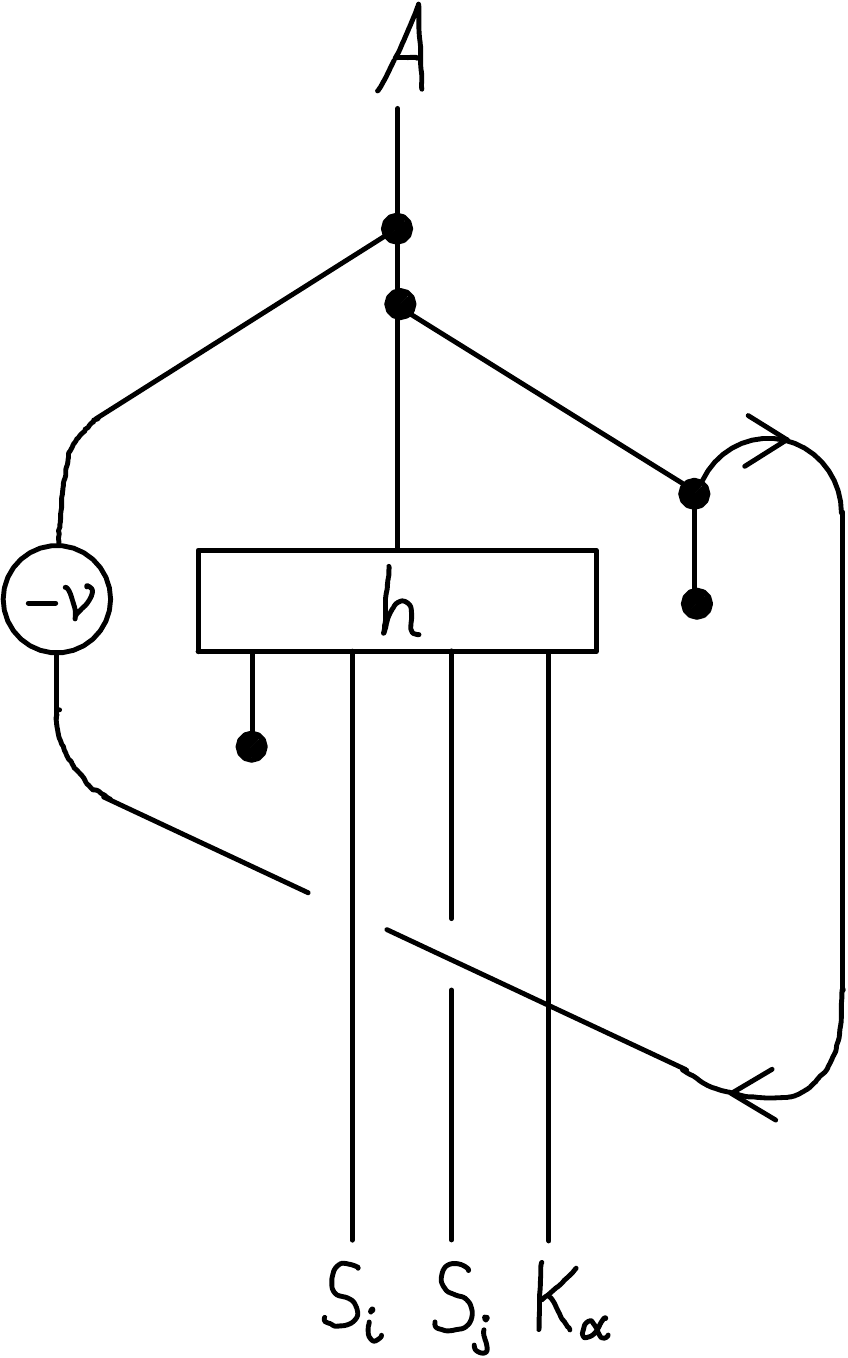}}
\quad \overset{(2)}=~~ 
 \raisebox{-0.5\height}{\includegraphics[scale=0.4]{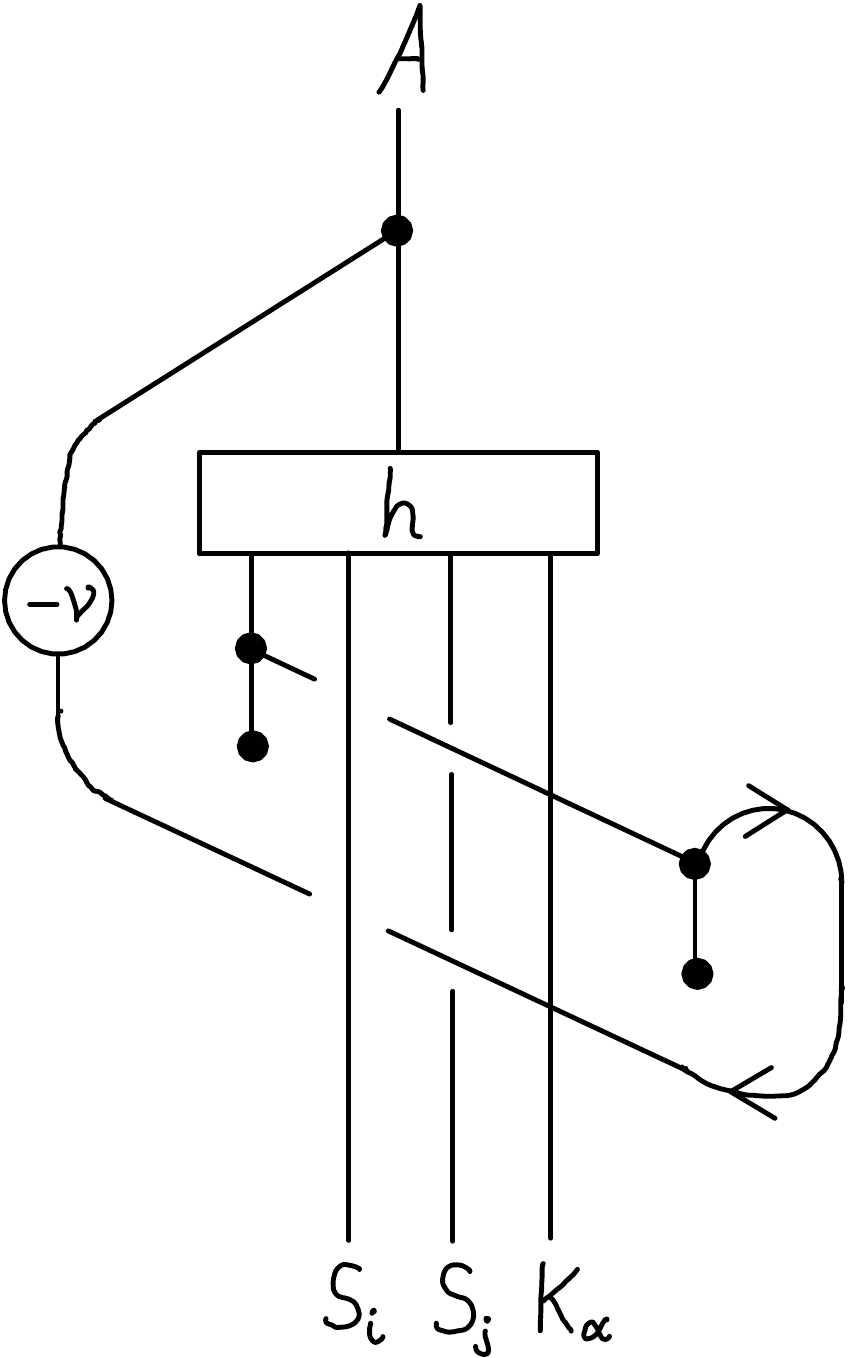}}
 \\[1em]
&\overset{(3)}= \quad
 \raisebox{-0.5\height}{\includegraphics[scale=0.4]{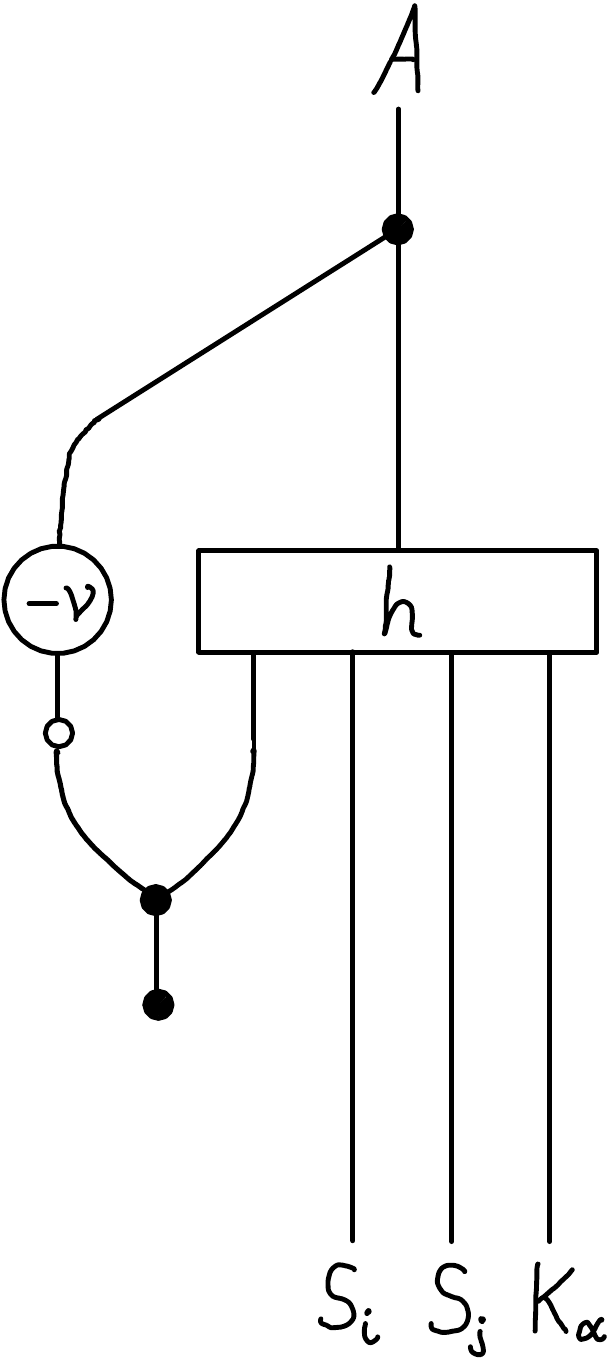}}
\quad\overset{(4)}=\quad 
 \raisebox{-0.5\height}{\includegraphics[scale=0.4]{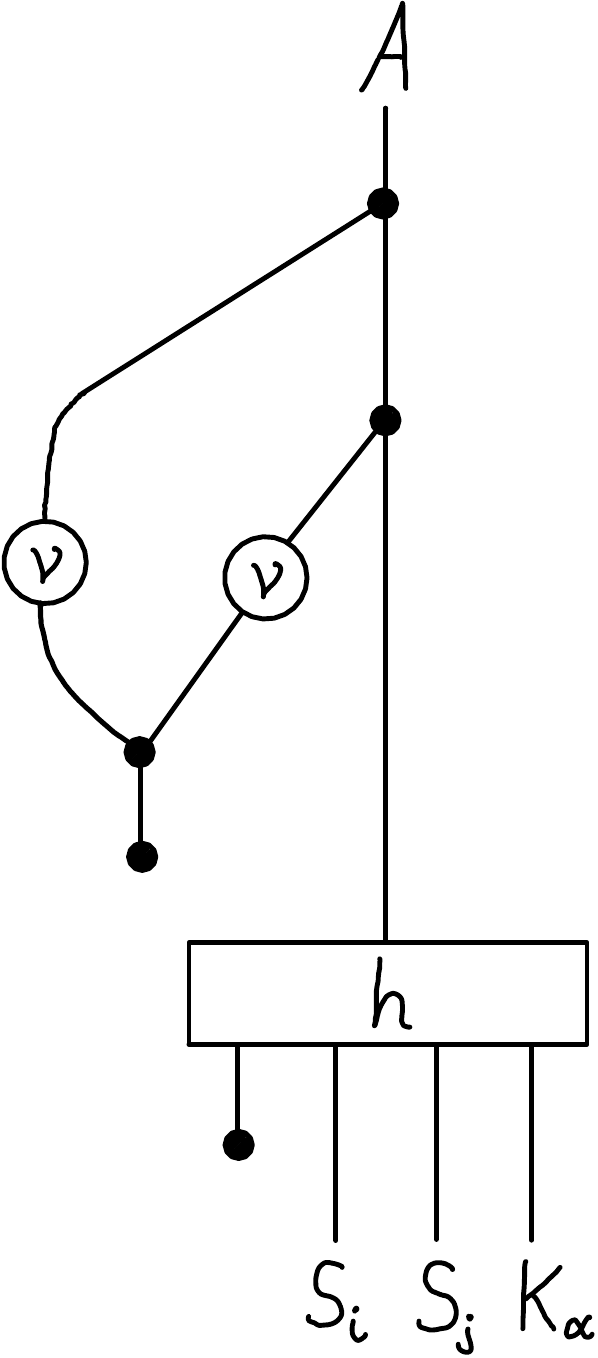}}
\overset{(5)}=
~~\varphi(h) \ .
\nonumber
\end{align}
In step 1 the definition \eqref{eq:Q-string-diag} of $Q_{\nu}^{i,j,\alpha}$ as well as that of $\varphi(h)$ are inserted, and the coproduct is traded for a product and a copairing $\Delta \circ \varepsilon$ via the Frobenius property. Step 2 is compatibility of $f$ with the right action. In step 3 first the unit property is used to remove one of the units. Then one sees that the left-most expression in \eqref{eq:inv-cond-1} (with $c_{-1}$) appears. By Theorem \ref{thm:moves_vs_alg}, this can be replaced by $c_1$. Step 4 uses that $h$ intertwines the left action on $A$ and on ${}_{N_{\nu}}A$ (and that $N_{\nu}^{\,2}=\id_A$). Step 5 follows from $\Delta$-separability of $A$.

Next, consider the map 
$\psi :\mathrm{im}(Q_{\nu}^{i,j,\alpha}) \to \mathrm{Hom}_{AA}({}_{N_{\nu}}A \otimes^+ S_i \otimes^- S_j \otimes K_\alpha,A)$ given by $\psi(f) = \mu \circ (N_{\nu} \otimes f)$. 
To check that $\psi(f)$ is indeed an $A$-$A$-bimodule morphism, one uses $f = Q_{\nu}^{i,j,\alpha}(f)$ to move the right-$A$-action to the left:
\begin{align}
\mu \circ (f \otimes \id_A)
&=
\mu \circ (Q_{\nu}^{i,j,\alpha}(f) \otimes \id_A)
=
\dots
\\
\nonumber
&= 
\mu \circ (N_{\nu} \otimes f) 
\circ 
(c_{S_i,A} \otimes \id_{S_j} \otimes \id_{K_\alpha})
\\
\nonumber
& \qquad
\circ 
(\id_{S_i} \otimes c_{A,S_j}^{-1}  \otimes \id_{K_\alpha})
\circ 
(\id_{S_i} \otimes \id_{S_j} \otimes c_{K_\alpha,A})
\end{align}
The intermediate steps abbreviated by ``\dots'' can for example be carried out in string diagram notation as we did in \eqref{eq:imQ-calc-aux1}, we omit the details.

It is immediate that $\psi$ satisfies $\varphi(\psi(f))=f$ for all $f \in \mathrm{im}(Q_{\nu}^{i,j,\alpha})$, as well as $\psi(\varphi(h))=h$ for all $h \in \mathrm{Hom}_{AA}({}_{N_{\nu}}A \otimes^+ S_i \otimes^- S_j \otimes K_\alpha,A)$.
\end{proof}

The explicit form \eqref{eq:PNSR-CFT-graded} of the projectors $P^{NS/R}$ and Lemma \ref{lem:imQ} provide the proof of the following theorem, which summarises the discussion of the state spaces in the spin theory.

\begin{theorem}\label{thm:HNSR-via-HomAA}
Let $A \in \hat{\mathcal{D}}$ be a $\Delta$-separable Frobenius algebra whose Nakayama automorphism is an involution. Let $\mathcal{H}^{NS}$ and $\mathcal{H}^{R}$ be the Neuveu-Schwarz and Ramond state spaces of the spin theory defined by $A$ as in Section \ref{sec:amp-spin-def}. The decomposition of  $\mathcal{H}^{NS/R}$ as $\mathbb{Z}_2$-graded $\mathcal{V} \otimes_{\mathbb{C}} \bar{\mathcal{V}}$-representations is given by
\begin{align}
\mathcal{H}^{NS}
&= \bigoplus_{i,j \in \mathcal{I}} \Big(
	\mathrm{Hom}_{AA}(A \otimes^+ S_i \otimes^- S_j , A) \otimes_\mathbb{C} 
		S_i \otimes_\mathbb{C} \bar S_j \\[-1em]
	&
	\hspace*{4em}
	~\oplus~
	\mathrm{Hom}_{AA}(A \otimes^+ S_i \otimes^- S_j \otimes K_- , A) \otimes_\mathbb{C} 
		\Pi(S_i \otimes_\mathbb{C} \bar S_j) 
\Big) \ ,
 \nonumber \\
\mathcal{H}^{R}
&= \bigoplus_{i,j \in \mathcal{I}} \Big(
	\mathrm{Hom}_{AA}({}_NA \otimes^+ S_i \otimes^- S_j , A) \otimes_\mathbb{C} 
		S_i \otimes_\mathbb{C} \bar S_j  \nonumber \\[-1em]
	&
	\hspace*{4em}
	~\oplus~
	\mathrm{Hom}_{AA}({}_NA \otimes^+ S_i \otimes^- S_j \otimes K_- , A) \otimes_\mathbb{C} 
		\Pi(S_i \otimes_\mathbb{C} \bar S_j) 
\Big) \ .
\nonumber
\end{align}
\end{theorem}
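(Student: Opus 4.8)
The plan is to assemble the theorem directly from the explicit block form \eqref{eq:PNSR-CFT-graded} of the projectors $P^{NS/R}$ together with the identification of their images provided by Lemma \ref{lem:imQ}; beyond this, only elementary linear algebra of block-diagonal projectors is required. I would begin by recalling that, by definition, $\mathcal{H}^{NS/R} = \mathrm{im}(P^{NS/R}) \subset \hat{\mathcal{H}}_A$, so the entire task reduces to computing these images and reading off the induced $\mathbb{Z}_2$-graded $\mathcal{V} \otimes_{\mathbb{C}} \bar{\mathcal{V}}$-module structure.

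The first step is the observation that if a projector has the block form $P = \bigoplus_k Q_k \otimes_{\mathbb{C}} \id_{W_k}$ acting on $\bigoplus_k V_k \otimes_{\mathbb{C}} W_k$, with each $Q_k$ a projector on $V_k$, then $\mathrm{im}(P) = \bigoplus_k \mathrm{im}(Q_k) \otimes_{\mathbb{C}} W_k$, since $\mathrm{im}(Q_k \otimes \id_{W_k}) = \mathrm{im}(Q_k) \otimes W_k$. Applying this to \eqref{eq:PNSR-CFT-graded}, with $Q_k = Q_{\nu}^{i,j,\alpha}$ acting on $\hat{\mathcal{D}}(S_i \otimes S_j \otimes K_\alpha, A)$ and $W_k$ the module $S_i \otimes_{\mathbb{C}} \bar S_j$ (for $\alpha=+$) or $\Pi(S_i \otimes_{\mathbb{C}} \bar S_j)$ (for $\alpha=-$), yields
\[
\mathcal{H}^{NS/R} = \bigoplus_{i,j \in \mathcal{I}} \Big( \mathrm{im}(Q_{\nu}^{i,j,+}) \otimes_{\mathbb{C}} S_i \otimes_{\mathbb{C}} \bar S_j \ \oplus\ \mathrm{im}(Q_{\nu}^{i,j,-}) \otimes_{\mathbb{C}} \Pi(S_i \otimes_{\mathbb{C}} \bar S_j) \Big),
\]
where $\nu=+1$ for $NS$ and $\nu=-1$ for $R$. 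The $\mathbb{Z}_2$-grading is exactly the one carried by the $\mathbf{SVect}$ factor of $\hat{\mathcal{H}}_A$: the $\alpha=+$ summands are even and the $\alpha=-$ summands odd.

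Next I would invoke Lemma \ref{lem:imQ} to replace each image $\mathrm{im}(Q_{\nu}^{i,j,\alpha})$ by the isomorphic bimodule Hom space $\mathrm{Hom}_{AA}({}_{N_{\nu}}A \otimes^+ S_i \otimes^- S_j \otimes K_\alpha, A)$. It then remains only to unwind conventions: for $NS$ we have $\nu=+$, so $N_\nu = N_+ = \id_A$ and ${}_{N_+}A = A$, while for $R$ we have $\nu=-$, so $N_\nu = N_- = N$ and ${}_{N_-}A = {}_N A$. Furthermore, since $K_+ = \mathbb{C}^{1|0}$ is the monoidal unit of $\mathbf{SVect}^{fd}$, the factor $\otimes K_+$ is trivial and is dropped in the $\alpha=+$ summands, whereas $\otimes K_-$ is retained in the $\alpha=-$ summands. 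Substituting these identifications reproduces the two displayed formulas for $\mathcal{H}^{NS}$ and $\mathcal{H}^{R}$ verbatim.

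Since both technical ingredients — the block decomposition \eqref{eq:PNSR-CFT-graded} and the isomorphism of Lemma \ref{lem:imQ} — are already in hand, there is no genuine analytic obstacle: the argument is pure assembly. The only point demanding care is the bookkeeping of conventions, namely correctly matching $\nu \in \{\pm\}$ to the $NS$/$R$ sectors and to the untwisted versus $N$-twisted bimodule ${}_{N_\nu}A$, and tracking which summands carry the parity shift $\Pi$ so that the resulting $\mathbb{Z}_2$-grading is correctly attributed.
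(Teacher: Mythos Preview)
Your proposal is correct and follows exactly the paper's own argument: the paper states that the theorem is a direct consequence of the explicit block form \eqref{eq:PNSR-CFT-graded} of $P^{NS/R}$ together with Lemma \ref{lem:imQ}, and your write-up simply makes the elementary image-of-a-block-diagonal-projector step and the conventional bookkeeping explicit. There is no difference in approach.
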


Variants of the isomorphism in Lemma \ref{lem:imQ} and the description of the state spaces in Theorem \ref{thm:HNSR-via-HomAA} also appear in \cite[Sect.\,5.4]{tft1} (for symmetric Frobenius algebras, so without distinguishing $NS$- and $R$-state spaces) and in \cite[Sect.\,3.2]{Brunner:2013ota} (when $Q_\mathcal{V}$ is also topological, so that $S_i=S_j=\one$).

\subsection{Example: Ising model}\label{sec:1ff-ex}

Here we will start from the two-dimensional critical Ising model -- the Virasoro minimal model conformal field theory of central charge $\tfrac12$ -- and will show how in the graded setup one can recover the theory of a free fermion.

\medskip

Let $\mathcal{V}$ be the simple Virasoro vertex operator algebra at $c=\frac12$ and let $\mathbf{Is} = \Rep\mathcal{V}$ be its representation category. Then $\mathbf{Is}$ has three simple objects, which we denote as follows (here, $h$ is the conformal weight of the ground state)
\begin{equation}
	\one \quad (h=0)
	~~ , \qquad 
	\epsilon \quad (h=\tfrac12)
	~~ , \qquad 
	\sigma \quad (h=\tfrac1{16})
	\ .
\end{equation}
The non-trivial fusion rules are $\epsilon \otimes \sigma \cong \sigma \otimes \epsilon \cong \sigma$ and $\sigma \otimes \sigma \cong \one \oplus \epsilon$.
We hope that there will be no confusion between the simple object $\epsilon$ and the counit $\varepsilon$ of the algebra $A$ we will define shortly.

A full description of the ribbon structure of $\mathbf{Is}$ can be found in many places, for example in \cite[Sect.\,4.2]{tft2}. Here we just need to know the following properties. Firstly, the braiding, twist and quantum-dimension of $\epsilon$ satisfy
\begin{equation}\label{eq:some-Is-data}
	c_{\epsilon,\epsilon} = - \id_{\epsilon \otimes \epsilon}
	\quad , \qquad
	\theta_\epsilon =  - \id_\epsilon
	\quad , \qquad
	\dim(\epsilon) = 1 \ .
\end{equation}
Secondly, we can reconnect $\epsilon$-lines as follows
\begin{equation}
\raisebox{-0.5\height}{\includegraphics[scale=0.4]{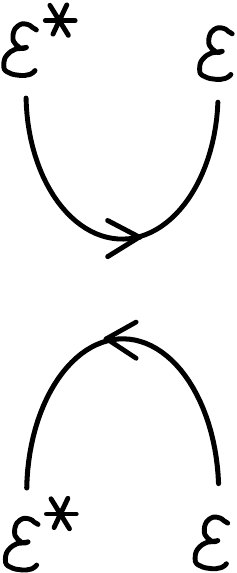}}
~=~
\raisebox{-0.5\height}{\includegraphics[scale=0.4]{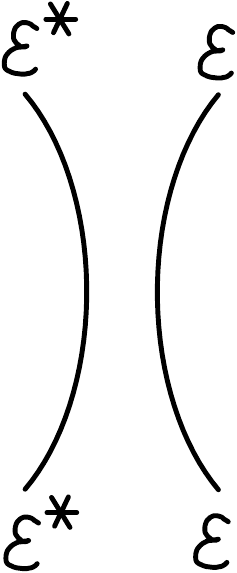}}
\quad .
\end{equation}
Finally, if $S$ is one of the simple objects, $S \in \{ \one,\epsilon,\sigma \}$, then
\begin{equation}\label{eq:eps-through-S}
	c_{\epsilon,S} = q_S \cdot c_{S,\epsilon}^{-1} \ ,
\end{equation}
where $q_{\one} = q_\epsilon = 1$ and $q_\sigma  = -1$ (the quantity $q_S$ is called the {\em monodromy charge} of $S$ with respect to $\epsilon$).

We could now look for a suitable algebra in $\mathbf{Is}$, for example $\one \oplus \epsilon$. But the unique-up-to-isomorphism $\Delta$-separable Frobenius algebra structure on  $\one \oplus \epsilon$ is symmetric, i.e.\ $N=\id$. The theory on spin surfaces constructed from a symmetric $A$ will be insensitive to the spin structure, since  $c_{+1} = c_{-1}$ in the symmetric case and so the theory does not depend on the edge signs. 

This outcome is maybe not too surprising since a holomorphic free fermion forms a vertex operator super algebra, whose even component is $\one$ and whose odd component is $\epsilon$. Hence we will now pass to the graded case.

\medskip

Define
\begin{equation}
\widehat{\mathbf{Is}} := \mathbf{Is} \boxtimes \mathbf{SVect}^{fd} \ .
\end{equation}
and in $\widehat{\mathbf{Is}}$ consider the object $A := \one \oplus \Pi \epsilon$. 
For simplicity we replace $\widehat{\mathbf{Is}}$ by an equivalent strict category.
We recall our convention for the ribbon structure on $\mathbf{SVect}^{fd}$:
\begin{align}\label{eq:some-SVect-data}
	&c_{K_-,K_-} = - \id_{K_- \otimes K_-}
~~,\quad
	\theta_{K_-} = \id_{K_-}
~~,\quad
	\dim(K_-) = -1 \ ,
\\
	&
\raisebox{-0.5\height}{\includegraphics[scale=0.4]{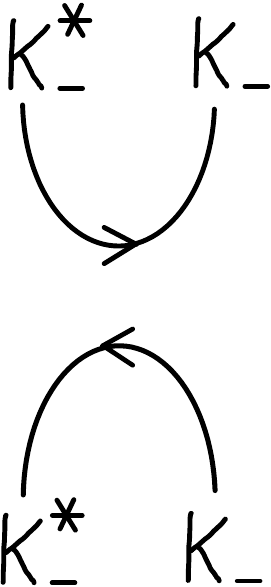}}
~=~-
\raisebox{-0.5\height}{\includegraphics[scale=0.4]{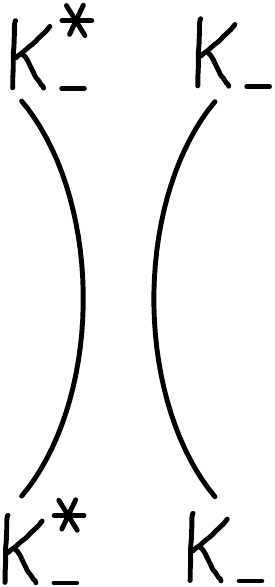}}
\nonumber
\end{align}
To give the algebra and coalgebra structure, pick a nonzero element
\begin{equation}
\lambda^{(\Pi\epsilon,\Pi\epsilon),\one} \in \widehat{\mathbf{Is}}(\Pi\epsilon \otimes \Pi\epsilon,\one) \ .
\end{equation}
Define furthermore
\begin{equation}
\lambda^{(\one,\one),\one} = \id_{\one}
~~,\quad
\lambda^{(\Pi\epsilon,\one),\Pi\epsilon} = \id_{\Pi\epsilon}
~~,\quad
\lambda^{(\one,\Pi\epsilon),\Pi\epsilon} = \id_{\Pi\epsilon}
\ ,
\end{equation}
and let us agree that all $\lambda^{(a,b)c}$ for which the corresponding morphism space in $\widehat{\mathbf{Is}}$ is zero-dimensional are set to zero. Then the multiplication on $A$ is
\begin{equation}
	\mu := \sum_{a,b,c \in \{\one,\Pi\epsilon\}}
	\lambda^{(a,b)c} \ .
\end{equation}
One verifies that this is an associative product (for example by using the explicit fusing matrices given  in \cite[Sect.\,4.2]{tft2}). The unit $\eta$ of $\mu$ is the embedding $e_{\one} : \one \to A$ of the monoidal unit. With \eqref{eq:some-Is-data} and \eqref{eq:some-SVect-data} it is straightforward to check that $A$ is commutative:
\begin{equation}
	\mu \circ c_{A,A} = \mu \ .
\end{equation}
To define the comultiplication, choose basis morphisms $\lambda^{c(a,b)} : c \to a \otimes b$ dual to the $\lambda^{(a,b)c}$ in the sense that $\lambda^{(a,b)c} \circ \lambda^{c(a,b)} = \id_c$ whenever $\lambda^{(a,b)c} \neq 0$ (otherwise set $\lambda^{c(a,b)}=0$). Then we define the comultiplication on $A$ to be
\begin{equation}
	\Delta := \tfrac12 \hspace{-1em} \sum_{a,b,c \in \{\one,\Pi\epsilon\}}
	\hspace{-1em}  \lambda^{c(a,b)} 
\end{equation}
with counit $\varepsilon = 2 p_{\one} : A \to \one$, and where $p_{\one}$ is the projection on the summand $\one$ of $A$. A short calculation shows that $A$ is $\Delta$-separable (this is the reason for the factors of 2).

The Nakayama automorphism $N$ of $A$ can be computed as follows:
\begin{equation}
	N 
~\overset{\text{deform \eqref{eq:Nakayama-def}}}= \quad
\raisebox{-0.5\height}{\includegraphics[scale=0.4]{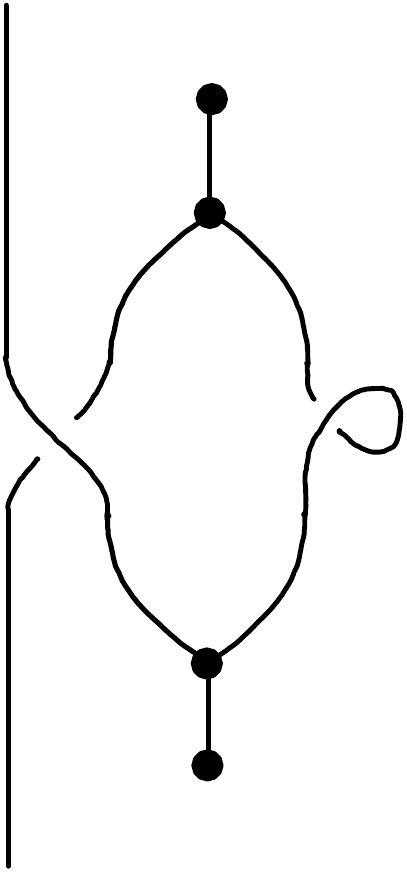}}
\quad\overset{(*)}= \quad
\raisebox{-0.5\height}{\includegraphics[scale=0.4]{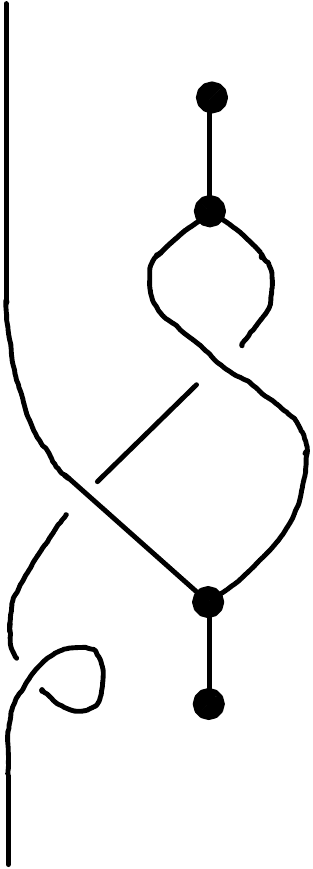}}
\quad\overset{\text{Frob.}}=~
\theta_A =\id_{\one} - \id_{\Pi\epsilon} \ ,
\end{equation}
where ($*$) is commutativity of $A$ and naturality of the twist $\theta_A$. In the last step we used  \eqref{eq:some-Is-data} and \eqref{eq:some-SVect-data} to conclude $\theta_A = \id_{\one} - \id_{\Pi\epsilon}$. In particular, $N^2=id_A$. Thus, altogether,
\begin{quote}
	$A$ is a $\Delta$-separable Frobenius algebra whose Nakayama automorphism is an involution.
\end{quote}
In fact, $A$ has another property which we will use below, namely 
\begin{equation}\label{eq:N*id=0}
	\mu \circ (N \otimes \id_A) \circ \Delta = 0 \ ,
\end{equation}
which one again checks by direct calculation. This property was investigated for 2d TFTs in \cite[Sect.\,4.9]{Novak:2014oca} 
and was found to enforce the admissibility conditions \eqref{eq:admiss-inner} and \eqref{eq:admiss-bnd} for edge signs by setting amplitudes with non-admissible configurations to zero. 

Next we compute the $NS$- and $R$-state spaces. To do so we need one more ingredient, which is that for $S \in \{\one,\epsilon,\sigma\}$ we have
\begin{equation}\label{eq:S-through-A-omegaS}
	c_{A,S} = c_{S,A}^{-1} \circ (\omega_S \otimes \id_S) \ ,
\end{equation}
where $\omega_{\one} = \id_A = \omega_\epsilon$ and $\omega_\sigma = N$. This follows directly from \eqref{eq:eps-through-S}. Using this, we compute the action of $Q_\nu^{S,S',\alpha}$ on $\widehat{\mathbf{Is}}(S \otimes S' \otimes K_\alpha , A)$ to be:
\begin{equation}
	Q_\nu^{S,S',\alpha}(h)
	~\overset{\text{\eqref{eq:S-through-A-omegaS}}}=\quad
\raisebox{-0.5\height}{\includegraphics[scale=0.4]{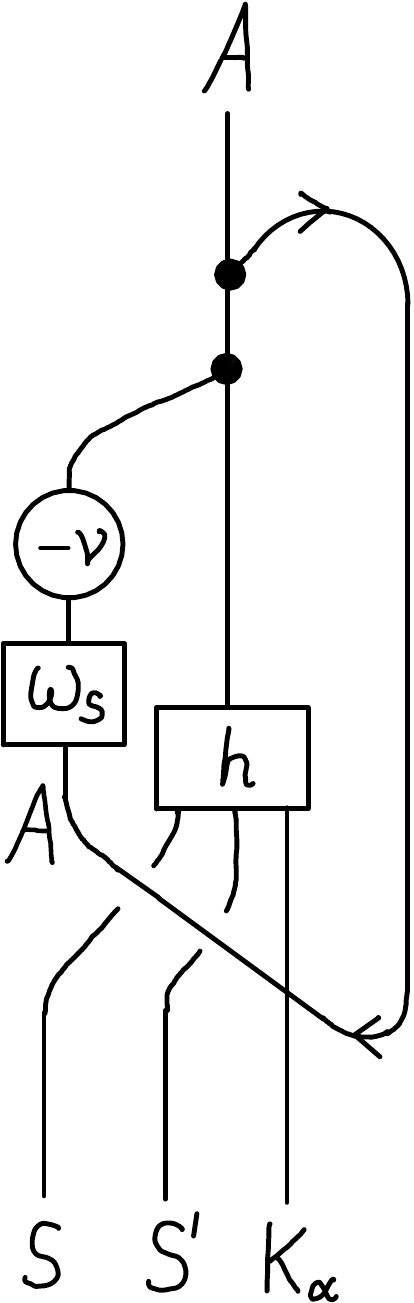}}
	\quad\overset{\text{$A$ comm.}}=\quad
\raisebox{-0.5\height}{\includegraphics[scale=0.4]{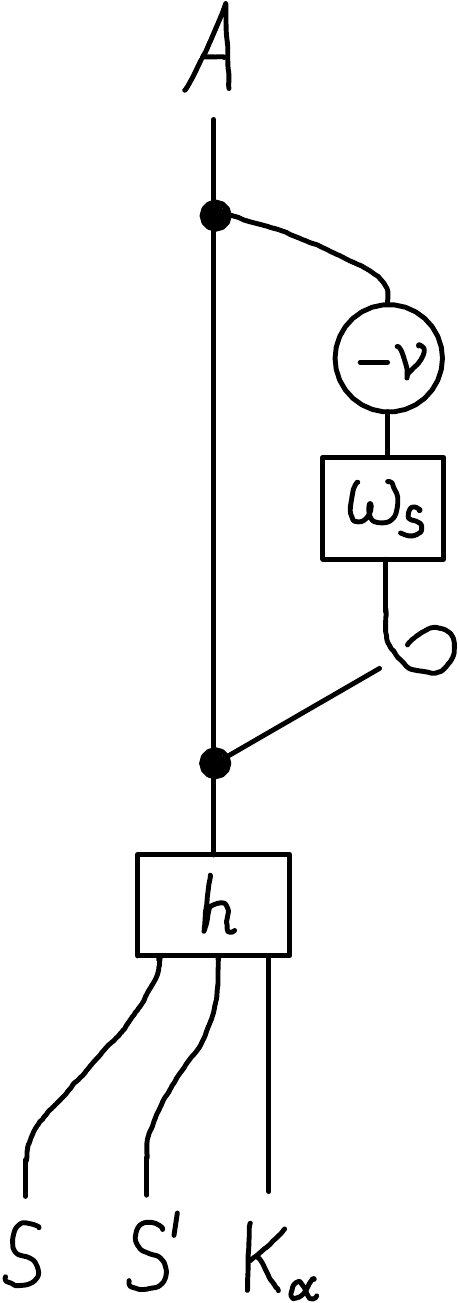}}
	\quad\overset{(*)}=~ \delta_{N_\nu,\omega_S} \cdot h \ ,
\end{equation}
where (*) uses $\theta_A = N$ and the fact that by $\Delta$-separability and by \eqref{eq:N*id=0}, $\mu \circ (\id_A \otimes (N_\nu \circ \omega_S)) \circ \Delta$ is either $\id_A$ or $0$, depending on whether $N_\nu \circ \omega_S$ equals $\id_A$ or $N$.
This result for the image of $Q$ can be plugged into \eqref{eq:PNSR-CFT-graded} to read off the state spaces.

We also observe that because of the fusion rules of $\mathbf{Is}$ and for parity reasons, $\widehat{\mathbf{Is}}(S \otimes S' \otimes K_\alpha , A)$ is at most one-dimensional. Non-zero elements of this morphism space have image in $\one \subset A$ if $\alpha=+$ and in $\Pi\epsilon \subset A$ if $\alpha=-$. 
It follows that $N|_{NS/R}$ acts as the parity involution, that is, it acts as $\id$ on $S \otimes_{\mathbb{C}} S'$ and as $-\id$ on $\Pi(S \otimes_{\mathbb{C}} S')$. 

Let us collect the results of the above discussion in two tables, whose entries give the multiplicity super vector space of the corresponding representation $S \otimes_{\mathbb{C}} S'$ in $\hat{\mathcal{H}}^{NS/R}$:
\begin{equation}\label{eq:Ising-NSR-spaces}
\hat{\mathcal{H}}^{NS} 
~:~
\renewcommand{\arraystretch}{1.3}
\begin{tabular}{c|ccc}
& $\one$ & $\epsilon$ & $\sigma$ \\
\hline
$\one$ & $\mathbb{C}^{1|0}$ & $\mathbb{C}^{0|1}$ & $0$ \\
$\epsilon$ & $\mathbb{C}^{0|1}$ & $\mathbb{C}^{1|0}$ & $0$ \\
$\sigma$ & $0$ & $0$ & $0$
\end{tabular}
\quad , \qquad
\hat{\mathcal{H}}^{R}
~:~
\renewcommand{\arraystretch}{1.3}
\begin{tabular}{c|ccc}
& $\one$ & $\epsilon$ & $\sigma$ \\
\hline
$\one$ & $0$ & $0$ & $0$ \\
$\epsilon$ & $0$ & $0$ & $0$ \\
$\sigma$ & $0$ & $0$ & $\mathbb{C}^{1|1}$
\end{tabular}
\qquad ,
\end{equation}
and $N|_{NS/R}$ acts by parity involution on the multiplicity space.

\medskip

This is the expected result for free fermions: the $NS$-sector consists of a holomorphic and an anti-holomorphic free fermion with $\mathbb{Z}_2$-grading given by fermion number. The $R$-sector contains two twist fields of weight $(\frac1{16},\frac1{16})$ due to the two-dimensional representation of the fermion zero mode algebra.
The statement of Lemma \ref{lem:dehn-twist-vs-N} in the present case is as follows. In CFT we have $J = L_0 - \bar L_0$, and so $\exp(2 \pi i (L_0 - \bar L_0))$ acts as $N|_{NS}$ on $\mathcal{H}^{NS}$ and as the identity on $\mathcal{H}^R$. This can of course also be read off directly from the conformal weights in above table.

\subsection{Example: affine so(n) at level 1}\label{sec:so(n)}

The WZW model with symmetry $\widehat{so}(n)_1$ ($n \ge 3$) has a description in terms of $n$ free fermions. This example hence extends the previous one and we will be brief. Let $\mathbf{S}_n$ be the modular tensor category of integrable highest weight representations of  $\widehat{so}(n)_1$. 
The simple objects of $\mathbf{S}_n$ together with the conformal weight of their ground states are:
\begin{center}
\parbox{.4\textwidth}{\begin{center}
$n$ even\\[.6em]
\renewcommand{\arraystretch}{1.5}
\begin{tabular}{l|cccc}
symbol 		& $1$  &  $v$ & $s$ & $c$ \\
\hline
weight $h$	& $0$ &  $\frac12$ & $\frac{n}{16}$ & $\frac{n}{16}$
\end{tabular}
\end{center}
}
\parbox{.4\textwidth}{\begin{center}
$n$ odd\\[.6em]
\renewcommand{\arraystretch}{1.5}
\begin{tabular}{l|cccc}
symbol 		& $1$  &  $v$ & $\sigma$  \\
\hline
weight $h$	& $0$ &  $\frac12$ & $\frac{n}{16}$ 
\end{tabular}
\end{center}
}
\end{center}
Here, $1$ is the vacuum representation, and $v$, $s$, $c$ stands for ``vector'', ``spinor'', and ``conjugate spinor''. The fusion rules are (we omit the ``$\otimes$'' for brevity)
\begin{center}
\renewcommand{\arraystretch}{1.5}
\begin{tabular}{l|l}
$n$ odd
&
$v  v \cong 1
~,~~ 
v  \sigma \cong \sigma 
~,~~ 
\sigma  \sigma \cong 1 \oplus v$
\\
$n \equiv 0 \mod 4$
&
$vv \cong ss \cong cc \cong 1
~,~~ 
s  c \cong v$
\\
$n \equiv 2 \mod 4$
&
$vv \cong sc \cong 1
~,~~ 
ss \cong cc \cong v$
\end{tabular}
\end{center}
The monodromy charges are given by $c_{v,S} \, c_{S,v} = q_S \, \id_{S \otimes V}$, where $q_S = \theta_{v\otimes S} / (\theta_{v}\theta_S)$. Explicitly,
\begin{center}
\renewcommand{\arraystretch}{1.5}
\begin{tabular}{l|ccccc}
$S$ & 
$1$ &
$v$ &
$s$ &
$c$ &
$\sigma$
\\
\hline
$q_S$ &
$1$ &
$1$ &
$-1$ &
$-1$ &
$-1$
\end{tabular}
\quad .
\end{center}
As in the Ising case, we consider $\hat{\mathbf{S}}_n := \mathbf{S}_n \boxtimes \mathbf{SVect}^{fd}$ and therein the algebra $A = 1 \oplus \Pi v$. Again, the $NS$-sector is build from representations with $q_S=1$ and the $R$-sector from representations with $q_S=-1$.
For the $NS$-state space one finds:
\begin{align}
&\qquad n \text{ even}
&&\qquad
n \text{ odd} \nonumber\\
\hat{\mathcal{H}}^{NS} 
~:~ \qquad &
\renewcommand{\arraystretch}{1.3}
\begin{tabular}{c|cccc}
& $\one$ & $v$  & $s$  & $c$  \\
\hline
$\one$ & $\mathbb{C}^{1|0}$ & $\mathbb{C}^{0|1}$ & $0$ & $0$ \\
$v$ & $\mathbb{C}^{0|1}$ & $\mathbb{C}^{1|0}$  & $0$ & $0$ \\
$s$  & $0$ & $0$ & $0$ & $0$ \\
$c$  & $0$ & $0$ & $0$ & $0$ 
\end{tabular}
&&
\renewcommand{\arraystretch}{1.3}
\begin{tabular}{c|ccc}
& $\one$ & $v$  & $\sigma$ \\
\hline
$\one$ & $\mathbb{C}^{1|0}$ & $\mathbb{C}^{0|1}$ & $0$ \\
$v$ & $\mathbb{C}^{0|1}$ & $\mathbb{C}^{1|0}$  & $0$ \\
$s$  & $0$ & $0$ & $0$
\end{tabular}
\quad . 
\end{align}
For the $R$-state space one has to distinguish three cases:
\begin{align}
& n \equiv 0 \!\!\mod 4 
&&
n \equiv 2 \!\!\mod 4 \nonumber\\
\hat{\mathcal{H}}^{R}
~:~ \qquad &
\renewcommand{\arraystretch}{1.3}
\begin{tabular}{c|cccc}
& $\one$ & $v$  & $s$  & $c$  \\
\hline
$\one$  & $0$ & $0$ & $0$ & $0$  \\
$v$  & $0$ & $0$ & $0$ & $0$ \\
$s$ & $0$ & $0$ & $\mathbb{C}^{1|0}$ & $\mathbb{C}^{0|1}$ \\
$c$ & $0$ & $0$ & $\mathbb{C}^{0|1}$ & $\mathbb{C}^{1|0}$ 
\end{tabular}
&&
\renewcommand{\arraystretch}{1.3}
\begin{tabular}{c|cccc}
& $\one$ & $v$  & $s$  & $c$  \\
\hline
$\one$  & $0$ & $0$ & $0$ & $0$  \\
$v$  & $0$ & $0$ & $0$ & $0$ \\
$s$ & $0$ & $0$ & $\mathbb{C}^{0|1}$ & $\mathbb{C}^{1|0}$ \\
$c$ & $0$ & $0$ & $\mathbb{C}^{1|0}$ & $\mathbb{C}^{0|1}$ 
\end{tabular}
\\[1em]
\nonumber
&\qquad n \text{ odd}\\
\nonumber
& \begin{tabular}{c|ccc}
& $\one$ & $v$  & $\sigma$ \\
\hline
$\one$ & $0$ & $0$ & $0$ \\
$v$ & $0$ & $0$  & $0$ \\
$s$  & $0$ & $0$ & $\mathbb{C}^{1|1}$
\end{tabular}
\quad .
\end{align}


\newcommand\arxiv[2]      {\href{http://arXiv.org/abs/#1}{#2}}
\newcommand\doi[2]        {\href{http://dx.doi.org/#1}{#2}}
\newcommand\httpurl[2]    {\href{http://#1}{#2}}

\providecommand{\href}[2]{#2}\begingroup\raggedright


\begin{thebibliography}{AAA}

\bibitem[BCP]{Brunner:2013ota}
  I.~Brunner, N.~Carqueville, D.~Plencner,
  {\it Orbifolds and topological defects},
  \doi{10.1007/s00220-014-2056-3}{Commun.\ Math.\ Phys.\ {\bf 332} (2014) 669--712}
  \arxiv{1307.3141}{[1307.3141 [hep-th]]}.

\bibitem[BK]{baki}
B.~Bakalov, A.A.~Kirillov,  
{\it Lectures on Tensor Categories and Modular Functors},
American Mathematical Society, Providence, 2001. 

\bibitem[BP]{bachas1993topological}
  C.~Bachas, P.M.S.~Petropoulos,
  {\it Topological models on the lattice and a remark on string theory cloning},
  \doi{10.1007/BF02097063}{Commun.\ Math.\ Phys.\  {\bf 152} (1993)  191--202}
  \arxiv{hep-th/9205031}{[hep-th/9205031]}.

\bibitem[BT]{barrett2013spin}
J.~Barrett, S.~Tavares,
{\it Two-dimensional state sum models and spin structures},
\doi{10.1007/s00220-014-2246-z}{Commun.\ Math.\ Phys.\ {\bf 336} (2015) 63-100}
\arxiv{1312.7561}{[1312.7561 [math.QA]]}.


Communications in Mathematical Physics
May 2015, Volume 336, Issue 1, pp 63-100 

\bibitem[Bu]{budney2013combinatorial}
R.~Budney, 
{\it Combinatorial spin structures on triangulated manifolds},
\arxiv{1306.4841}{1306.4841 [math.GT]}.

\bibitem[CRCR]{Carqueville:2013mpa}
N.~Carqueville, A.~R.~Camacho, I.~Runkel,
{\it Orbifold equivalent potentials},
\arxiv{1311.3354}{1311.3354 [math.QA]}.
  
\bibitem[CRe]{cimasoni2007dimers}
D.~Cimasoni, N.~Reshetikhin, 
{\it Dimers on surface graphs and spin structures. I}, 
\doi{10.1007/s00220-007-0302-7}{Comm.\ Math.\ Phys.\ {\bf 275} (2007) 187--208}
\arxiv{math-ph/0608070}{[math-ph/0608070]}.

\bibitem[CRu]{carqueville2012orbifold}
N.~Carqueville, I.~Runkel,
{\it Orbifold completion of defect bicategories},
\arxiv{1210.6363}{1210.6363 [math.QA]}.

\bibitem[DKR]{davydov2011field}
  A.~Davydov, L.~Kong, I.~Runkel,
  {\it Field theories with defects and the centre functor},
  in Hisham Sati, Urs Schreiber (eds.): {\it Mathematical Foundations of Quantum Field theory and Perturbative String Theory}, Proc.\ Symp.\ Pure Math., AMS, {\bf 83} (2011)  71--128
  \arxiv{1107.0495}{[1107.0495 [math.QA]]}.

\bibitem[FFFS]{Felder:1999mq}
  G.~Felder, J.~Fr\"ohlich, J.~Fuchs and C.~Schweigert,
  {\it Correlation functions and boundary conditions in RCFT and three-dimensional topology},
  \doi{10.1023/A:1014903315415}{Compos.\ Math.\  {\bf 131} (2002) 189--238}
  \arxiv{hep-th/9912239}{[hep-th/9912239]}.

\bibitem[FFS]{fjelstad2012rcft}
J.~Fjelstad, J.~Fuchs, C.~Stigner,
{\it RCFT with defects: Factorization and fundamental world sheets},
\doi{10.1016/j.nuclphysb.2012.05.011}{Nucl.\ Phys.\ B {\bf 863} (2012) 213--259}
\arxiv{1202.3929}{[1202.3929 [hep-th]]}.

\bibitem[FHK]{fukuma1994lattice}
  M.~Fukuma, S.~Hosono, H.~Kawai,
  {\it Lattice topological field theory in two-dimensions},
  \doi{10.1007/BF02099416}{Commun.\ Math.\ Phys.\  {\bf 161} (1994)  157--176}
  \arxiv{hep-th/9212154}{[hep-th/9212154]}.

\bibitem[FjFRS]{Fjelstad:2006aw} 
J.~Fjelstad, J.~Fuchs, I.~Runkel and C.~Schweigert,
{\it Uniqueness of open/closed rational CFT with given algebra of open states},
\httpurl{projecteuclid.org/euclid.atmp/1221834534}{Adv.\ Theor.\ Math.\ Phys.\ {\bf 12} (2008) 1283--1375}
\arxiv{hep-th/0612306} {[hep-th/0612306]}.

\bibitem[FrFRS]{frohlich2009defect}
J.~Fr\"ohlich, J.~Fuchs, I.~Runkel, C.~Schweigert,
{\it Defect lines, dualities, and generalised orbifolds},
conference proceedings `XVI International Congress on Mathematical Physics' (Prague, August 2009)
\arxiv{0909.5013}{[0909.5013 [math-ph]]}.

\bibitem[FRS1]{tft1}
J.~Fuchs, I.~Runkel, C.~Schweigert,
{\it TFT construction of RCFT correlators. I: Partition functions},
\doi{10.1016/S0550-3213(02)00744-7}{Nucl.\ Phys.\ B {\bf 646} (2002) 353--497} 
\arxiv{hep-th/0204148}{[hep-th/0204148]}.

\bibitem[FRS2]{tft2}
J.~Fuchs, I.~Runkel, C.~Schweigert,
{\it TFT construction of RCFT correlators. II: Unoriented world sheets},
\doi{10.1016/j.nuclphysb.2003.11.026}{Nucl.\ Phys.\  B {\bf 678} (2004) 511--637}
\arxiv{hep-th/0306164}{[hep-th/0306164]}.

\bibitem[FS]{fuchsstigner2009}
J.~Fuchs, C.~Stigner, 
{\it On Frobenius algebras in rigid monoidal categories}
Arabian J.\ for Science and Engineering {\bf 33-2C} (2008) 175--191
\arxiv{0901.4886}{[0901.4886 [math.CT]]}.

\bibitem[HK]{Huang:2005gz}
Y.-Z.~Huang and L.~Kong, 
{\it Full field algebras}, 
\doi{10.1007/s00220-007-0224-4}{Commun.\ Math.\ Phys.\ {\bf 272} (2007) 345--396}
\arxiv{math/0511328}{[math.QA/0511328]}.

\bibitem[Hu]{Huang2005}
Y.-Z.~Huang, 
{\it Rigidity and modularity of vertex tensor categories}, 
\doi{10.1142/S0219199708003083}{Comm.\ Contemp.\ Math.\ {\bf 10} (2008) 871--911}
\arxiv{math/0502533}{[math.QA/0502533]}.

\bibitem[Ko]{Kong2006c}
L.~Kong,
{\it Cardy condition for open-closed field algebras}, 
\doi{10.1007/s00220-008-0555-9}{Commun.\ Math.\ Phys.\ {\bf 283} (2008) 25--92},
\arxiv{math/0612255}{[math.QA/0612255]}.

\bibitem[KR1]{Kong:2007yv}
  L.~Kong, I.~Runkel,
  {\it Morita classes of algebras in modular tensor categories},
  \doi{10.1016/j.aim.2008.07.004}{Adv.\ Math.\ {\bf 219} (2008) 1548--1576}
  \arxiv{0708.1897}{[0708.1897 [math.CT]]}.

\bibitem[KR2]{Kong:2008ci}
  L.~Kong and I.~Runkel,
  {\it Cardy algebras and sewing constraints, I}, 
  \doi{10.1007/s00220-009-0901-6}{Comm.\ Math.\ Phys.\ {\bf 292} (2009) 871--912}
  \arxiv{0807.3356}{[0807.3356 [math.QA]]}.

\bibitem[Ku]{kuperberg1998exploration}
G.~Kuperberg,
{\it An exploration of the permanent-determinant method},
\httpurl{www.combinatorics.org/ojs/index.php/eljc/article/view/v5i1r46}{The Electronic Journal of Combinatorics {\bf 5} (1998)  R46}
\arxiv{math/9810091}{[math.CO/9810091]}.

\bibitem[MSb]{Moore:1988qv}
  G.W.~Moore, N.~Seiberg,
  {\it Classical and Quantum Conformal Field Theory},
  \doi{10.1007/BF01238857}{Comm.\ Math.\ Phys.\  {\bf 123} (1989) 177--254}.

\bibitem[No]{Novak-PhD}
S.~Novak, {\it Lattice topological field theories in two dimensions}, PhD thesis, Hamburg, 2014.

\bibitem[NR]{Novak:2014oca} 
S.~Novak, I.~Runkel,
{\it State sum construction of two-dimensional topological quantum field theories on spin surfaces},
\doi{10.1142/S0218216515500285}{J.\ Knot Theory Ramifications {\bf 24} (2015)}
\arxiv{1402.2839}{[1402.2839 [math.QA]]}.

\end{thebibliography}
\end{document}